\title{\large Entanglement monogamy via multivariate trace inequalities}
\author{Mario Berta$^{1}$}
\address{$^1$ Institute for Quantum Information, RWTH Aachen University, Aachen, Germany}
\author{Marco Tomamichel$^{2,3}$}
\address{$^2$ Department of Electrical and Computer Engineering, National University of Singapore, Singapore}
\address{$^3$ Centre for Quantum Technologies, National University of Singapore, Singapore}
\DeclareMathOperator{\tr}{tr}
\DeclareMathOperator{\cone}{cone}
\newcommand{\cM}{\mathcal{M}}
\renewcommand{\P}{\mathbb{P}}
\newcommand{\nsf}[1]{\textnormal{\textsf{#1}}}
\newtheorem{lemma}{Lemma}
\newtheorem{theorem}[lemma]{Theorem}
\newtheorem{proposition}[lemma]{Proposition}
\DeclareMathOperator*{\argmin}{arg\,min}
\begin{document}

\begin{abstract}
    Entropy is a fundamental concept in quantum information theory that allows to quantify entanglement and investigate its properties, for example its monogamy over multipartite systems. Here, we derive variational formulas for relative entropies based on restricted measurements of multipartite quantum systems. By combining these with multivariate matrix trace inequalities, we recover and sometimes strengthen various existing entanglement monogamy inequalities. In particular, we give direct, matrix-analysis-based proofs for the faithfulness of squashed entanglement by relating it to the relative entropy of entanglement measured with one-way local operations and classical communication, as well as for the faithfulness of conditional entanglement of mutual information by relating it to the separably measured relative entropy of entanglement. We discuss variations of these results using the relative entropy to states with positive partial transpose, and multipartite setups. Our results simplify and generalize previous derivations in the literature that employed operational arguments about the asymptotic achievability of information-theoretic tasks.
\end{abstract}

\maketitle


\section{Introduction}

For tripartite discrete probability distributions $P_{ABC}$, the mutual information of $A$ and $B$ conditioned on $C$
can be written as the relative entropy distance to either the closest Markov chain $A-C-B$ or to the closest state that can be recovered from the marginal $P_{AC}$ by acting only on $C$. More precisely, we can rewrite the mutual information into the following variational forms (see, e.g.~\cite{linden08})
\begin{align}
I(A:B|C)_P
&= H(AC)_P+H(BC)_P-H(C)_P-H(ABC)_P \\
&= \min_{Q_{B|C}} D(P_{ABC}\|Q_{B|C} P_{AC})\label{eq:classical-recovery} \\
&= \min_{Q_{A-C-B}}D(P_{ABC}\|Q_{A-C-B}) \label{eq:classical-markov} \,,
\end{align}
where $D(P\|Q)=\sum_x P(x) (\log P(x) - \log Q(x))$ is the Kullback-Leibler divergence (or relative entropy) and $H(A)_P = -\sum_x P_A(x) \log P_A(x)$ is the Shannon entropy. Here, in the expression~\eqref{eq:classical-recovery}, the joint distribution $Q_{B|C} P_{AC}$ can be interpreted as the output of a recovery channel $Q_{B|C}$ with access to $C$ (but not $A$); the expression is minimized when $Q_{B|C} = P_{B|C}$. The minimization in the expression~\eqref{eq:classical-markov} is over all distributions with a Markov chain structure $A-C-B$; the expression is minimized when $Q_{A-C-B} = P_{B|C} P_{AC}$. As a consequence, using the non-negativity of the Kullback-Leibler divergence, one finds $I(A:B|C)_P\geq0$, which is equivalent to strong sub-additivity (SSA) of entropy.

More generally, for tripartite quantum states $\rho_{ABC}$, one defines the quantum conditional mutual information as
\begin{align}
I(A:B|C)_\rho=H(AC)_\rho+H(BC)_\rho-H(C)_\rho-H(ABC)_\rho
\end{align}
with the von Neumann entropy $H(A)_\rho=-\tr\left[\rho_A\log\rho_A\right]$. A highly non-trivial argument by Lieb and Ruskai from the seventies \cite{lieb73,lieb73a} then shows that due to entanglement monogamy the SSA inequality $I(A:B|C)_\rho\geq0$ still holds in the quantum case.

In recent years, the quantum information community has seen a lot of progress on understanding potential refinements of SSA for quantum states, with the goal of mimicking the classical version of Eqs.~\eqref{eq:classical-markov} and \eqref{eq:classical-recovery} for quantum states and quantum channels. Firstly, one can simply rewrite \cite{bertawilde14}
\begin{align}
I(A:B|C)_\rho&=\min_{\sigma_{AC},\omega_{BC}}\max_{\tau_C}D(\rho_{ABC}\|\exp(\log \sigma_{AC}+\log \omega_{BC}-\log\tau_C))\\
&=D(\rho_{ABC}\|\exp(\log\rho_{AC}+\log\rho_{BC}-\log\rho_C))\label{eq:exp-state}
\end{align}
in terms of the Umegaki's quantum relative entropy $D(\rho\|\sigma)=\tr[\rho(\log\rho-\log\sigma)]$, but due to non-commutativity any interpretation in terms of quantum Markov chains remains largely unclear \cite{Carlen14}.

Secondly, in general, we have for the alternative local recovery map form
\begin{align}
I(A:B|C)_\rho\ngeq\min_{\mathcal{R}_{C\to BC}}D(\rho_{ABC}\|(\mathcal{I}_A\otimes\mathcal{R}_{C\to BC})(\rho_{AC}))\,,\label{eq:quantum-recovery}
\end{align}
where $\mathcal{R}_{C\to BC}$ denotes quantum channels \cite{Fawzi18}. However, a series of results, first by Fawzi \& Renner \cite{fawzirenner14} and then in \cite{brandao14,berta15,wilde15,sutter16a,Junge18,sutter16,Berta21}, revealed that weaker forms of Eq.~\eqref{eq:quantum-recovery} still hold, e.g.,
\begin{align}\label{eq:cqmi-donald}
I(A:B|C)_\rho\geq\min_{\mathcal{R}_{C\to BC}} D_{\nsf{ALL}}(\rho_{ABC}\|(\mathcal{I}_A\otimes\mathcal{R}_{C\to BC})(\rho_{AC}))
\end{align}
in terms of Donald's measured relative entropy \cite{donald86},
\begin{align}
D_{\nsf{ALL}}(\rho\|\sigma)=\max_{\mathcal{M}}D(\mathcal{M}(\rho)\|\mathcal{M}(\sigma))\,,
\end{align}
with the maximum over positive operator-valued measure (POVM) measurement channels $\mathcal{M}$.\footnote{Note that Donald's original definition only considered projective measurements, but this is in fact sufficient~\cite{berta17}. Moreover, $D_{\nsf{ALL}}(\rho\|\sigma)\leq D(\rho\|\sigma)$, strictly if and only if $\rho$ and $\sigma$ do not commute \cite{berta17}.} A regularized version in terms of the quantum relative entropy distance then also follows from the asymptotic achievability of the measured relative entropy \cite{hiai91,Berta21}. Compared to the bound in Eq.~\eqref{eq:exp-state}, the bound in Eq.~\eqref{eq:cqmi-donald} lifts the classical Markov picture of approximately recovering the state with a local recovery map $P_{B|C}$ applied to the marginal $P_{AC}$ to the quantum setting via $(\mathcal{I}_A\otimes\mathcal{R}_{C\to BC})(\rho_{AC})$ (see \cite{Sutter18} and references therein).

Thirdly, a suitable generalization of an exact quantum Markov chain was established via the SSA equality condition \cite{hayden04}
\begin{align}\label{eq:QMC}
I(A:B|C)_\sigma=0\quad \iff \quad\sigma_{ABC}=\bigoplus_kp_k\sigma^k_{AC_k^L}\otimes\sigma^k_{C_k^RB}\,,
\end{align}
with respect to some induced direct sum decomposition $C=\bigoplus_kC_k^L\otimes C_k^R$. Unfortunately, lower bounding the quantum conditional mutual information in terms of the distance to exact quantum Markov chains neither works for relative entropy distance \cite{linden08}, nor for regularized relative entropy distances, nor for measured relative entropy distance~\cite{Christandl12-2}.

Now, in the context of the quantum conditional mutual information based entanglement measure squashed entanglement \cite{christandl04}, it is of importance that for an exact quantum Markov state, the reduced state $\sigma_{AB}=\sum_kp_k\sigma^k_A\otimes\sigma^k_B$ is separable\,---\,as can be easily checked using Eq.~\eqref{eq:QMC}. Then, even though the quantum relative entropy is monotone under the partial trace over $C$, still, in general
\begin{align}
I(A:B|C)_\rho\ngeq\min_{\sigma_{AB} \in \nsf{Sep}(A:B)}D_{\nsf{ALL}}(\rho_{AB}\|\sigma_{AB})
\end{align}
and the same for regularized versions thereof \cite{Christandl12-2}. Only relaxing even further and employing locally measured quantum distance measures \cite{matthews09} and in particular locally measured quantum relative entropies \cite{piani09}, one finds that \cite{brandao11,Li14,liwinter14}
\begin{align}\label{eq:cqmi-sep}
I(A:B|C)_\rho\geq\min_{\sigma_{AB} \in \nsf{Sep}(A:B)}D_{\nsf{LOCC}_1(A \to B)}(\rho_{AB}\|\sigma_{AB})\,,
\end{align}
where $\nsf{LOCC}_1(A \to B)$ denotes measurements that use a single round of communication: They first measure out $A$ and then perform a conditional measurement on the system $B$ depending on the measurement outcome on $A$. Even though such measurements have a reduced distinguishing power \cite{matthews09,Lami18,Lancien16}, crucially, they are still tomographically complete, and thus the right-hand side is zero if and only if $\rho_{AB}$ is separable.

Going back to the bigger picture, the two types of refined SSA bounds as in Eqs.~\eqref{eq:cqmi-donald} and~\eqref{eq:cqmi-sep} seem in general incompatible, but are both entanglement monogamy inequalities with widespread applications in quantum information science (see aforementioned reference and references therein). Moreover, for the former type, a unified matrix analysis based proof approach has emerged. Namely, extending Lieb and Ruskai's original argument for the proof of SSA~\cite{lieb73a,lieb73}, the first step is to employ the multivariate Golden-Thompson inequalities from \cite{sutter16,hiai17,Sutter17}: For any $n\in \mathbb{N}$, Hermitian matrices $\{H_k\}_{k=1}^n$, and any $p \geq 1$, one has
\begin{align}\label{eq:GT-multi}
\log \left\|\exp \left( \sum_{k=1}^n H_k \right) \right\|_p \leq \int_{-\infty}^{\infty} \textnormal{d}\beta_0(t)\log\left\|\prod_{k=1}^n  \exp\left( (1+i t) H_k \right)\right\|_p\,,
\end{align}
where $\|\cdot\|_p$ denotes the Schatten $p$-norm and $\beta_0(t)=\frac{\pi}{2}\left(\cosh(\pi t)+1\right)^{-1}$ is a fixed probability density on~$\mathbb{R}$. The second step is then to combine this with dual variational representations of quantum entropy in terms of matrix exponentials \cite{berta17,berta23}.

In contrast, the previously known proofs of the refined SSA bound from Eq.~\eqref{eq:cqmi-sep} are based on involved operational arguments about the asymptotic achievability of information-theoretic tasks \cite{brandao11,Li14}, including the asymptotic achievability of quantum state redistribution \cite{devetak08,devetak09}, partial state merging \cite{Yang08}, and Stein’s lemma in hypothesis testing \cite{Li14,brandao13}.\footnote{The conceptually different work \cite{liwinter14} gives extendability refinements of SSA based on iterating Markov refinements of SSA and then combining these bounds with finite quantum de Finetti theorems with quantum side information \cite{Christandl07} to make the connection with separability.}

Here, we seek after a unified matrix analysis based proof for Eq.~\eqref{eq:cqmi-sep} and other entanglement monogamy inequalities of similar type. For this, we derive novel variational formulas for quantum relative entropies based on restricted measurements, which then, indeed, enable us to employ a similar, matrix analysis approach in terms of multivariate Golden-Thompson inequalities. Namely, the core step in our derivations is to employ the multivariate Eq.~\eqref{eq:GT-multi} for $n=3,4,5,6$ and $p=1,2$. Importantly, this allows us to fully bypass the previously employed operational arguments from quantum information theory. Consequently, we give concise proofs that lead to tight SSA separability refinements and other new entanglement monogamy inequalities, including positive partial transpose bounds and multipartite extensions. On the way we further derive various strengthened recoverability bounds, such as for the conditional entanglement of mutual information and the multipartite squashed entanglement. In turn, the explicit form of our novel entanglement monogamy inequalities also feature recoverability maps, revealing a deeper connection between SSA separability refinements and SSA recoverability bounds.

The rest of the manuscript is structured as follows. In Section \ref{sec:measured}, we derive new variational formulas for locally measured quantum relative entropies. In Section \ref{sec:inequalities} we present the derivations of our entanglement monogamy inequalities around the SSA separability refinements from Eq.~\eqref{eq:cqmi-sep}. This is in terms of squashed entanglement (Section \ref{sec:squashed}), relative entropy of entanglement (Sections \ref{sec:relative} and \ref{sec:piani}), conditional entanglement of mutual information (Section \ref{sec:cemi}), as well as for multipartite extensions thereof (Section \ref{sec:multipartite}). In Section \ref{sec:outlook} we then conclude with some outlook on open questions.


\section{On measured divergences and entanglement measures}
\label{sec:measured}

We start by introducing some notational conventions used in this work. Throughout we assume that Hilbert spaces, denoted $A$, $B$, $C$, etc., are finite-dimensional and quantum states are positive semi-definite operators with unit trace acting on such spaces, or tensor product spaces of them. We use subscripts to indicate what spaces an operator acts on and by convention when we introduce an operator $X_{AB}$ acting on $A \otimes B$ we implicitly also introduce its marginals $X_A$ and $X_B$, defined via the respective partial traces of $X_{AB}$ over $B$ and $A$, respectively. We often omit identity operators, e.g., $X_A Y_{AB}$ should be understood as the matrix product $(X_A \otimes 1_B) Y_{AB}$. Functions are applied on the spectrum of an operator coinciding with the domain of the function, which means that $X_A^{-1}$ is the generalized inverse and $\log(X_A)$ is always bounded. At various points we employ indices $x$, $y$ or $z$ that are meant to be taken from discrete index sets $\mathcal{X}$, $\mathcal{Y}$ and $\mathcal{Z}$ that are understood to be defined implicitly. We use $\geq$ and $>$ to denote the L\"owner order on operators, e.g., an operator $L$ is positive semi-definite if and only if $L \geq 0$, and a positive semi-definite operator $L$ has full support if and only if $L > 0$.


\subsection{Definitions and some properties}

Consider a quantum state $\rho > 0$ and an operator $\sigma > 0$. 
We recall the definition and variational formula for Umegaki relative entropy between $\rho$ and $\sigma$ as
\begin{align}
    D(\rho\|\sigma) &:= \tr \big( \rho (\log \rho - \log \sigma) \big) \\
    &\ =\sup_{\omega > 0}\ \tr (\rho \log \omega) - \log \tr (\exp (\log \sigma + \log \omega)\,,
\end{align}
Here the optimization is over all operators $\omega$ with full support, a set that is clearly not closed. Nonetheless, the supremum is taken as $\omega = \exp(\log \rho - \log \sigma)$. We can extend the definition to general states (without full support) by taking an appropriate continuous extension, namely\footnote{The argument on the right-hand side is monotone in $\epsilon$ due to the joint convexity of the relative entropy, and the supremum thus constitutes a limit (in case of convergence).}
\begin{align}
    D(\rho\|\sigma) := \sup_{\epsilon \in (0, 1] } D\big((1-\epsilon)\rho + \epsilon \pi \big\| (1-\epsilon) \sigma + \epsilon \pi \big) \label{eq:continuous-extension} ,
\end{align}
where $\pi$ is the completely mixed state.
We note that the above quantity is finite if and only if $\rho \ll \sigma$, i.e., if the support of $\rho$ is contained in the support of $\sigma$. In the following we will always assume full support in our definitions and use Eq.~\eqref{eq:continuous-extension} to extend to the general case where needed.

Based on this we arrive at the definition of the relative entropy of entanglement for a bipartite quantum state $\rho_{AB}$ and the bipartition $A : B$, which is given by
\begin{align}
    E(A:B)_{\rho} := \min_{\sigma_{AB} \in \nsf{Sep}(A:B)} \ D(\rho_{AB} \| \sigma_{AB})\,,
\end{align}
where $\nsf{Sep}(A:B)$ denotes the set of separable states on the bipartition $A:B$, i.e.\ quantum states that decompose as $\sigma_{AB} = \sum_x Y_A^x \otimes Y_B^x$ for positive semi-definite operators $\{ Y_A^x \}_x$ and $\{ Y_B^x \}_x$. Here, the minimum is always taken since $D(\cdot\|\cdot)$ is jointly convex and continuous in $\sigma_{AB}$ as long as we stay away from the (uninteresting, as we are seeking a minimum) boundary where $\rho_{AB} \not\ll \sigma_{AB}$.

We will also use various notions of measured relative entropy. In the following $\cM$ is a set of POVMs, and its elements $M = \{ M^z \}_z$ are sets of positive semi-definite operators satisfying $\sum_z M^z = 1$.

For example, \nsf{ALL} denotes the set of all POVMs. If the states are bipartite on $A$ and $B$, we consider various specialized sets. On the one hand, the sets $\nsf{SEP}(A:B)$ and $\nsf{PPT}(A:B)$ contain POVMs whose elements are separable ($\nsf{SEP}$) or have positive partial transpose ($\nsf{PPT}$), respectively. On the other hand, elements of $\nsf{LOCC}(A:B)$ are operationally defined as a POVMs that can be implemented by local operations and finite classical communication ($\nsf{LOCC}$). Elements of $\nsf{LOCC}_1(A \to B)$ are POVMs that only use a single round of communication: they first measure out $A$ and then perform a conditional measurement on the system $B$ depending on the measurement outcome on $A$. Without loss of generality, such measurements can be written in the form
\begin{align}
    \left\{ M_{AB}^z \right\}_z \quad \textnormal{with} \quad M_{AB}^z = \sum_x Q_A^x \otimes Q_B^{z|x}\,, \label{eq:locc1meas}
\end{align}
where $Q_A^x \geq 0$ and $Q_B^{z|x} \geq 0$ with $\sum_x Q_A^x = 1_A$ and $\sum_z Q_B^{z|x} = 1_B$. Here $x$ labels the data sent from Alice to Bob whereas $z$ is the final output after Bob's measurement. Finally, the set $\nsf{LO}(A:B)$ allows only local measurements without communication, which are of the form $M_{AB}^{z} = Q_A^x \otimes Q_B^y$, where $z = (x,y)$ collects the local outputs.

With this in hand, let us define a measured relative entropy and an entanglement measure for each $\cM$ described above:
\begin{align}
    &D_{\cM}(\rho \| \sigma) := \sup_{M \in \cM} D( \P_{\rho, M} \| \P_{\sigma, M} )\\
    &E_{\cM}(A:B)_{\rho} := \min_{\sigma_{AB} \in \nsf{Sep}(A:B)} D_{\cM}(\rho_{AB} \| \sigma_{AB})\,.
    \label{eq:meas}
\end{align}
Here, $\P_{\rho,M}(z) = \tr(\rho M_z)$ is the probability mass function emanating from Born's rule. We note that the minimum is achieved as $D_{\cM}$, a supremum of jointly convex functions, is itself jointly convex and thus, as argued above, the minimum is taken.

From the inclusions $\nsf{ALL} \supseteq \nsf{PPT}(A:B) \supseteq \nsf{SEP}(A:B) \supseteq \nsf{LOCC}(A:B) \supseteq \nsf{LOCC}_1(A \to B) \supseteq \nsf{LO}(A:B)$ we trivially get
\begin{align}
    E(A:B)_{\rho} &\geq E_{\nsf{ALL}}(A:B)_{\rho} \geq E_{\nsf{PPT}}(A:B)_{\rho} \geq E_{\nsf{SEP}}(A:B)_{\rho} \notag\\
    &\geq E_{\nsf{LOCC}}(A:B)_{\rho} \geq E_{\nsf{LOCC}_1}(A \to B)_{\rho} \geq E_{\nsf{LO}}(A:B)_{\rho} \,,
\end{align}
with the shorthand $E_{\nsf{LOCC}_1}(A \to B)_{\rho}:=E_{\nsf{LOCC}_1(A\to B)}(A:B)_{\rho}$. We further introduce PPT variants defined as
\begin{align}
    &P_{\cM}(A:B)_{\rho} := \min_{\sigma_{AB} \in \nsf{ppt}(A:B)} D_{\cM}(\rho_{AB} \| \sigma_{AB})\,,
\end{align}
where $\nsf{ppt}(A:B)$ denotes the set of states that have positive partial transpose withe respect to the bipartition $A:B$, which we study in particular in combination with measurements $\cM=\nsf{PPT}$. We note that all of the above quantities are faithful since $\nsf{LO}(A:B)$ is already tomographically complete. Further, there are minimax statements available that interchange the supremum over the set of measurements with the infimum over the set of states \cite[Lemma 13]{brandao13}.

Above quantities are in general not additive on tensor product states and one can then write down the regularization
\begin{align}
    E_{\cM}^{\infty}(A:B)_{\rho} :=\lim_{n\to\infty} \frac{1}{n}E_{\cM}(A:B)_{\rho^{\otimes n}}\,,
\end{align}
which are well-defined, with operational interpretations in terms of optimal asymptotic quantum Stein's error exponents for the corresponding restricted class of measurements \cite[Theorem 16]{brandao13}. In general, it is unclear how to make quantitative statements about the regularization, but for the class $\nsf{ALL}$ we have the following~\cite[Lemma 2.4]{Berta21}.

\begin{lemma}\label{lem:asymptotic-achiev}
For any n-partite quantum state $\rho_{A^n}$ and $\sigma_{A^n} \geq 0$ with $\rho_{A^n} \ll \sigma_{A^n}$, we have
\begin{align}
    D\left(\rho_{A^n}\middle\|\sigma_{A^n} \right)-\log|\mathrm{spec}(\sigma_{A^n})|\leq D_{\nsf{ALL}}\left(\rho_{A^n}\middle\|\sigma_{A^n}\right)\leq D\left(\rho_{A^n}\middle\|\sigma_{A^n}\right)\,,
\end{align}
where $|\mathrm{spec}(\sigma_{A^n})|\leq\mathrm{poly}(n)$ when $\sigma_{A^n}$ is invariant under permutations of the $n$ systems.
\end{lemma}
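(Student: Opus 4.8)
The strategy is to treat the two inequalities separately and then read off the $\mathrm{poly}(n)$ bound from Schur--Weyl duality. The upper bound $D_{\nsf{ALL}}(\rho_{A^n}\|\sigma_{A^n})\le D(\rho_{A^n}\|\sigma_{A^n})$ is the data-processing inequality: for any POVM $M$ the associated measurement channel $\mathcal{M}(\cdot)=\sum_z\tr(\cdot\,M^z)\,|z\rangle\langle z|$ is CPTP, so $D(\P_{\rho,M}\|\P_{\sigma,M})=D(\mathcal{M}(\rho)\|\mathcal{M}(\sigma))\le D(\rho\|\sigma)$, and taking the supremum over $M$ gives the claim (this is the bound already recorded in the introduction). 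For the lower bound I work on $\mathrm{supp}(\sigma_{A^n})$, where $\sigma:=\sigma_{A^n}>0$; this is harmless since $\rho:=\rho_{A^n}\ll\sigma$. Let $\{P_\ell\}_{\ell=1}^{v}$ be the spectral projections of $\sigma$ onto its distinct nonzero eigenvalues $\lambda_\ell$, so $v\le|\mathrm{spec}(\sigma)|$, and let $\mathcal{P}(\cdot)=\sum_\ell P_\ell(\cdot)P_\ell$ be the corresponding pinching map.

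Two elementary facts about $\mathcal{P}$ do the work. First, $\mathcal{P}$ is self-adjoint in Hilbert--Schmidt inner product and fixes $\log\sigma=\sum_\ell(\log\lambda_\ell)P_\ell$, hence $\tr[\mathcal{P}(\rho)\log\sigma]=\tr[\rho\log\sigma]$; combined with $\tr\mathcal{P}(\rho)=1$ this yields, writing $H$ for the von Neumann entropy,
\[
D(\rho\|\sigma)-D\big(\mathcal{P}(\rho)\big\|\sigma\big)=H\big(\mathcal{P}(\rho)\big)-H(\rho)\,.
\]
Second, with $\omega:=e^{2\pi\mathrm{i}/v}$ and the unitaries $U_k:=\sum_\ell\omega^{k\ell}P_\ell$ one has $\mathcal{P}(\rho)=\tfrac1v\sum_{k=0}^{v-1}U_k\rho U_k^\dagger$, so the standard bound on the concavity defect of $H$ (the entropy of a uniform mixture of $v$ unitarily equivalent states exceeds $H(\rho)$ by at most $\log v$) gives $H(\mathcal{P}(\rho))\le H(\rho)+\log v$. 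Together these give $D(\mathcal{P}(\rho)\|\sigma)\ge D(\rho\|\sigma)-\log v\ge D(\rho\|\sigma)-\log|\mathrm{spec}(\sigma)|$.

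It remains to exhibit a measurement certifying $D_{\nsf{ALL}}(\rho\|\sigma)\ge D(\mathcal{P}(\rho)\|\sigma)$. Since $\mathcal{P}(\rho)$ commutes with $\sigma$, pick an orthonormal basis $\{|j\rangle\}$ of $\mathrm{supp}(\sigma)$ that diagonalises $\mathcal{P}(\rho)$ and is subordinate to the $P_\ell$ (diagonalise the block $P_\ell\rho P_\ell$ inside each $\mathrm{range}(P_\ell)$); this basis diagonalises $\sigma$ as well. Let $M$ be the projective measurement $\{|j\rangle\langle j|\}_j$, completed by $P_{\ker\sigma}$ into a full POVM. Because $|j\rangle=P_{\ell(j)}|j\rangle$ sits in a single eigenspace of $\sigma$, $\langle j|\rho|j\rangle=\langle j|P_{\ell(j)}\rho P_{\ell(j)}|j\rangle=\langle j|\mathcal{P}(\rho)|j\rangle$, so $\P_{\rho,M}$ is the eigenvalue distribution of $\mathcal{P}(\rho)$ and $\P_{\sigma,M}$ that of $\sigma$ in the \emph{same} eigenbasis; hence $D(\P_{\rho,M}\|\P_{\sigma,M})=D(\mathcal{P}(\rho)\|\sigma)$. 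This establishes the first inequality of the lemma. Finally, for the $\mathrm{poly}(n)$ claim, Schur--Weyl duality gives $(\mathbb{C}^d)^{\otimes n}\cong\bigoplus_\lambda\mathcal{Q}_\lambda\otimes\mathbb{C}^{m_\lambda}$, the sum over the at most $(n+1)^{d-1}$ partitions $\lambda$ of $n$ into at most $d$ rows, with $\dim\mathcal{Q}_\lambda\le(n+d)^{d(d-1)/2}=\mathrm{poly}(n)$ for fixed $d$; a permutation-invariant $\sigma_{A^n}$ lies in the commutant of $S_n$, so by Schur's lemma it is block-diagonal of the form $\bigoplus_\lambda\tilde\sigma_\lambda\otimes 1_{\mathbb{C}^{m_\lambda}}$, and therefore $|\mathrm{spec}(\sigma_{A^n})|\le\sum_\lambda\dim\mathcal{Q}_\lambda=\mathrm{poly}(n)$.

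\textbf{Expected main obstacle.} The only genuinely delicate point is the step $D_{\nsf{ALL}}(\rho\|\sigma)\ge D(\mathcal{P}(\rho)\|\sigma)$: one must be careful that $D_{\nsf{ALL}}$ refers to statistics generated from $\rho$ itself rather than from the already-pinched state, and the resolution is precisely the observation that measuring $\rho$ in a $\sigma$-adapted eigenbasis of $\mathcal{P}(\rho)$ reproduces exactly the spectrum of $\mathcal{P}(\rho)$. The pinching identities, the concavity-defect bound, and the Schur--Weyl counting are all routine.
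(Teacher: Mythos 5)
Your proof is correct and follows essentially the same route the paper indicates for this lemma: the data-processing inequality for the upper bound, the pinching inequality with respect to the spectral projections of $\sigma_{A^n}$ (which you re-derive via the concavity defect of entropy rather than citing $\rho\le v\,\mathcal{P}(\rho)$, a cosmetic difference) followed by measurement in a joint eigenbasis of $\mathcal{P}(\rho)$ and $\sigma$ for the lower bound, and Schur--Weyl duality for the $\mathrm{poly}(n)$ count of distinct eigenvalues.
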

This is an extension of the asymptotic achievability of the measured relative entropy \cite{hiai91} and follows from the pinching inequality \cite{hayashi02b} together with Schur-Weyl duality showing that the number of distinct eigenvalues of $\sigma_{A^n}$ only grows polynomial in $n$ (see, e.g., \cite[Lemma 4.4]{harrowphd}).

Finally, one can also define multipartite extensions of above quantities. For example, we have the tripartite separable measured relative entropy of entanglement $E_{\nsf{SEP}}(A:B:C)_{\rho}$ and its regularization, $E_{\nsf{SEP}}^{\infty}(A:B:C)_{\rho}$. We will not directly use multipartite versions of $\nsf{LOCC}_1(A \to B)$ and hence we do not discuss its different variations \cite{brandao13,Li15}.

We should verify that all these entanglement measures are indeed entanglement monotones, i.e., monotone under application of $\nsf{LOCC}(A:B)$ completely positive and trace preserving (cptp) maps. It is easy to see, and well-known, that $E_{\cM}$ with $\cM \in \big\{ \nsf{ALL}, \nsf{SEP}(A:B), \nsf{PPT}(A:B), \nsf{LOCC}(A:B)\big\}$ are entanglement monotones. This is no longer true for $\cM = \nsf{LOCC}_1(A \to B)$. Instead, we show the following, weaker, statement.\footnote{One suspects that this is well-known, but we could not find a reference.}

\begin{lemma}\label{lem:lodpi}
    Both $D_{\nsf{LOCC}_1(A \to B)}(\cdot \| \cdot )$ and $E_{\nsf{LOCC}_1}(A \to B)$ are monotone under $\nsf{LOCC}_1(A\to B)$ operations, i.e.\ under local operations supported by one-way communications from $A$ to $B$.
\end{lemma}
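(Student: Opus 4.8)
The plan is to reduce both claims to one structural fact: $\nsf{LOCC}_1(A\to B)$ is closed under composition, so that pre-composing an $\nsf{LOCC}_1(A'\to B')$ measurement with an $\nsf{LOCC}_1(A\to B)$ channel is again an $\nsf{LOCC}_1(A\to B)$ measurement. First I would fix a normal form: a cptp map $\Lambda_{AB\to A'B'}$ implementable with one round of communication from $A$ to $B$ can be written as $\Lambda = \sum_x \mathcal{E}^x_{A\to A'}\otimes\mathcal{F}^x_{B\to B'}$, where $\{\mathcal{E}^x\}_x$ is a quantum instrument on Alice's side (completely positive maps with $\sum_x\mathcal{E}^x$ trace preserving) and each $\mathcal{F}^x_{B\to B'}$ is a cptp map applied by Bob according to the received message $x$; any classical output registers are absorbed into $A'$ or $B'$. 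Dualizing, $\Lambda^\dagger = \sum_x (\mathcal{E}^x)^\dagger\otimes(\mathcal{F}^x)^\dagger$ with each $(\mathcal{E}^x)^\dagger$ completely positive, $\sum_x(\mathcal{E}^x)^\dagger(1_{A'})=1_A$, and each $(\mathcal{F}^x)^\dagger$ completely positive and unital.

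Given an arbitrary $M=\{M^z\}_z\in\nsf{LOCC}_1(A'\to B')$ written as in Eq.~\eqref{eq:locc1meas}, i.e.\ $M^z_{A'B'}=\sum_y R^y_{A'}\otimes R^{z|y}_{B'}$, I would form the pulled-back operators $N^z:=\Lambda^\dagger(M^z)=\sum_{x,y}(\mathcal{E}^x)^\dagger(R^y_{A'})\otimes(\mathcal{F}^x)^\dagger(R^{z|y}_{B'})$. Relabeling the classical message as $w=(x,y)$ and setting $S^w_A:=(\mathcal{E}^x)^\dagger(R^y_{A'})\geq0$ and $S^{z|w}_B:=(\mathcal{F}^x)^\dagger(R^{z|y}_{B'})\geq0$, the normalizations $\sum_w S^w_A=\sum_x(\mathcal{E}^x)^\dagger(1_{A'})=1_A$ and $\sum_z S^{z|w}_B=(\mathcal{F}^x)^\dagger(1_{B'})=1_B$ show that $N=\{N^z\}_z\in\nsf{LOCC}_1(A\to B)$. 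Since $\P_{\Lambda(\omega),M}(z)=\tr(\Lambda(\omega)M^z)=\tr(\omega\,\Lambda^\dagger(M^z))=\P_{\omega,N}(z)$ for every operator $\omega$, we obtain $D(\P_{\Lambda(\rho),M}\|\P_{\Lambda(\sigma),M})=D(\P_{\rho,N}\|\P_{\sigma,N})\leq D_{\nsf{LOCC}_1(A\to B)}(\rho\|\sigma)$; taking the supremum over $M$ proves monotonicity of $D_{\nsf{LOCC}_1(A\to B)}(\cdot\|\cdot)$. No separate treatment of non-full-support $\sigma$ is needed here, as $D_\cM$ is defined directly through the classical relative entropy of the Born distributions. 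For the entanglement measure, the same normal form shows that $\Lambda$ is a separable channel, hence $\Lambda(\nsf{Sepp}(A:B))\subseteq\nsf{Sepp}(A':B')$: for $\tau_{AB}=\sum_j Y^j_A\otimes Y^j_B$ one has $\Lambda(\tau_{AB})=\sum_{x,j}\mathcal{E}^x(Y^j_A)\otimes\mathcal{F}^x(Y^j_B)$, which is again separable. Choosing $\tau^*$ with $E_{\nsf{LOCC}_1}(A\to B)_\rho=D_{\nsf{LOCC}_1(A\to B)}(\rho\|\tau^*)$ and combining the two facts gives $E_{\nsf{LOCC}_1}(A'\to B')_{\Lambda(\rho)}\leq D_{\nsf{LOCC}_1(A'\to B')}(\Lambda(\rho)\|\Lambda(\tau^*))\leq D_{\nsf{LOCC}_1(A\to B)}(\rho\|\tau^*)=E_{\nsf{LOCC}_1}(A\to B)_\rho$.

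The only point requiring care is pinning down the normal form of a general one-way cptp map and verifying that the composite message $w=(x,y)$ is still legitimately sent from Alice to Bob before Bob acts --- but this is precisely the defining structure of one-way protocols and introduces nothing new; everything else is bookkeeping, with the classical data-processing inequality already encoded in the supremum definition of $D_{\nsf{LOCC}_1}$.
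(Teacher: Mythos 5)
Your proof is correct and follows essentially the same route as the paper: write the one-way channel as an instrument on $A$ tensored with conditional cptp maps on $B$, pull back the measurement through the adjoint, check that the resulting POVM retains the $\nsf{LOCC}_1(A\to B)$ structure via the relabelled message $(x,y)$, and combine with the fact that the channel preserves separability to get the statement for $E_{\nsf{LOCC}_1}$. No gaps.
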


\begin{proof}
    Without loss of generality a measurement in $\nsf{LOCC}_1(A \to B)$ is of the form Eq.~\eqref{eq:locc1meas}. To show the monotonicity under an $\nsf{LOCC}_1(A \to B)$ operation we only need to show that the above structure of the measurement is preserved under the adjoint operation. Again, without loss of generality, we can write a $\nsf{LOCC}_1(A \to B)$ operation in the form $\mathcal{G} = \sum_k \mathcal{E}_k \otimes \mathcal{F}_k$ where $\mathcal{F}_k:B \to B'$ are cptp maps and $\mathcal{E}_k: A \to A'$ are completely positive trace non-increasing (cptni) maps forming an instrument, such that $\sum_k \mathcal{E}_k$ is cptp again. Given a measurement in $\nsf{LOCC}(A':B')$ can now construct a measurement on $\nsf{LOCC}(A:B)$ with the matrices
    \begin{align}
        E_{AB}^y = \sum_{k,x} \underbrace{\mathcal{E}_k^{\dagger}(Q_{A'}^x)}_{=:\ \tilde{Q}_{A}^{(k,x)}} \otimes \underbrace{\mathcal{F}_k^{\dagger}(Q_{B'}^{y|x})}_{=:\ \tilde{Q}_{B}^{y|(k,x)}} 
    \end{align}
    that has the same structure as in Eq.~\eqref{eq:locc1meas}. Namely, we can verify that
    \begin{align}
        \sum_y \tilde{Q}_{B}^{y|(k,x)} = \sum_y  \mathcal{F}_k^{\dagger}(Q_{B'}^{y|x}) = \mathcal{F}_k^{\dagger}(1_{B'}) = 1_{B}, \qquad
        \sum_{k,x} \tilde{Q}_{A}^{(k,x)} = \sum_{k,x} \mathcal{E}_k^{\dagger}(Q_{A'}^x) = \sum_k \mathcal{E}_k^{\dagger}(1_{A'}) = 1_{A} \,.
    \end{align}
    Hence, $D_{\nsf{LOCC}_1(A \to B)}(\rho_{AB} \| \sigma_{AB}) \geq D_{\nsf{LOCC}_1(A \to B)}(\mathcal{G}(\rho_{AB}) \| \mathcal{G}(\sigma_{AB}))$, and since this holds for all separable states $\sigma_{AB}$ and $\mathcal{G}$ preserves this structure, the desired result for $E_{\nsf{LOCC}_1}(A\,;B)_{\rho}$ also follows.
\end{proof}

Moreover, using similar arguments, one can verify that $D_{\nsf{LO}(A:B)}(\rho_{AB}\|\sigma_{AB})$ and $E_{\nsf{LO}}(A:B)_{\rho}$ are monotone under local operations.


\subsection{General variational formulas}

Our approach is to employ dual representations of quantum entropy as in \cite{berta23,sutter16}. For that, we explore variational expressions for measured relative entropies. For unrestricted measurements, we have the well-known expression
\begin{align}
    D_{\nsf{ALL}}(\rho\|\sigma) = \sup_{\omega > 0} \left\{ \tr [\rho \log \omega] - \log \tr [\sigma \omega] \right\} \,, \label{eq:mrelent}
\end{align}
which is in fact consistent with Eq.~\eqref{eq:continuous-extension} without assumptions on the support of $\rho$ or $\sigma$, and will be finite if and only if $\rho \ll \sigma$. For other classes of measurements we can show the following generic bound.

\begin{lemma} \label{lem:variationalbound}
    Define $C_{\cM}$ as the union of the cones spanned by the POVM elements of measurements in $\cM$, i.e.,
    $C_{\cM} := \bigcup_{M \in\cM} \cone{} \{M^z\}_z$.
    Then, for a quantum state $\rho$ and any $\sigma \geq 0$, we have
    \begin{align}         
    D_{\cM}(\rho\|\sigma) \leq 
    \sup_{\omega > 0 \atop \omega \in C_{\cM}} \left\{
    \tr [\rho \log \omega] - \log \tr [\sigma \omega] \right\} \,.
    \end{align}
\end{lemma}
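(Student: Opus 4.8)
\emph{Proof sketch.} The plan is to reduce to a single fixed measurement and then convert the classical Gibbs variational formula for relative entropy into an operator statement supported on the cone $C_{\cM}$. Since $D_{\cM}(\rho\|\sigma) = \sup_{M \in \cM} D(\P_{\rho,M} \| \P_{\sigma,M})$, it suffices to fix one POVM $M = \{M^z\}_z \in \cM$ and prove that
\[
    D(\P_{\rho,M} \| \P_{\sigma,M}) \le \sup_{\omega > 0,\ \omega \in \cone\{M^z\}_z} \big\{ \tr[\rho \log \omega] - \log \tr[\sigma \omega] \big\}\,;
\]
taking the supremum over $M \in \cM$ and using $\cone\{M^z\}_z \subseteq C_{\cM}$ then finishes the proof. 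As $\P_{\rho,M}$ and $\P_{\sigma,M}$ are mutually commuting (classical) distributions, applying the support-free variational formula \eqref{eq:mrelent} in the classical case (with a diagonal optimizer) gives
\[
    D(\P_{\rho,M} \| \P_{\sigma,M}) = \sup_{\{d_z\}_z :\ d_z > 0} \Big\{ \sum_z \tr[\rho M^z] \log d_z - \log \sum_z \tr[\sigma M^z]\, d_z \Big\}\,.
\]

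First I would fix positive coefficients $\{d_z\}_z$ and set $\omega := \sum_z d_z M^z$. By construction $\omega \in \cone\{M^z\}_z$, and $\omega \ge (\min_z d_z) \sum_z M^z = (\min_z d_z)\, 1 > 0$, so $\omega$ is an admissible point of the right-hand side above. The second term matches exactly, $\log \sum_z \tr[\sigma M^z]\, d_z = \log \tr[\sigma \omega]$. For the first term the key input is the operator inequality
\[
    \sum_z (\log d_z)\, M^z \ \le\ \log\Big( \sum_z d_z M^z \Big)\,,
\]
after which multiplying by $\rho \ge 0$ and taking the trace yields $\sum_z \tr[\rho M^z] \log d_z \le \tr[\rho \log \omega]$, so each term in the supremum is bounded by $\tr[\rho \log \omega] - \log \tr[\sigma \omega]$, as required.

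The displayed operator inequality is the one substantive point, and it is an instance of the operator Jensen inequality (the naive estimate obtained by testing on a single vector does not chain, so the genuine operator inequality is needed). Let $\Lambda$ be the measurement channel $X \mapsto \sum_z \tr[M^z X]\, |z\rangle\langle z|$; its adjoint $\Lambda^{\dagger}$ is completely positive and unital, since $\Lambda^{\dagger}(1) = \sum_z M^z = 1$, and it maps the diagonal operator $\mathrm{diag}(d_z)$ to $\sum_z d_z M^z > 0$. Because $t \mapsto \log t$ is operator concave on $(0, \infty)$, the operator Jensen inequality for unital completely positive maps gives $\Lambda^{\dagger}(\mathrm{diag}(\log d_z)) \le \log \Lambda^{\dagger}(\mathrm{diag}(d_z))$, which is exactly the claimed inequality. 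The remaining points are routine bookkeeping: \eqref{eq:mrelent} is used in its support-free form, outcomes with $M^z = 0$ are irrelevant, and if $\P_{\rho,M} \not\ll \P_{\sigma,M}$ for some $M$ then both sides are $+\infty$ (on the right, placing a large coefficient on an offending element $M^z$ with $\tr[\sigma M^z] = 0 < \tr[\rho M^z]$ sends $\tr[\rho \log \omega] \to +\infty$ while $\tr[\sigma \omega]$ stays bounded), consistent with the continuous extension \eqref{eq:continuous-extension}. I do not expect this bound to be an equality in general, since not every full-support $\omega \in C_{\cM}$ has the form $\sum_z d_z M^z$ for a single $M \in \cM$ — but only the stated inequality is needed in the sequel.
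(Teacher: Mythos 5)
Your proof is correct and is essentially the paper's own argument: both hinge on the operator Jensen inequality for the operator-concave logarithm applied to the conic combination $\omega=\sum_z d_z M^z$ of POVM elements. The only cosmetic difference is that the paper substitutes the optimal coefficients $d_z=\P_{\rho,M}(z)/\P_{\sigma,M}(z)$ directly (so that $\tr[\sigma\omega]=1$ and the $\log\tr[\sigma\omega]$ term vanishes) and treats states without full support via joint convexity of the right-hand side, whereas you keep generic coefficients, invoke the classical Gibbs variational formula, and handle the support issue directly.
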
  

The proof is an adaptation of the argument in~\cite{berta17}.

\begin{proof}
    We first treat the case where both $\rho$ and $\sigma$ have full support, we thus have $\frac{\P_{\rho,M}(z)}{\P_{\sigma,M}(z)} \in (0, \infty)$.

    Using the operator Jensen's inequality, we can bound the measured relative entropy as follows:
    \begin{align}
        D_{\cM}(\rho\|\sigma) &= \sup_{M \in \cM} \left\{ \sum_z \P_{\rho,M}(z) \log \frac{\P_{\rho,M}(z)}{\P_{\sigma,M}(z)} \right\} \\
        &= \sup_{M \in \cM} \left\{ \tr \left[ \rho \sum_z \sqrt{M^z} \log \left(  \frac{\P_{\rho,M}(z)}{\P_{\sigma,M}(z)} \right) \sqrt{M^z} \right] \right\} \\
        &\leq \sup_{M \in \cM} \left\{ \tr \left[ \rho \log \left( \sum_z \frac{\P_{\rho,M}(z)}{\P_{\sigma,M}(z)} M^z  \right) \right] \right\} \\
        &\leq \sup_{\omega > 0 \atop \omega \in C_{\cM}} \left\{ \tr \left[\rho \log \omega \right] - \log \tr \left[ \sigma  \omega \right] \right\} \,, \label{eq:varitionalbound}
    \end{align}
    where, in order to establish the last inequality, we used the fact that
    \begin{align}
        \tr \left[ \sigma \sum_z  \frac{\P_{\rho,M}(z)}{\P_{\sigma,M}(z)}  M^z\right] = \sum_z \P_{\rho,M}(z) = 1 \,
    \end{align}
    that $\omega = \sum_z \frac{\P_{\rho,M}(z)}{\P_{\sigma,M}(z)} M^z \in \cone{} \{M^z\}_z$ by definition of the cone, and that $\omega>0$ as $\frac{\P_{\rho,M}(z)}{\P_{\sigma,M}(z)} \in (0, \infty)$ together with $M\in\cM$.

    For the general case we simply note that the right-hand side of Eq.~\eqref{eq:varitionalbound} is jointly convex in $(\rho, \sigma)$ and vanishes for $(\pi, \pi)$, and thus
    \begin{align}
        \sup_{\epsilon \in (0, 1]} D_{\cM}( (1-\epsilon) \rho + \epsilon \pi \| (1-\epsilon) \sigma + \epsilon \pi) &\leq \sup_{\epsilon \in (0, 1]} (1-\epsilon) \sup_{\omega > 0 \atop \omega \in C_{\cM}} \tr \left[ \rho \log \omega \right] - \log \tr \left[\sigma  \omega \right], 
    \end{align}
    from which the result immediately follows.
\end{proof}

We note that $C_{\nsf{ALL}}$ is the cone of positive semi-definite operators, and from Eq.~\eqref{eq:mrelent} we know that equality in the above lemma holds. For other sets of measurements we do not always have a good characterization of the respective set (which might not even be convex in general), but $C_{\nsf{SEP}(A:B)}$ and $C_{\nsf{PPT}(A:B)}$ are comprised of separable positive semi-definite operators and positive semi-definite operators with positive partial transpose, respectively. We do not know if equality in Lemma \ref{lem:variationalbound} holds for either $\nsf{SEP}(A:B)$ or $\nsf{PPT}(A:B)$.


\subsection{Cone for local measurements and constrained communication}

On first look, note that the set $C_{\nsf{LOCC}_1(A \to B)}$ is comprised of positive semi-definite operators of the form
\begin{align}
    \omega_{AB} = \sum_x Q_{A}^x \otimes \omega_{B}^x
\end{align}
where $\omega_B^x \geq 0$ and $Q_A^x \geq 0$ such that $\sum_x Q_A^x = 1_A$, and $x$ goes over some finite alphabet. However, the upper bound we get using this in Lemma~\ref{lem:variationalbound} does not appear to be tight. We can, however, show the following exact variational formula for the $\nsf{LOCC}_1(A \to B)$ measured relative entropy.

\begin{lemma} \label{lem:oneway-exact}
    Let $A'$ be isomorphic to $A \otimes A$ and consider the set of operators
    \begin{align}
        C_{A'B}^* := \left\{ \sum_{x=1}^{d^2} P_{A'}^x \otimes \omega_B^x \,\middle|\,\forall {x,x'}: P_{A'}^x \geq 0 \land  P_{A'}^x P_{A'}^{x'} = \delta_{xx'} P_{A'}^x \land \omega_B^x \geq 0  \right\}\,.
    \end{align}
    (These are operators that are classical-quantum in some basis on $A'$.)
    Then, with $\rho_{A'B}$ and $\sigma_{A'B}$ consistent embeddings of $\rho_{AB} > 0$ and $\sigma_{AB} > 0$, respectively, we have
    \begin{align}
        D_{\nsf{LOCC}_1(A \to B)}(\rho_{AB} \| \sigma_{AB}) 
        = \sup_{ \omega_{A'B} > 0 \atop \omega_{A'B} \in C_{A'B}^*} \Big\{ \tr \left[ \rho_{A'B} \log \omega_{A'B} \right] - \log \tr \left[ \sigma_{A'B} \omega_{A'B} \right] \Big\} \,.
    \end{align}
    Moreover, the optimal measurement is comprised of a (rank-$1$) POVM on $A$ with at most $d^2$ outcomes, followed by a conditional projective measurement on $B$.
\end{lemma}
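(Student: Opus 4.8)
The plan is to prove two inequalities. For the ``$\leq$'' direction, I would start from the generic upper bound of Lemma~\ref{lem:variationalbound} applied to $\cM = \nsf{LOCC}_1(A \to B)$, but then refine it using the special structure of these measurements. Recall that a measurement in $\nsf{LOCC}_1(A \to B)$ has the form $M_{AB}^z = \sum_x Q_A^x \otimes Q_B^{z|x}$ as in Eq.~\eqref{eq:locc1meas}. The key observation is that although the cone element $\omega_{AB} = \sum_z \frac{\P_{\rho,M}(z)}{\P_{\sigma,M}(z)} M^z$ appearing in the proof of Lemma~\ref{lem:variationalbound} is only of the form $\sum_x Q_A^x \otimes \omega_B^x$ (which is not tight), we can do better by lifting the $x$-register to the larger space $A' \cong A \otimes A$: the idea is to ``dilate'' the POVM $\{Q_A^x\}_x$ on $A$ into a projective measurement $\{P_{A'}^x\}_x$ on $A'$ via a Naimark-type embedding. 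More precisely, I would first argue that without loss of generality the POVM on $A$ in an optimal $\nsf{LOCC}_1(A \to B)$ measurement can be taken to be rank-$1$ with at most $d^2$ outcomes (since the classical data sent to Bob is a function of the outcome, and extremal POVMs have rank-$1$ elements with at most $d^2$ of them by Carath\'eodory-type arguments), and then construct an isometry $V: A \to A'$ together with commuting projectors $P_{A'}^x$ such that $V^\dagger P_{A'}^x V = Q_A^x$. Under this embedding, $\P_{\rho,M}(z) = \tr[\rho_{A'B} (\sum_x P_{A'}^x \otimes Q_B^{z|x})]$ and similarly for $\sigma$, so the whole problem is transported to $A'B$ where the $A'$-measurement is genuinely projective; the cone element is then of the form $\sum_x P_{A'}^x \otimes \omega_B^x$ with $\{P_{A'}^x\}$ orthogonal projectors, i.e., an element of $C_{A'B}^*$. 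Running the operator-Jensen argument of Lemma~\ref{lem:variationalbound} in this lifted picture yields the ``$\leq$'' bound.

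For the ``$\geq$'' direction, I would take any $\omega_{A'B} \in C_{A'B}^*$ with $\omega_{A'B} > 0$, say $\omega_{A'B} = \sum_x P_{A'}^x \otimes \omega_B^x$ with orthogonal projectors summing to $1_{A'}$ (we can always pad so that $\sum_x P_{A'}^x = 1_{A'}$ and include the corresponding $\omega_B^x = 0$ terms, or rather handle support carefully), and exhibit an explicit $\nsf{LOCC}_1(A \to B)$ measurement whose classical relative entropy reproduces the variational expression in the limit. Concretely: Alice performs the POVM $\{Q_A^x\}_x := \{V^\dagger P_{A'}^x V\}_x$ on $A$ (this is the measurement induced on the physical system by the projectors), sends $x$ to Bob, and Bob performs a projective measurement in the eigenbasis of $\omega_B^x$ refined appropriately. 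The resulting classical distributions $\P_{\rho,M}$ and $\P_{\sigma,M}$ should satisfy $D(\P_{\rho,M}\|\P_{\sigma,M}) \geq \tr[\rho_{A'B}\log\omega_{A'B}] - \log\tr[\sigma_{A'B}\omega_{A'B}]$ by the standard tightness of the measured-relative-entropy variational formula when the ``test operator'' $\omega$ is block-diagonal with respect to the measurement --- here this is the content of Eq.~\eqref{eq:mrelent} specialized to the commuting case, where $D_{\nsf{ALL}} = D$ and the classical relative entropy equals the quantum one. Taking the supremum over $\omega_{A'B} \in C_{A'B}^*$ gives ``$\geq$'', and simultaneously identifies the optimal measurement as claimed: a rank-$1$ POVM on $A$ with $\leq d^2$ outcomes followed by a conditional projective measurement on $B$.

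The main obstacle I anticipate is the Naimark-type embedding step and verifying that $d^2$ outcomes genuinely suffice. One must be careful that the dilation of $\{Q_A^x\}_x$ to commuting projectors on $A' \cong A\otimes A$ is compatible with the conditional structure on $B$: the $B$-side conditional POVM $\{Q_B^{z|x}\}_z$ must be ``attached'' to the $x$-th projector block, which works precisely because in $\nsf{LOCC}_1(A\to B)$ the classical message $x$ is what Bob conditions on. The dimension count $d^2$ comes from the fact that a rank-$1$ POVM on a $d$-dimensional space $A$ needs at most $d^2$ elements to be extremal (its elements live in the $d^2$-dimensional real space of Hermitian operators), and the Naimark dilation of a $d$-outcome-to-$d^2$-outcome rank-$1$ POVM on $A$ embeds into $A \otimes A$ of dimension $d^2$. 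Handling the support issues (the lemma assumes $\rho_{AB}, \sigma_{AB} > 0$, which transfers to $\rho_{A'B}, \sigma_{A'B}$ being supported on a subspace $VA \otimes B$ only, not all of $A'B$) requires restricting $\omega_{A'B}$ to act within the relevant blocks, but this is routine given the earlier continuous-extension machinery. Everything else --- operator Jensen, the trace normalization $\tr[\sigma_{A'B}\,\omega_{A'B}] $ bookkeeping, and passing from full-support to general states --- is a direct transcription of the proof of Lemma~\ref{lem:variationalbound}.
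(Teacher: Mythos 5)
Your proposal is correct and follows essentially the same route as the paper: reduce to an extremal rank-$1$ POVM on $A$ with at most $d^2$ outcomes, Naimark-dilate it to a projective measurement on $A'\cong A\otimes A$, run the operator-Jensen argument of Lemma~\ref{lem:variationalbound} in the lifted picture for ``$\leq$'', and for ``$\geq$'' eigendecompose each $\omega_B^x$ to identify the test operator with a projective-then-conditional-projective measurement whose classical relative entropy dominates the variational expression. The only cosmetic difference is that you invoke the classical Donsker--Varadhan formula directly where the paper linearizes $-\log\tr[\sigma\omega]$ to $1-\tr[\sigma\omega]$ and optimizes over the eigenvalues $\lambda_{y|x}$ explicitly; these are equivalent.
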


\begin{proof}
    We first note that due the joint convexity of $D(\cdot\|\cdot)$ we know that the optimal measurement on $A$ is extremal. From~\cite[Theorem 2.21]{holevo12} follows that extremal POVMs have at most $d^2$ rank-1 elements, where $d$ is the dimension of $A$. In particular, via Naimark's dilation, there exists a rank-$1$ projective measurement on $A'$ that produces the same statistics.
    
    Since $D_{\nsf{LOCC}_1(A \to B)}(\rho_{AB} \| \sigma_{AB}) = D_{\nsf{LOCC}_1(A' \to B)}(\rho_{A'B} \| \sigma_{A'B})$ due to the data-processing inequality for local operations in Lemma~\ref{lem:lodpi}, we can restrict the optimization over measurements for the latter quantity to POVMs with elements of the form
    \begin{align}
        M_{A'B}^z = \sum_{x=1}^{d^2} P_{A'}^x \otimes Q_B^{z|x} ,
    \end{align}
    where $P_{A'}^x \geq 0$ are rank-1 with $P_{A'}^x P_{A'}^{x'} = \delta_{xx'} P_{A'}^x$ as in the definition of $C_{A'B}^*$. Applying the series of steps in the proof of Lemma~\ref{lem:variationalbound} we arrive at the bound
    \begin{align}
        D_{\nsf{LOCC}_1(A \to B)}(\rho_{AB} \| \sigma_{AB}) 
        &\leq \sup_{ \omega_{A'B} \in C_{A'B}^*} \big\{ \tr \left[ \rho_{A'B} \log \omega_{A'B} \right] - \log \tr \left[ \sigma_{A'B} \omega_{A'B} \right] \big\} \\
        &= \sup_{ \omega_{A'B} \in C_{A'B}^*} \big\{ \tr \left[ \rho_{A'B} \log \omega_{A'B} \right] + 1 - \tr \left[ \sigma_{A'B} \omega_{A'B} \right] \big\} .
    \end{align}
    Using the eigenvalue decomposition $\omega_{B}^x = \sum_y \lambda_{y|x} P_{B}^{y|x}$ we can write
    \begin{align}
        &\tr \left[ \rho_{A'B} \log \omega_{A'B} \right] + 1 - \tr \left[ \sigma_{A'B} \omega_{A'B} \right] \\
        &\qquad = \sum_x \sum_y \underbrace { \tr \left[ \rho_{A'B} P_{A'}^x \otimes P_{B}^{y|x} \right] }_{ = \P_{\rho,M}(x,y)} \log \lambda_{y|x} + 1 - \lambda_{y|x} \underbrace{ \tr \left[ \sigma_{A'B} P_{A'}^x \otimes P_{B}^{y|x} \right] }_{= \P_{\sigma,M}(x,y)} ,
    \end{align}
    where $M$ defines a projective measurement on $A'$ using $P_{A'}^x$ followed by a conditional projective measurement on $B$ using $P_{B}^{y|x}$.
    Optimizing the above expression over $\lambda_{y|x}$ yields
    \begin{align}
        \sum_x \sum_y \P_{\rho,M}(x,y) \log \frac{\P_{\rho,M}(x,y)}{\P_{\sigma,M}(x,y)} = D( \P_{\rho,M} \| \P_{\sigma,M} )
    \end{align}
    and, thus, we can conclude that
    \begin{align}
        \sup_{ \omega_{A'B} \in C_{A'B}^*} \big\{ \tr \left[ \rho_{A'B} \log \omega_{A'B} \right] + 1 - \tr \left[ \sigma_{A'B} \omega_{A'B} \right] \big\}
        &\leq \sup_{M} D( \P_{\rho,M} \| \P_{\sigma,M} ) \\
        &\leq D_{\nsf{LOCC}_1(A \to B)}(\rho_{AB} \| \sigma_{AB})
    \end{align}
    where the form of the measurement $M$ in the supremum can be restricted as prescribed in the statement of the lemma.
\end{proof}

Next, we discuss the case of $\nsf{LO}$ measurements. For this, let $A'$ be isomorphic to $A \otimes A$ and $B'$ be isomorphic to $B \otimes B$ and consider operators that are classical in some basis on $A'$ and $B'$, i.e., the set
    \begin{align}
        C_{A'B'}^* := \left\{ \sum_{x,y=1}^{d^2} \alpha_{x,y} P_{A'}^x \otimes P_{B'}^y \,\middle|\,\forall {x,x'}: P_{A'}^x, P_{B'}^x, \alpha_{x,x'} \geq 0 \land  P_{A'}^x P_{A'}^{x'} = \delta_{xx'} P_{A'}^x \land P_{B'}^x P_{B'}^{x'} = \delta_{xx'} P_{B'}^x  \right\}
    \end{align}
Then, with $\rho_{A'B'}$ and $\sigma_{A'B'}$ consistent embeddings of $\rho_{AB}$ and $\sigma_{AB}$ as above, we have
\begin{align}
    D_{\nsf{LO}}(\rho_{AB} \| \sigma_{AB}) 
    = \sup_{ \omega_{A'B'} > 0 \atop \omega_{A'B'} \in C_{A'B'}^*} \big\{ \tr \left[ \rho_{A'B'} \log \omega_{A'B'} \right] - \log \tr \left[ \sigma_{A'B'} \omega_{A'B'} \right] \big\} \,.
\end{align}
This characterization, as for the case of one-way communication in Lemma~\ref{lem:oneway-exact}, essentially comes from the fact that the optimal local measurements can be assumed to be (rank-$1$) POVMs with at most $d^2$ outcomes on $A'$ and $B'$.

Variations of the above arguments are also possible for more complex multi-partite measurement structures, but we leave this as an exercise for the reader who has applications of those in mind.


\subsection{Comparison with restricted Schatten one-norms}
\label{sec:norm-considerations}

Restricted Schatten one-norms leading to metrics
\begin{align}
\|\rho-\sigma\|_{\cM}:=\sup_{M \in \cM} \left\|\P_{\rho, M}-\P_{\sigma, M}\right\|_1
\end{align}
have been considered in the literature \cite{matthews09,Lami18}. Similar versions can be defined for the fidelity as well, which we denote by
\begin{align}
F_{\cM}(\rho,\sigma):=\inf_{M \in \cM} F\left(\P_{\rho, M},\P_{\sigma, M}\right)\quad\text{with}\quad F(P,Q):=\sum_x\sqrt{P(x)Q(x)}\,.
\end{align}
A couple of properties are noteworthy:
\begin{itemize}
    \item Two-outcome POVMs are optimal for $\|\rho_{AB}-\sigma_{AB}\|_{\cM}$.
    \item We have the Pinsker type inequalities
    \begin{align}
        D_{\cM}(\rho_{AB}\|\sigma_{AB})\geq-\log F_{\cM}(\rho_{AB},\sigma_{AB})\geq\frac{1}{4}\|\rho_{AB}-\sigma_{AB}\|^2_{\cM}\,.
    \end{align}
    \item Dimension dependent (and basically tight) norm equivalences to the non-restricted Schatten one-norm are available \cite{Lami22}
    \begin{align}
        \|\cdot\|_1\geq\|\cdot\|_{\nsf{PPT}}\geq\|\cdot\|_{\nsf{SEP}}\geq\|\cdot\|_{\nsf{LOCC}}\geq\|\cdot\|_{\nsf{LOCC}_1}\geq\|\cdot\|_{\nsf{LO}}\geq\frac{\|\cdot\|_1}{2\sqrt{2}d}\quad\text{for $d=\min\{|A|,|B|\}$.}
    \end{align}
     Furthermore, one has the alternative bounds \cite{Lami18}
     \begin{align}
        \text{$\|\cdot\|_{\nsf{LOCC}}\geq\frac{\|\cdot\|_1}{2d-1}$ and when $|A|\leq|B|$ also $\|\cdot\|_{\nsf{LOCC}_1(A\to B)}\geq\frac{\|\cdot\|_1}{2d-1}$ .}
     \end{align}
    \item Multipartite extensions are understood as well \cite{Lancien13}.
\end{itemize}


\section{Entropic entanglement inequalities}
\label{sec:inequalities}

\subsection{Squashed entanglement}
\label{sec:squashed}

Based on the conditional quantum mutual information (CQMI)
\begin{align}
    I(A:B|C)_{\rho}:=H(AC)_\rho+H(BC)_\rho-H(ABC)_\rho-H(C)_\rho
\end{align}
one defines {\it squashed entanglement} as \cite{christandl04}
\begin{align}
    I_{\nsf{SQ}}(A:B)_\rho:=\frac{1}{2}\inf_{\rho_{ABC}}I(A:B|C)_{\rho}\,,
\end{align}
where the infimum is over all tripartite quantum state extensions $\rho_{ABC}$ of $\rho_{AB}$ on any system $C$ (with no bound on the dimension of $C$). The following theorem implies that squashed entanglement is non-zero on entangled states \cite{brandao11,Li14,liwinter14}.\footnote{The first proof of faithfulness in \cite{brandao11} is currently incomplete in the version available online, as part of the derivations are based on an imported result from \cite{brandao10} that has a flaw \cite{berta22,Berta23-2}. However, the proof of faithfulness is fixable \cite{brandaochristandlyard} (see also \cite{berta22}).}

\begin{theorem}\label{thm:sq-main}
    Let $\rho_{ABC} > 0$ be any tripartite state. We have
    \begin{align}\label{eq:sq-mainI}
        I(A:B|C)_{\rho}\geq E_{\nsf{LOCC}_1}(B \to A)_{\rho}-\Big\{E(A:C)_{\rho}-E_{\nsf{ALL}}(A:C)_{\rho}\Big\}\,,
    \end{align}
    and consequently,
    \begin{align}\label{eq:sq-mainII}
        I_{\nsf{SQ}}(A:B)_\rho\geq\frac{1}{2}E^{\infty}_{\nsf{LOCC}_1}(B\to A)_{\rho}\geq\frac{1}{2}E_{\nsf{LOCC}_1}(B\to A)_{\rho}\,.
    \end{align}
    Moreover, the same lower bounds hold for $A\leftrightarrow B$ as $I(A:B|C)_{\rho}$ is symmetric under this exchange.
\end{theorem}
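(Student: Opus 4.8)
The plan is to use the variational formulas established in Section~\ref{sec:measured} together with the multivariate Golden--Thompson inequality~\eqref{eq:GT-multi}, following the Lieb--Ruskai strategy but now adapted to the \emph{measured} relative entropy. Concretely, to prove~\eqref{eq:sq-mainI} it suffices to show that for every separable $\sigma_{AC} \in \nsf{Sepp}(A:C)$ there is a bound
\begin{align}
    I(A:B|C)_{\rho} + D_{\nsf{ALL}}(\rho_{AC} \| \sigma_{AC}) \geq E_{\nsf{LOCC}_1}(B \to A)_{\rho} \,,
\end{align}
since taking the minimum over $\sigma_{AC}$ on the left reproduces $E(A:C)_\rho$ in the form $I(A:B|C)_\rho - \{E(A:C)_\rho - E_{\nsf{ALL}}(A:C)_\rho\}$ once we also use $E_{\nsf{ALL}}(A:C)_\rho = \min_{\sigma_{AC}} D_{\nsf{ALL}}(\rho_{AC}\|\sigma_{AC})$ (and choose $\sigma_{AC}$ to be the optimizer of $E(A:C)_\rho$, say). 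The point of introducing the $D_{\nsf{ALL}}$ slack term is that $D_{\nsf{ALL}}$, unlike $D$, has the clean exponential-form variational expression~\eqref{eq:mrelent}, which plays nicely with Golden--Thompson.

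The core computation goes as follows. Expand $I(A:B|C)_\rho = H(AC)_\rho + H(BC)_\rho - H(ABC)_\rho - H(C)_\rho$ and use the dual/variational representation of the von Neumann entropy terms in terms of matrix exponentials (as in~\cite{berta23,berta17}), together with Eq.~\eqref{eq:mrelent} for $D_{\nsf{ALL}}(\rho_{AC}\|\sigma_{AC})$, to write the left-hand side as an optimization of the form $\sup (\text{linear in } \rho) - \log \tr \exp(\sum_k H_k)$ over a family of Hermitian operators $H_k$ built from $\log \rho_{ABC}, \log \rho_{BC}, \log \rho_C, \log \sigma_{AC}$ and an auxiliary variable. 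Here the number of exponent terms is $n=3$ or $n=4$ (this is the $n=3,4$ case the authors flag). Applying Eq.~\eqref{eq:GT-multi} with $p=1$ converts $\log \tr \exp(\sum_k H_k)$ into $\int \mathrm d\beta_0(t)\, \log \tr \big| \prod_k \exp((1+it)H_k) \big|$, and since $\beta_0$ is a probability density one can pull the optimization inside and bound the whole thing below by choosing the auxiliary operator appropriately; the resulting ordered product $\prod_k \exp((1+it)H_k)$, after taking absolute value and trace, should be recognizable — via Born's rule and the explicit structure $\exp((1+it)\log X) = X^{1+it}$ — as $\tr[\rho M^z]$-type quantities for a \emph{one-way LOCC} measurement from $B$ to $A$: a (conditional, $t$-dependent) measurement on $B$ followed by a measurement on $A$, which is exactly the $\nsf{LOCC}_1(B\to A)$ structure of Eq.~\eqref{eq:locc1meas}. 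Using the exact variational formula of Lemma~\ref{lem:oneway-exact} (after embedding $A \hookrightarrow A'$) then identifies the lower bound with $E_{\nsf{LOCC}_1}(B\to A)_\rho$, or rather with $D_{\nsf{LOCC}_1(B\to A)}(\rho_{AB}\|\tilde\sigma_{AB})$ for the separable state $\tilde\sigma_{AB}$ that emerges from tracing out; then $\tilde\sigma_{AB}$ being separable (this is where $\sigma_{AC}$ separable is used — the recovered state inherits separability across $A:B$) gives the bound in terms of $E_{\nsf{LOCC}_1}$.

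For~\eqref{eq:sq-mainII}, I would first upgrade~\eqref{eq:sq-mainI} to its regularized form: apply~\eqref{eq:sq-mainI} to $\rho_{ABC}^{\otimes n}$, divide by $n$, and send $n\to\infty$; the key is that the correction term $E(A:C)_{\rho^{\otimes n}} - E_{\nsf{ALL}}(A:C)_{\rho^{\otimes n}}$ is $o(n)$ — in fact one can take $\sigma_{AC}^{\otimes n}$ in both, and by Lemma~\ref{lem:asymptotic-achiev} the gap is at most $\log|\mathrm{spec}(\sigma_{AC}^{\otimes n})| = O(\log n)$ since $\sigma_{AC}^{\otimes n}$ is permutation invariant, hence vanishes after dividing by $n$. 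Since additivity of CQMI gives $I(A:B|C)_{\rho^{\otimes n}} = n I(A:B|C)_\rho$, one obtains $I(A:B|C)_\rho \geq E^\infty_{\nsf{LOCC}_1}(B\to A)_\rho$, and then $E^\infty_{\nsf{LOCC}_1} \geq E_{\nsf{LOCC}_1}$ is just subadditivity of the measured entanglement measure along the regularizing sequence (the limit being an infimum, by Fekete, or at least $\leq$ the $n=1$ value). Finally, infimizing over all extensions $\rho_{ABC}$ and dividing by $2$ yields the squashed-entanglement statement; the $A\leftrightarrow B$ symmetry is immediate from the definition of $I(A:B|C)_\rho$.

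The main obstacle I anticipate is the bookkeeping in the Golden--Thompson step: one has to (i) choose the exponent operators $H_k$ so that their sum telescopes to exactly the CQMI plus the $D_{\nsf{ALL}}$ slack, (ii) verify that after applying~\eqref{eq:GT-multi} the ordered product, with the $(1+it)$ complex powers, really does assemble into a valid $\nsf{LOCC}_1(B\to A)$ POVM applied to $\rho_{AB}$ and to a \emph{separable} state on $A:B$ — the separability of the second argument, descending from $\sigma_{AC} \in \nsf{Sepp}(A:C)$ through a partial trace and the recovery structure, is the delicate conceptual point and the reason the $E(A:C) - E_{\nsf{ALL}}(A:C)$ defect term appears rather than something cleaner. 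Getting the direction of communication right ($B\to A$, matching the fact that $C$ is the ``recovered-from'' system adjacent to $A$ in $\sigma_{AC}$) is also something to be careful about.
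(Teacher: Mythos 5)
Your high-level instinct (variational formulas plus multivariate Golden--Thompson) matches the paper's strategy, but the proposal has two concrete gaps. First, the reduction is set up incorrectly: you propose to prove $I(A:B|C)_\rho + D_{\nsf{ALL}}(\rho_{AC}\|\sigma_{AC}) \geq E_{\nsf{LOCC}_1}(B\to A)_\rho$ for all separable $\sigma_{AC}$, but minimizing the left-hand side over $\sigma_{AC}$ yields $I(A:B|C)_\rho + E_{\nsf{ALL}}(A:C)_\rho \geq E_{\nsf{LOCC}_1}(B\to A)_\rho$, which is neither the theorem nor implies it: the theorem reads $I(A:B|C)_\rho + E(A:C)_\rho \geq E_{\nsf{LOCC}_1}(B\to A)_\rho + E_{\nsf{ALL}}(A:C)_\rho$, and the defect $E-E_{\nsf{ALL}}$ is in general incomparable to $E_{\nsf{ALL}}$ itself. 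Moreover, the Golden--Thompson machinery needs the \emph{Umegaki} term on the left, because the whole argument starts from the exact identity $I(A:B|C)_\rho + D(\rho_{AC}\|\sigma_{AC}) = D\left(\rho_{ABC}\middle\|\exp(\log\rho_{BC}+\log\sigma_{A:C}-\log\rho_C)\right)$ at the $E(A:C)$-optimizer $\sigma_{A:C}$; a $D_{\nsf{ALL}}$ term on the left does not telescope into a single exponent. The measured quantity $E_{\nsf{ALL}}(A:C)$ instead arises on the \emph{right}, from a second dual variable: one chooses $\omega=\exp(\log X_{AB'}+\log Y_{AC})$ (so the relevant inequality is the five-matrix Golden--Thompson for the Schatten two-norm, not $n=3,4$ with $p=1$), tests $X_{AB'}$ (classical on $B'$, via Lemma~\ref{lem:oneway-exact}) against the rotated-Petz-recovered separable state to get $E_{\nsf{LOCC}_1}(B\to A)_\rho$, and tests $Y_{AC}$ against an auxiliary state $\hat{\gamma}_{A:C}$ obtained by sandwiching the recovered state with $X_{AB'}^{(1\pm it)/2}$ and tracing out $B'$; the projective structure of $X_{AB'}$ on $B'$ is exactly what makes $\hat{\gamma}_{A:C}$ separable on $A:C$, yielding the $E_{\nsf{ALL}}(A:C)_\rho$ term. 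Without this second dual variable and auxiliary separable state, the combination $E(A:C)_\rho-E_{\nsf{ALL}}(A:C)_\rho$ in the statement cannot be produced.

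Second, the last step of Eq.~\eqref{eq:sq-mainII} goes the wrong way in your write-up: $E^{\infty}_{\nsf{LOCC}_1}(B\to A)_\rho\geq E_{\nsf{LOCC}_1}(B\to A)_\rho$ requires \emph{super}-additivity, $E_{\nsf{LOCC}_1}(B_1B_2\to A_1A_2)_{\rho}\geq E_{\nsf{LOCC}_1}(B_1\to A_1)_{\rho}+E_{\nsf{LOCC}_1}(B_2\to A_2)_{\rho}$, on tensor powers; subadditivity/Fekete would give $E^{\infty}\leq E$, i.e.\ the opposite inequality. This super-additivity is not a soft fact but Piani's theorem \cite{piani09}, applicable because $\nsf{LOCC}_1(B\to A)$ measurements are compatible with $\nsf{Sepp}(A:B)$ states. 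Your treatment of the $o(n)$ defect via Lemma~\ref{lem:asymptotic-achiev} is essentially correct, though it should be argued through permutation-invariant optimizers (so that $\log|\mathrm{spec}|$ is polylogarithmic in $n$) rather than by inserting $\sigma^{\otimes n}$ into both minimizations.
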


Note that strong sub-additivity (SSA) of quantum entropy corresponds to $I(A:B|C)_{\rho}\geq0$ and hence Theorem \ref{thm:sq-main} corresponds to a strengthening of SSA. The stronger single-copy version in Eq.~\eqref{eq:sq-mainI} is new. The consequence in Eq.~\eqref{eq:sq-mainII} corresponds to \cite[Theorem 2]{Li14}, which is itself a strengthening of \cite[Theorem 1]{brandao11} (see also \cite{brandao13}). One advantage of our formulation in Theorem $\ref{thm:sq-main}$ is that we have some information on the structure of the optimizer in the lower bound $E^{\infty}_{\nsf{LOCC}_1}(B\,;A)_{\rho}$, as in fact (see the proof of Theorem \ref{thm:sq-main})
\begin{align}
    I(A:B|C)_\rho \geq
    &\;\frac{1}{n}D_{\nsf{LOCC}_1}\left(\rho_{AB}^{\otimes n}\middle\|\int_{-\infty}^{\infty} \textnormal{d}\beta_0(t)\tr_{C^n}\left[\left(\rho_{BC}^{\frac{1+it}{2}}\rho_{C}^{\frac{-1+it}{2}}\right)^{\otimes n}\sigma_{A^n:C^n}\left(\rho_{C}^{\frac{-1-it}{2}}\rho_{BC}^{\frac{1-it}{2}}\right)^{\otimes n}\right]\right)\nonumber\\
    & -\frac{1}{n}\log|\mathrm{poly}(n)|
\end{align}
for any separable state optimizer $\sigma_{A^n:C^n}\in\argmin_{\sigma_{A^nC^n}\in \nsf{Sep}(A^n:C^n)} \ D(\rho_{AC}^{\otimes n} \| \sigma_{A^nC^n})$ and the probability density $\beta_0(t)=\frac{\pi}{2}\left(\cosh(\pi t)+1\right)^{-1}$. This features a recovery map and thus points to further connections between entanglement monogamy and recovery refinements of SSA. However, unfortunately this structure does not seem to further translate to the single-copy lower bound $E_{\nsf{LOCC}_1}(B\,;A)_{\rho}$. We refer to the discussion around \cite[Lemma 3.11]{sutter16a} and related results on composite hypothesis testing \cite{Berta21}.

If wanted, further standard estimates can be made on the single-copy lower bound from Eq.~\eqref{eq:sq-mainII} as done in \cite[Corollary 3.13]{lamiphd}
\begin{align}
    E_{\nsf{SQ}}(A:B)_\rho\geq\frac{1}{2}E_{\nsf{LOCC}_1}(B\to A)_{\rho}
        & \geq -\frac{1}{2}\log\sup_{\sigma_{AB}\in\nsf{Sep}(A:B)}F_{\nsf{LOCC}_1(B \to A)}(\rho_{AB},\sigma_{AB})\\
        & \geq\frac{1}{8}\inf_{\sigma_{AB}\in\nsf{Sep}(A:B)}\left\|\rho_{AB}-\sigma_{AB}\right\|_{\nsf{LOCC}_1(B \to A)}^2\\
        & \geq\frac{1}{2(4d-2)^2}\inf_{\sigma_{AB}\in\nsf{Sep}(A:B)}\left\|\rho_{AB}-\sigma_{AB}\right\|_1^2\quad\text{for $d=\min\{|A|,|B|\}$,}\label{eq:locc-one-norm}
\end{align}
following the considerations from Section \ref{sec:norm-considerations}. We state these bounds from \cite[Corollary 3.13]{lamiphd} here, as since the original proofs of similar statements \cite{brandao13,Li14,liwinter14}, the dimension dependent factors in above chain of inequalities have been improved to their optimal value as stated above \cite{Lami18}. As such, our work also supersedes the bounds from \cite[Corollary 1]{liwinter14}. Finally, as discussed in \cite{Christandl12-2}, the dimension dependent factor in Eq.~\eqref{eq:locc-one-norm} is necessary due to the anti-symmetric state example.

\begin{proof}[Proof of Theorem \ref{thm:sq-main}]
    Let us fix some slack parameter $\nu > 0$. We first prove the bound in Eq.~\eqref{eq:sq-mainI} up to this slack. We start by constructing some states and operators that we will be using in the proof. First, let us introduce 
    \begin{align}
        \sigma_{A:C} \in \argmin_{\sigma_{AC} \in \nsf{Sep}(A:C)} \ D(\rho_{AC} \| \sigma_{AC}), \qquad 
        \sigma_{A:C} = \sum_k\sigma^k_{A}\otimes\sigma^k_{C} \,,
    \end{align}
    which is a minimizer for the entanglement entropy and is separable on the partition $A:C$, as indicated in the second equality. We now introduce the space $B'$ isomorphic to $B \otimes B$ and an (arbitrary) embedding $\rho_{AB'C}$ of $\rho_{ABC}$ into this larger space. Next we apply a rotated Petz recovery map to the state $\sigma_{A:C}$ and introduce the recovered states
    \begin{align}
        \gamma^{t}_{A:B'C} := \sum_k\sigma_{A}^k\otimes\left(\rho_{B'C}^{\frac{1+it}{2}}\rho_{C}^{\frac{-1+it}{2}}\sigma^k_{C}\rho_{C}^{\frac{-1-it}{2}}\rho_{B'C}^{\frac{1-it}{2}}\right), \qquad \gamma_{A:B'C}:=\int_{-\infty}^{\infty} \textnormal{d}\beta_0(t)\gamma_{A:B'C}^{t} ,
    \end{align}
    for $t \in \mathbb{R}$. One notes that these states are separable in the bipartition $A:B'C$ by construction. We now use Lemma~\ref{lem:oneway-exact} as well as the definition of the supremum to write
    \begin{align}
    E_{\nsf{LOCC}_1}(B \to A) \leq D_{\nsf{LOCC}_1(B \to A)}(\rho_{AB'} \| \gamma_{A:B'}) 
        \leq \tr \left( \rho_{AB'} \log X_{AB'} \right) - \log \tr \left( \gamma_{A:B'} X_{AB'} \right) + \nu \,.
    \end{align}
    where $X_{AB'} \in C_{AB'}^*$ is some operator with full support that is classical on $B'$, i.e.\ it has the form
    \begin{align}
        X_{AB'} = \sum_x F_{A}^x \otimes P_{B'}^x \label{eq:projlocc} ,
    \end{align}
    where $\{ P_{B'}^x \}_x$ are orthonormal rank-$1$ projectors decomposing the identity on $B'$ and $F_{A}^x \geq 0$ are arbitrary positive semi-definite matrices. Finally, we construct the state
    \begin{align}
        \hat{\gamma}_{A:C} := \frac{\tr_{B'}\Big[\int_{-\infty}^{\infty} \textnormal{d}\beta_0(t)X_{AB'}^{\frac{1+it}{2}}\gamma^{t}_{A:B'C}X_{AB'}^{\frac{1-it}{2}}\Big]}{\tr\left[X_{AB'}\gamma_{A:B'}\right]}\,, \label{eq:gammaAC}
    \end{align}
    which inherits separability in the partition $A:C$ since
    \begin{align}
        & \tr_{B'}\left[ X_{AB'}^{\frac{1+it}{2}}\gamma^{t}_{A:B'C}X_{AB'}^{\frac{1-it}{2}}\right]\nonumber\\
            &\qquad = \sum_{x,x',k}(F_A^x)^{\frac{1+it}{2}}\sigma^k_{A}(F_A^{x'})^{\frac{1-it}{2}}\otimes\tr_{B'}\left[P_{B'}^{x}\rho_{B'C}^{\frac{1+it}{2}}\rho_{C}^{\frac{-1+it}{2}}\sigma^k_{C}\rho_{C}^{\frac{-1-it}{2}}\rho_{B'C}^{\frac{1-it}{2}}P_{B'}^{x'}\right]\\
            &\qquad =\ \sum_{x,k}(F_A^x)^{\frac{1+it}{2}}\sigma^k_{A}(F_A^{x})^{\frac{1-it}{2}}\otimes\tr_{B'}\left[P_{B'}^{x}\rho_{B'C}^{\frac{1+it}{2}}\rho_{C}^{\frac{-1+it}{2}}\sigma^k_{C}\rho_{C}^{\frac{-1-it}{2}}\rho_{B'C}^{\frac{1-it}{2}}\right]\,,
    \end{align}
    where we used that $P_{B'}^x P_{B'}^{x'} = \delta_{xx'}P_{B'}^x$ and cyclicity under $\tr_B$ to simplify the expression. In essence, the structure of $\nsf{LOCC}_1$ measurements and the respective operator $X_{AB'} \in C_{AB'}^*$ as in Eq.~\eqref{eq:projlocc} is needed here to ensure that separability is preserved and no entanglement is created in this multiplication. Finally, we introduce an operator $Y_{AC} > 0$ satisfying
    \begin{align}
        E_{\nsf{ALL}}(A:C)_{\rho} \leq D_{\nsf{ALL}}(\rho_{AC} \| \hat{\gamma}_{A:C} )
        \leq \tr \left[ \rho_{AC} \log Y_{AC} \right] - \log \tr \left[ Y_{AC} \hat{\gamma}_{A:C} \right] + \nu ,
    \end{align}
    where we used the variational formula for measured relative entropy.

    Now we have everything in place, and the proof proceeds straightforwardly. First, we write
    \begin{align}
        &I(A:B|C)_{\rho}+E(A:C)_{\rho} \nonumber\\
            &\qquad =D\left(\rho_{AB'C}\middle\|\exp(\log\rho_{B'C}+\log\sigma_{A:C}-\log\rho_C)\right)\\
            &\qquad 
            = \sup_{\omega_{AB'C} > 0}\Big\{\tr\left[\rho_{AB'C}\log\omega_{AB'C}\right]-\log\tr\left[\exp(\log\omega_{AB'C}+\log\rho_{B'C}+\log\sigma_{A:C}-\log\rho_C)\right] \Big\} \,,
    \end{align}
    where in the last step we employed the variational formula for the relative entropy. At this point we simply choose $\omega = \exp (\log X_{AB'} + \log Y_{AC})$ using the two operators defined above. This, and the five matrix Golden-Thompson inequality for the Schatten two-norm from \cite[Corollary 3.3]{sutter16} allow us to further bound
    \begin{align}
        &I(A:B|C)_{\rho}+E(A:C)_{\rho} \nonumber\\
            &\qquad \geq \tr\left[\rho_{AB'C}\log\exp(\log X_{AB'}+\log Y_{AC})\right]\nonumber\\
            &\qquad\qquad -\log\tr\left[\exp(\log X_{AB'}+\log Y_{AC}+\log\rho_{BC}+\log\sigma_{A:C}-\log\rho_C)\right] \label{eq:insertomega} \\
            &\qquad \geq \tr\left[\rho_{AB'}\log X_{AB'}\right]+\tr\left[\rho_{AC}\log Y_{AC}\right]\nonumber\\
            &\qquad \qquad -\log\tr\Big[Y_{AC}\int_{-\infty}^{\infty} \textnormal{d}\beta_0(t)X_{AB'}^{\frac{1+it}{2}}\underbrace{\sum_k\sigma_{A}^k\otimes\left(\rho_{BC}^{\frac{1+it}{2}}\rho_{C}^{\frac{-1+it}{2}}\sigma^k_{C}\rho_{C}^{\frac{-1-it}{2}}\rho_{B'C}^{\frac{1-it}{2}}\right)}_{=\ \gamma^{t}_{A:B'C}}X_{AB'}^{\frac{1-it}{2}}\Big] \\
            &\qquad = \tr\left[\rho_{AB'}\log X_{AB'}\right]+\tr\left[\rho_{AC}\log Y_{AC}\right]-\log \tr\left[X_{AB'}\gamma_{A:B'}\right]\cdot\tr\left[Y_{AC}\hat{\gamma}_{A:C}\right] \\
            &\qquad \geq E_{\nsf{LOCC}_1}(B \to A)_{\rho}+E_{\nsf{ALL}}(A:C)_{\rho} - 2\nu \,,
    \end{align}
    where the equality simply follows by substitution of~\eqref{eq:gammaAC} and the ultimate inequality follows from the definition of $X_{AB'}$ and $Y_{AC}$.
    This concludes the proof of Eq.~\eqref{eq:sq-mainI} once we leverage the fact that $\nu > 0$ can be chosen arbitrarily small.

    Next, the first step in Eq.~\eqref{eq:sq-mainII} follows from the additivity of the CQMI together with the asymptotic achievability of the measured relative entropy in Lemma \ref{lem:asymptotic-achiev}, realizing that for tensor product inputs the optimization over separable states in the definition of $E_{\cM}$ can be restricted to permutation invariant states (due to the unitary invariance and joint convexity of the relative entropy).

    Finally, the second step in Eq.~\eqref{eq:sq-mainII} can be deduced from the super-additivity~\cite[Theorem 1]{piani09}
    \begin{align}
        E_{\nsf{LOCC}_1}(B_1B_2 \to A_1A_2)_{\rho}\geq E_{\nsf{LOCC}_1}(B_1 \to A_1)_{\rho}+E_{\nsf{LOCC}_1}(B_2 \to A_2)_{\rho}\,,
    \end{align}
    noting that\,---\,in the notation of \cite{piani09}\,---\,the set of measurements $\nsf{LOCC}_1(B \to A)$ is compatible with the set of states $\nsf{Sep}(A:B)$.
\end{proof}


\subsection{Relative entropy of entanglement}
\label{sec:relative}

Previously known lower bound proofs on the CQMI proceeded via two steps of multipartite monogamy inequalities, going through the relative entropy of entanglement \cite{brandao11,Li14}. As the intermediate steps are of independent interest, we now give simple and direct proofs for strengthened single-copy versions of these bounds.

\begin{proposition}\label{prop:state-redistribution}
    Let $\rho_{ABC}$ be any tripartite state. We have
    \begin{align}\label{eq:state-redistributionI}
        I(A:B|C)_{\rho}\geq E_{\nsf{ALL}}(A:BC)_{\rho}-E(A:C)_{\rho}\,,
    \end{align}
    and consequently
    \begin{align}\label{eq:state-redistributionII}
        I(A:B|C)_{\rho}\geq E^{\infty}(A:BC)_{\rho}-E^{\infty}(A:C)_{\rho}\,,
    \end{align}
    where the regularized relative entropy of entanglement terms on the right-hand side are defined as $E^{\infty}(A:B)_{\rho} :=\lim_{n\to\infty} \frac{1}{n}E(A:B)_{\rho^{\otimes n}}$. Moreover, the same lower bounds hold for $A\leftrightarrow B$ as $I(A:B|C)_{\rho}$ is symmetric under this exchange.
\end{proposition}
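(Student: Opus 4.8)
The plan is to mirror the proof of Theorem~\ref{thm:sq-main}, but with a simpler choice of recovery operator on the $A$-side since we no longer need to preserve a separability structure on the $BC$-side. First I would fix a slack parameter $\nu > 0$ and introduce the optimal separable state
\begin{align}
\sigma_{A:C} \in \argmin_{\sigma_{A:C} \in \nsf{Sepp}(A:C)} D(\rho_{AC} \| \sigma_{AC})\,,
\end{align}
so that $D(\rho_{AC} \| \sigma_{A:C}) = E(A:C)_{\rho}$. As in the squashed entanglement proof, I would then form the rotated Petz recovery of $\sigma_{A:C}$ along the $C \to BC$ direction, namely
\begin{align}
\gamma_{A:BC} := \int_{-\infty}^{\infty} \textnormal{d}\beta_0(t)\, \rho_{BC}^{\frac{1+it}{2}} \rho_{C}^{\frac{-1+it}{2}} \sigma_{A:C} \rho_{C}^{\frac{-1-it}{2}} \rho_{BC}^{\frac{1-it}{2}}\,,
\end{align}
which is a valid (non-normalized) positive operator whose trace I would track. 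The key simplification is that $\gamma_{A:BC}$ need \emph{not} be separable here --- we only want to lower bound $E_{\nsf{ALL}}(A:BC)_{\rho}$, and for \nsf{ALL} measurements we have the clean variational formula Eq.~\eqref{eq:mrelent}, so it suffices to plug $\gamma_{A:BC}$ in as a (suboptimal) second argument.

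The core computation proceeds exactly as in Eqs.~\eqref{eq:insertomega}ff. I would start from the identity
\begin{align}
I(A:B|C)_{\rho} + E(A:C)_{\rho} = D\big(\rho_{ABC}\,\big\|\,\exp(\log\rho_{BC} + \log\sigma_{A:C} - \log\rho_C)\big)\,,
\end{align}
use the variational formula $D(\rho\|\sigma) = \sup_{\omega>0} \tr[\rho\log\omega] - \log\tr[\exp(\log\sigma+\log\omega)]$, and then choose the near-optimal test operator $Y_{ABC} > 0$ from the \nsf{ALL} variational formula for $D_{\nsf{ALL}}(\rho_{ABC} \| \gamma_{A:BC})$, i.e.\ $Y_{ABC}$ with $E_{\nsf{ALL}}(A:BC)_{\rho} \leq D_{\nsf{ALL}}(\rho_{ABC}\|\gamma_{A:BC}) \leq \tr[\rho_{ABC}\log Y_{ABC}] - \log\tr[\gamma_{A:BC} Y_{ABC}] + \nu$. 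Plugging $\omega = Y_{ABC}$ into the variational formula and applying the multivariate Golden--Thompson inequality Eq.~\eqref{eq:GT-multi} --- here with $n=4$ terms $\{\log Y_{ABC}, \log\rho_{BC}, \log\sigma_{A:C}, -\log\rho_C\}$ and $p=1$ --- moves the matrix exponential to an integral over products; the cyclic/trace structure then collapses the $\rho_C, \rho_{BC}$ factors into exactly $\tr[\gamma_{A:BC} Y_{ABC}]$, leaving
\begin{align}
I(A:B|C)_{\rho} + E(A:C)_{\rho} \geq \tr[\rho_{ABC}\log Y_{ABC}] - \log\tr[\gamma_{A:BC} Y_{ABC}] \geq E_{\nsf{ALL}}(A:BC)_{\rho} - \nu\,.
\end{align}
Letting $\nu \downarrow 0$ gives Eq.~\eqref{eq:state-redistributionI}.

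For Eq.~\eqref{eq:state-redistributionII} I would apply Eq.~\eqref{eq:state-redistributionI} to $\rho_{ABC}^{\otimes n}$, using additivity of the CQMI, $I(A^n:B^n|C^n)_{\rho^{\otimes n}} = n I(A:B|C)_{\rho}$, divide by $n$, and take $n\to\infty$: the term $\tfrac1n E_{\nsf{ALL}}(A^n:B^nC^n)_{\rho^{\otimes n}}$ converges to $E^{\infty}_{\nsf{ALL}}(A:BC)_{\rho}$, which by Lemma~\ref{lem:asymptotic-achiev} (restricting the separable-state optimizer to permutation-invariant states, whose spectrum grows only polynomially) equals $E^{\infty}(A:BC)_{\rho}$; similarly $\tfrac1n E(A^n:C^n)_{\rho^{\otimes n}} \to E^{\infty}(A:C)_{\rho}$. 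The $A \leftrightarrow B$ symmetry is immediate. I expect the only real subtlety to be the bookkeeping in the Golden--Thompson step --- checking that the product of exponentials $\rho_{BC}^{(1+it)/2}\rho_C^{(-1+it)/2}\sigma_{A:C}^{(1+it)/?}\cdots$ telescopes correctly against the definition of $\gamma_{A:BC}$ under the trace against $Y_{ABC}$, exactly as in the displayed chain after Eq.~\eqref{eq:insertomega} --- but this is a routine adaptation of the argument already carried out for Theorem~\ref{thm:sq-main}, now \emph{without} the complication of the $B'$-doubling and the $X_{AB'} \in C^*_{AB'}$ gymnastics that were only needed to preserve $\nsf{LOCC}_1$ structure.
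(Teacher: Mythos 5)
Your overall route is the paper's route (rotated Petz recovery of the separable optimizer $\sigma_{A:C}$ along $C\to BC$, the variational formula for the relative entropy, multivariate Golden--Thompson), but it contains one genuine error of logic: you assert that $\gamma_{A:BC}$ ``need not be separable'' because you only want to lower bound $E_{\nsf{ALL}}(A:BC)_{\rho}$. This gets the quantifier backwards. By definition $E_{\nsf{ALL}}(A:BC)_{\rho}=\min_{\sigma\in\nsf{Sepp}(A:BC)}D_{\nsf{ALL}}(\rho_{ABC}\|\sigma_{ABC})$ is a \emph{minimum over separable states}, so the last link of your chain, $D_{\nsf{ALL}}(\rho_{ABC}\|\gamma_{A:BC})\geq E_{\nsf{ALL}}(A:BC)_{\rho}$, is a valid deduction only if $\gamma_{A:BC}$ (up to normalization, and its trace is at most one here) lies in the feasible set of that minimum, i.e.\ is separable on $A:BC$. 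For a non-separable second argument the inequality can fail outright: plugging in $\gamma=\rho_{ABC}$ entangled would give $D_{\nsf{ALL}}(\rho_{ABC}\|\rho_{ABC})=0<E_{\nsf{ALL}}(A:BC)_{\rho}$ by faithfulness. So separability of the recovered operator is precisely the property you must verify, not a complication you get to discard.

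Fortunately it holds automatically, and this is the one observation your write-up is missing (and the reason the paper keeps the decomposition $\sigma_{A:C}=\sum_k\sigma^k_A\otimes\sigma^k_C$ explicit): since the rotated Petz map acts only on the $C$ system,
\begin{align}
\gamma_{A:BC}=\sum_k\sigma^k_{A}\otimes\left(\int_{-\infty}^{\infty}\textnormal{d}\beta_0(t)\,\rho_{BC}^{\frac{1+it}{2}}\rho_{C}^{\frac{-1+it}{2}}\sigma^k_{C}\rho_{C}^{\frac{-1-it}{2}}\rho_{BC}^{\frac{1-it}{2}}\right)
\end{align}
is manifestly separable on the cut $A:BC$, and with that inserted your argument closes. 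Two smaller remarks. First, the Golden--Thompson step needs the Schatten \emph{two}-norm ($p=2$) form of Eq.~\eqref{eq:GT-multi} applied to the halved exponents (equivalently Lieb's triple matrix inequality, as in \cite[Eq.~39]{sutter16}), not $p=1$ with full powers: only the $p=2$ version produces the sandwich $\rho_{BC}^{\frac{1+it}{2}}(\cdots)\rho_{BC}^{\frac{1-it}{2}}$ whose adjoint halves recombine under the trace against $Y_{ABC}$ into $\tr[Y_{ABC}\gamma_{A:BC}]$; with $p=1$ you are left with the trace norm of a non-Hermitian product and cannot collapse it. Second, your treatment of the regularized statement matches the paper's and is fine.
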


We note that the stronger single-copy version in Eq.~\eqref{eq:state-redistributionI} is novel. The consequence in Eq.~\eqref{eq:state-redistributionII} is \cite[Lemma 1]{brandao11}, which was based on the asymptotic achievability of quantum state redistribution \cite{devetak08,devetak09} together with the asymptotic continuity \cite{Donald99,Radtke06} and non-lockability \cite{Horodecki05-2} of the relative entropy of entanglement. We emphasize that Eq.~\eqref{eq:state-redistributionII} was also invoked in the later proof in \cite{Li14}. In contrast, our proof is elementary via multivariate matrix trace inequalities.

\begin{proof}[Proof of Proposition \ref{prop:state-redistribution}]
    For the proof of the first bound, we use similar, but simpler arguments as in the proof of the first bound in Theorem \ref{thm:sq-main}. Namely, we employ the three matrix Golden-Thompson inequality for the Schatten two-norm in the form of \cite[Eq.~39]{sutter16}\footnote{This is equivalent to Lieb's triple matrix inequality \cite{lieb73a}, as shown in \cite[Lemma 3.4]{sutter16}.}. With a separable state optimizer
    \begin{align}
        \sigma_{A:C} \in \argmin_{\sigma_{AC}\in \nsf{Sep}(A:C)} \ D(\rho_{AC} \| \sigma_{AC}), \qquad 
        \sigma_{A:C} = \sum_k\sigma^k_{A}\otimes\sigma^k_{C} \,,
    \end{align}
    we find
    \begin{align}
        I(A:B|C)_{\rho}+E(A:C)_{\rho}& =D\left(\rho_{ABC}\middle\|\exp(\log\rho_{BC}+\log\sigma_{A:C}-\log\rho_C)\right)\\
            & \geq D_{\nsf{ALL}}\left(\rho_{ABC}\middle\|\int_{-\infty}^{\infty} \textnormal{d}\beta_0(t)\rho_{BC}^{\frac{1+it}{2}}\rho_{C}^{\frac{-1+it}{2}}\sigma_{A:C}\rho_{C}^{\frac{-1-it}{2}}\rho_{BC}^{\frac{1-it}{2}}\right)\\
            & = D_{\nsf{ALL}}\left(\rho_{ABC}\middle\|\sum_k\sigma^k_{A}\otimes\left(\int_{-\infty}^{\infty} \textnormal{d}\beta_0(t)\rho_{BC}^{\frac{1+it}{2}}\rho_{C}^{\frac{-1+it}{2}}\sigma^k_{C}\rho_{C}^{\frac{-1-it}{2}}\rho_{BC}^{\frac{1-it}{2}}\right)\right)\\
            & \geq E_{\nsf{ALL}}(A:BC)_{\rho}\,,
    \end{align}
    with the probability density $\beta_0(t)=\frac{\pi}{2}\left(\cosh(\pi t)+1\right)^{-1}$.

    The second bound follows from additivity of the quantum mutual information on tensor product states together with the asymptotic achievability of the measured relative entropy from Lemma \ref{lem:asymptotic-achiev}, in the same way as we derived the second bound in Theorem \ref{thm:sq-main}.
\end{proof}

The next relative entropy of entanglement bound is as follows.

\begin{proposition}\label{thm:post_li-winter}
    Let $\rho_{ABC}$ be any tripartite state. We have
    \begin{align}\label{eq:post_li-winter}
        E(A:BC)_{\rho} \geq E_{\nsf{LOCC}_1}(B \to A)_{\rho} + E_{\nsf{ALL}}(A:C)_{\rho}\,,
    \end{align}
    and, consequently,
    \begin{align}\label{eq:li-winter}
        &E(A:BC)_{\rho} \geq E_{\nsf{LOCC}_1}(B\to A)_{\rho} + E^{\infty}(A:C)_{\rho}\quad\text{and}\\
        &E^{\infty}(A:BC)_{\rho} \geq E^{\infty}_{\nsf{LOCC}_1}(B\to A)_{\rho} + E^{\infty}(A:C)_{\rho}\,.
    \end{align}
\end{proposition}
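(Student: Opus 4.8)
The plan is to run a simplified version of the argument behind Theorem~\ref{thm:sq-main}. Because the left-hand side of Eq.~\eqref{eq:post_li-winter} is already a relative entropy of entanglement, no conditional mutual information needs rewriting, no rotated Petz recovery map enters, and a single three-matrix instance of the multivariate Golden--Thompson inequality~\eqref{eq:GT-multi} for $p=2$ (as in the proof of Proposition~\ref{prop:state-redistribution}) replaces the five-matrix one. Fix $\nu>0$; I would prove Eq.~\eqref{eq:post_li-winter} first for $\rho_{ABC}>0$ and pass to general states afterwards via Eq.~\eqref{eq:continuous-extension}. Take a separable optimizer $\tau_{A:BC}=\sum_k\tau_A^k\otimes\tau_{BC}^k\in\argmin_{\tau\in\nsf{Sepp}(A:BC)}D(\rho_{ABC}\|\tau_{ABC})$, so that $E(A:BC)_\rho=D(\rho_{ABC}\|\tau_{A:BC})$; introduce $B'$ isomorphic to $B\otimes B$ with consistent embeddings $\rho_{AB'C}$ and $\tau_{A:B'C}=\sum_k\tau_A^k\otimes\tau_{B'C}^k$, still separable across $A:B'C$ (hence its marginal $\tau_{A:B'}$ is separable across $A:B'$); and, by Lemma~\ref{lem:oneway-exact} with the roles of $A$ and $B$ exchanged together with the definition of the supremum, choose a full-support $X_{AB'}=\sum_x F_A^x\otimes P_{B'}^x\in C_{AB'}^*$, the $\{P_{B'}^x\}_x$ orthonormal rank-$1$ projectors decomposing $1_{B'}$ and $F_A^x\geq0$, with
\begin{align}
\tr[\rho_{AB'}\log X_{AB'}]-\log\tr[X_{AB'}\tau_{A:B'}]\geq D_{\nsf{LOCC}_1(B\to A)}(\rho_{AB}\|\tau_{AB})-\nu\geq E_{\nsf{LOCC}_1}(B\to A)_\rho-\nu\,,
\end{align}
the last step because $\tau_{AB}$ is separable. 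Set
\begin{align}
\hat\gamma_{A:C}:=\frac{\tr_{B'}\big[\int_{-\infty}^{\infty}\textnormal{d}\beta_0(t)\,X_{AB'}^{\frac{1+it}{2}}\,\tau_{A:B'C}\,X_{AB'}^{\frac{1-it}{2}}\big]}{\tr[X_{AB'}\tau_{A:B'}]}\,;
\end{align}
expanding $X_{AB'}^{\frac{1\pm it}{2}}=\sum_x(F_A^x)^{\frac{1\pm it}{2}}\otimes P_{B'}^x$ and using $P_{B'}^xP_{B'}^{x'}=\delta_{xx'}P_{B'}^x$, the double sum over projector labels collapses and $\hat\gamma_{A:C}$ becomes a normalized sum of tensor products of the positive operators $\int\textnormal{d}\beta_0(t)\,(F_A^x)^{\frac{1+it}{2}}\tau_A^k(F_A^x)^{\frac{1-it}{2}}$ on $A$ with the positive diagonal blocks $\langle x|_{B'}\tau_{B'C}^k|x\rangle_{B'}$ on $C$\,---\,so it is a bona fide separable state on $A:C$ (and the denominator is precisely the full trace of the numerator). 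Finally, via Eq.~\eqref{eq:mrelent}, pick $Y_{AC}>0$ with $\tr[\rho_{AC}\log Y_{AC}]-\log\tr[Y_{AC}\hat\gamma_{A:C}]\geq E_{\nsf{ALL}}(A:C)_\rho-\nu$.

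With these in place the computation is short: inserting $\omega=\exp(\log X_{AB'}+\log Y_{AC})$ into the variational formula for $E(A:BC)_\rho=D(\rho_{AB'C}\|\tau_{A:B'C})$ gives
\begin{align}
E(A:BC)_\rho\geq\tr[\rho_{AB'}\log X_{AB'}]+\tr[\rho_{AC}\log Y_{AC}]-\log\tr\big[\exp(\log\tau_{A:B'C}+\log X_{AB'}+\log Y_{AC})\big]\,,
\end{align}
and the three-matrix Golden--Thompson inequality bounds the last trace by $\int\textnormal{d}\beta_0(t)\,\tr[Y_{AC}\,X_{AB'}^{\frac{1+it}{2}}\,\tau_{A:B'C}\,X_{AB'}^{\frac{1-it}{2}}]$, which, pulling $\tr_{B'}$ through $Y_{AC}$, equals $\tr[X_{AB'}\tau_{A:B'}]\cdot\tr[Y_{AC}\hat\gamma_{A:C}]$ by the definition of $\hat\gamma_{A:C}$. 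Regrouping,
\begin{align}
E(A:BC)_\rho\geq\big(\tr[\rho_{AB'}\log X_{AB'}]-\log\tr[X_{AB'}\tau_{A:B'}]\big)+\big(\tr[\rho_{AC}\log Y_{AC}]-\log\tr[Y_{AC}\hat\gamma_{A:C}]\big)\geq E_{\nsf{LOCC}_1}(B\to A)_\rho+E_{\nsf{ALL}}(A:C)_\rho-2\nu\,,
\end{align}
and $\nu\downarrow0$ gives Eq.~\eqref{eq:post_li-winter}.

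For the regularized statements I would proceed as in the proofs of Theorem~\ref{thm:sq-main} and Proposition~\ref{prop:state-redistribution}: applying Eq.~\eqref{eq:post_li-winter} to $\rho_{ABC}^{\otimes n}$ with the grouping $A^n:B^nC^n$, dividing by $n$ and letting $n\to\infty$, the term $\tfrac1n E_{\nsf{ALL}}(A^n:C^n)_{\rho^{\otimes n}}$ tends to $E^\infty(A:C)_\rho$ by Lemma~\ref{lem:asymptotic-achiev} (restricting the separable optimizations to permutation-invariant states), while the remaining two terms tend to $E^\infty(A:BC)_\rho$ and $E^\infty_{\nsf{LOCC}_1}(B\to A)_\rho$; this is the second displayed consequence. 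The first one then follows by combining it with subadditivity of $E(A:BC)$ under tensor products (so $E(A:BC)_\rho\geq E^\infty(A:BC)_\rho$) and with the super-additivity of $E_{\nsf{LOCC}_1}$ from \cite[Theorem 1]{piani09} (so $E^\infty_{\nsf{LOCC}_1}(B\to A)_\rho\geq E_{\nsf{LOCC}_1}(B\to A)_\rho$).

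I expect the only non-routine point\,---\,as already in Theorem~\ref{thm:sq-main}\,---\,to be that $\hat\gamma_{A:C}$ stays separable across $A:C$ after the Golden--Thompson expansion, since conjugating $\tau_{A:B'C}$ by a complex power could a priori create entanglement. This works precisely because Lemma~\ref{lem:oneway-exact} permits the test operator $X_{AB'}$ to be chosen classical on $B'$, so that $X_{AB'}^{\frac{1\pm it}{2}}$ is diagonal on $B'$ and, after the partial trace over $B'$, only positive operators on $A$ and on $C$ remain and no entanglement is produced\,---\,a non-classical test operator, or a genuine complex power of the separable state $\tau_{A:B'C}$ itself, would in general spoil this. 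The support technicalities (the embeddings $\rho_{AB'C}$ and $\tau_{A:B'C}$ are not full rank, and $\rho_{ABC}$ need not be either) are dealt with as elsewhere in the paper, using Eq.~\eqref{eq:continuous-extension} and the fact that only the ``$\geq$'' direction of the variational formulas is invoked.
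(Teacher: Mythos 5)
Your proof is correct and takes essentially the same route as the paper's: the same separable optimizer on $A:B'C$, the same choice $\omega=\exp(\log X_{AB'}+\log Y_{AC})$ with $X_{AB'}$ classical on $B'$ via Lemma~\ref{lem:oneway-exact}, the same three-matrix Golden--Thompson step, and the same construction of the separable state $\hat\gamma_{A:C}$ whose separability rests on the collapse of the cross terms $P_{B'}^xP_{B'}^{x'}=\delta_{xx'}P_{B'}^x$. The only differences are presentational (explicit $\nu$-slack bookkeeping with $X$ chosen before $Y$, and deriving the fully regularized consequence first and then the mixed one via sub-/super-additivity), which does not change the substance.
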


We note that the stronger single-copy version in Eq.~\eqref{eq:post_li-winter} is novel. We were not able to directly replace the $E_{\nsf{ALL}}(A:C)_{\rho}$ term in the lower bound with the larger $E(A:C)_{\rho}$. The first consequence in Eq.~\eqref{eq:li-winter} is \cite[Theorem 1]{Li14}, whereas the second consequence can now be combined with the regularized Eq.~\eqref{eq:li-winter} leading to
\begin{align}\label{eq:bcy}
E_{\nsf{SQ}}(A:B)_\rho\geq\frac{1}{2}E^{\infty}_{\nsf{LOCC}_1}(B \to A)_{\rho}\geq\frac{1}{2}E_{\nsf{LOCC}_1}(B\,;A)_{\rho}\,,
\end{align}
as proven directly in Theorem \ref{thm:sq-main}. 

\begin{proof}[Proof of Proposition \ref{thm:post_li-winter}]
    We first prove Eq.~\eqref{eq:post_li-winter}, which is almost analogous to the proof of Theorem \ref{thm:sq-main}, up to some simplifications. We use the same embedding of $\rho_{ABC}$ to $\rho_{AB'C}$. Let
    \begin{align}
        \sigma_{A:B'C} \in \argmin_{\sigma_{AB'C}\in \nsf{Sep}(A:B'C)} \ D(\rho_{AB'C} \| \sigma_{AB'C}), \qquad 
        \sigma_{A:B'C} = \sum_k\sigma^k_{A}\otimes\sigma^k_{B'C} \,,
    \end{align}
    be a separable state optimizer. We may express the relative entropy of entanglement using the variational formula for relative entropy
    \begin{align}
        E(A:BC)_{\rho} &= D(\rho_{AB'C} \| \sigma_{A:B'C}) \\
        &= \sup_{\omega_{AB'C} > 0} \tr\left[ \rho_{AB'C} \log \omega_{AB'C} \right] - \log \tr\left[ \exp(\log \sigma_{A:B'C} + \log \omega_{AB'C} ) \right] \,,
    \end{align}
    where $\omega_{AB'C}$ is an arbitrary positive definite matrix. We will now choose it to be of the form
    \begin{align}
        \omega_{AB'C} = \exp\left( \log X_{AB'} + \log Y_{AC}  \right)\,,
    \end{align}
    where $Y_{AC} > 0$ is general and $X_{AB'} > 0$ is of the $\nsf{LOCC}_1(A \to B)$ form in Eq.~\eqref{eq:projlocc}, both still to be optimized over. We can then bound $E(A:BC)_{\rho}$ using the three-matrix Golden-Thompson inequality as follows:
    \begin{align}
        E(A:BC)_{\rho} &\geq \sup_{X_{AB'}, Y_{AC}>0} \Big\{ \tr[ \rho_{AB'} \log X_{AB'}] + \tr[ \rho_{AC} \log Y_{AC}] \\
        &\qquad - \log \tr \left[ \exp \left( \log \sigma_{AB'C} + \log X_{AB'} + \log Y_{AC} \right) \right] \Big\} \\
        &\geq \sup_{X_{AB'}, Y_{AC}>0} \Bigg\{ \tr[ \rho_{AB'} \log X_{AB'}] + \tr[\rho_{AC} \log Y_{AC}] \notag\\
        &\qquad
        - \log  \int_{-\infty}^{\infty} \textnormal{d}\beta_0(t) \tr \left[ \sigma_{AB'C} X_{AB'}^{\frac{1 + it}{2}} Y_{AC} X_{AB'}^{\frac{1 - it}{2}} \right] \Bigg\} \,. \label{eq:thelastone}
    \end{align}
    Now, define
    \begin{align}
        \widetilde{\sigma}_{A:C} := \frac{\int_{-\infty}^{\infty} \textnormal{d}\beta_0(t) \tr_{B'} \left[ X_{AB'}^{\frac{1 + it}{2}}  \sigma_{A:B'C} X_{AB'}^{\frac{1 - it}{2}}  \right] }{ \tr \left[X_{AB'} \sigma_{A:B'} \right] }\,.
    \end{align}
    Due to the $\nsf{LOCC}_1(A \to B)$ structure of $X_{AB'}$ in Eq.~\eqref{eq:projlocc} and $\sigma_{A:B'C} = \sum_k \sigma_{A}^k \otimes \sigma_{B'C}^k$, we realize that
    \begin{align}
        \tr_B \left[ X_{AB'}^{\frac{1 + it}{2}} \sigma_{A:B'C} X_{AB'}^{\frac{1 - it}{2}} \right] 
        &= \sum_{x,x',k} \left( F_{A}^x \right)^{\frac{1 + it}{2}} \sigma_{A}^k ( F_{A}^{x'} )^{\frac{1 - it}{2}} \otimes \tr_{B'}\! \left[ P_{B'}^x \sigma_{B'C}^k P_{B'}^{x'} \right] \\
        &= \sum_{x,k} \left( F_{A}^x \right)^{\frac{1 + it}{2}} \sigma_{A}^k ( F_{A}^{x} )^{\frac{1 - it}{2}} \otimes \tr_{B'}\! \left[ P_{B'}^x \sigma_{B'C}^k \right] ,
    \end{align}
    and, thus, $\widetilde{\sigma}_{A:C}$ inherits the separable structure on the bipartition $A:C$ from $\sigma_{A:B'C}$. Using this definition we can now further bound Eq.~\eqref{eq:thelastone} to arrive at
    \begin{align}
        E(A:BC)_{\rho} &\geq \sup_{X_{AB'}, Y_{AC}} \Big\{ \tr[\rho_{AB} \log X_{AB'}] + \tr[\rho_{AC} \log Y_{AC}] 
            - \log \tr \left[ \widetilde{\sigma}_{A:C} Y_{AC} \right] \tr \left[ \sigma_{A:B'} X_{AB'} \right] \Big\} \\
        &\geq E_{\nsf{LOCC}_1}(B \to A)_{\rho} + E_{\nsf{ALL}}(A:C)_{\rho} \,.\label{eq:locc1-sep}
    \end{align}

    Finally, Eq.~\eqref{eq:li-winter} then follows by the additivity of the quantum relative entropy on product states together with the asymptotic achievability of the measured relative entropy from Lemma \ref{lem:asymptotic-achiev}.
\end{proof}


\subsection{Conditional entanglement of mutual information}
\label{sec:cemi}

The quantity
\begin{align}
I(A|\bar{A}:B|\bar{B})_\rho:=I(A\bar{A}:B\bar{B})_\rho-I(\bar{A}:\bar{B})_\rho
\end{align}
can be seen as a symmetric version of the tripartite CQMI via $I(A:B|C)_\rho\equiv I(A|C:B|C)_\rho$. It is then the basis of the entanglement measure {\it conditional entanglement of mutual information (CEMI)} \cite{Yang08}
\begin{align}
I_{\nsf{CEMI}}(A:B)_\rho:=\frac{1}{2}\inf_{\rho_{A\bar{A}B\bar{B}}}I(A|\bar{A}:B|\bar{B})_\rho\,,
\end{align}
where the infimum goes over all bipartite extensions $\rho_{A\bar{A}B\bar{B}}$ of $\rho_{AB}$ on systems $\bar{A}\bar{B}$ (with no bound on the dimensions of $\bar{A}$ and $\bar{B}$). By definition we have
\begin{align}
    I_{\nsf{CEMI}}(A:B)_\rho\geq I_{\nsf{SQ}}(A:B)_\rho
\end{align}
and CEMI shares similarly complete axiomatic entanglement measurement properties as squashed entanglement \cite{Yang08}. However, whereas no separation between $I_{\nsf{CEMI}}$ and $I_{\nsf{SQ}}$ is known, CEMI often gives more structure. For example, one finds the following recoverability lower bounds (see \cite{wilde14} for related bounds.).

\begin{proposition}\label{prop:cemi-recover}
Let $\rho_{A\bar{A}B\bar{B}}$ be any four-party state. We have
\begin{align}
    I(A|\bar{A}:B|\bar{B})_\rho & \geq D_{\nsf{ALL}}\left(\rho_{A\bar{A}B\bar{B}}\middle\|\int_{-\infty}^{\infty} \textnormal{d}\beta_0(t)\left(\mathcal{R}_{\bar{A}\to A\bar{A}}^{[t]}\otimes\mathcal{R}_{\bar{B}\to B\bar{B}}^{[t]}\right)(\rho_{\bar{A}\bar{B}})\right) \label{eq:cemi-recovI}\\
    I(A|\bar{A}:B|\bar{B})_\rho & \geq -\int_{-\infty}^{\infty} \textnormal{d}\beta_0(t)\log F\left(\rho_{A\bar{A}B\bar{B}},\left(\mathcal{R}_{\bar{A}\to A\bar{A}}^{[t]}\otimes\mathcal{R}_{\bar{B}\to B\bar{B}}^{[t]}\right)(\rho_{\bar{A}\bar{B}})\right)\,,\label{eq:cemi-recovII}
\end{align}
with local quantum channels
\begin{align}
\text{$\mathcal{R}_{\bar{A}\to A\bar{A}}^{[t]}(\cdot):=\left(\rho_{A\bar{A}}^{\frac{1+it}{2}}\rho_{\bar{A}}^{\frac{-1+it}{2}}\right)(\cdot)\left(\rho_{\bar{A}}^{\frac{-1-it}{2}}\rho_{A\bar{A}}^{\frac{1-it}{2}}\right)$ and $\mathcal{R}_{\bar{B}\to B\bar{B}}^{[t]}(\cdot)$ similar,}
\end{align}
and the probability density $\beta_0(t)=\frac{\pi}{2}\left(\cosh(\pi t)+1\right)^{-1}$.
\end{proposition}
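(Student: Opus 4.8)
The plan is to mirror the proof of Eq.~\eqref{eq:state-redistributionI} in Proposition~\ref{prop:state-redistribution}, but with \emph{two} recovery maps applied in parallel rather than one. We start from the symmetric identity $I(A|\bar A:B|\bar B)_\rho = I(A\bar A:B\bar B)_\rho - I(\bar A:\bar B)_\rho$. Expanding both mutual informations and the entropies in the exponential-state form of Eq.~\eqref{eq:exp-state}, one checks that
\begin{align}
I(A|\bar A:B|\bar B)_\rho = D\big(\rho_{A\bar AB\bar B}\,\big\|\,\exp(\log\rho_{A\bar A}+\log\rho_{B\bar B}+\log\rho_{\bar A\bar B}-\log\rho_{\bar A}-\log\rho_{\bar B}-\log\rho_{\bar A\bar B}+\ldots)\big)\,,
\end{align}
more carefully: $I(A\bar A:B\bar B)=D(\rho_{A\bar AB\bar B}\|\rho_{A\bar A}\otimes\rho_{B\bar B})$ but here we need the mutual information written against a single exponential, so I would instead write $I(A|\bar A:B|\bar B)_\rho = D(\rho_{A\bar AB\bar B}\|\tau)$ with $\log\tau = \log\rho_{A\bar A}+\log\rho_{B\bar B}-\log\rho_{\bar A}-\log\rho_{\bar B}+\log\rho_{\bar A\bar B}$, which one verifies by checking that $\tr[\rho_{A\bar AB\bar B}\log\tau]$ reproduces $H(A\bar A)+H(B\bar B)-H(\bar A)-H(\bar B)+H(\bar A\bar B) - H(A\bar AB\bar B)$ after adding $H(A\bar AB\bar B)$; the extra work is in arguing $\log\tau$ is a legitimate self-adjoint operator (it is, being a real linear combination of logarithms of the relevant marginals), and that $D(\rho\|\tau)$ coincides with $I(A|\bar A:B|\bar B)_\rho$ — this is an identity, so no inequality is lost.

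Next I apply the variational formula for Umegaki relative entropy, choosing the test operator $\omega = 1$ (the trivial choice), which gives $I(A|\bar A:B|\bar B)_\rho \geq -\log\tr[\tau]$; but since we want a recovery statement rather than a number, the better move is: apply the five-matrix (or here, we can arrange it as a product of the grouped factors) multivariate Golden--Thompson inequality from Eq.~\eqref{eq:GT-multi} with $p=1$ to $\exp(\log\rho_{A\bar A}+\log\rho_{B\bar B}+\log\rho_{\bar A\bar B}-\log\rho_{\bar A}-\log\rho_{\bar B})$, regrouping the Hermitian matrices into the two blocks $H_{\bar A\to A\bar A}:=\log\rho_{A\bar A}-\log\rho_{\bar A}$ acting on the $A\bar A$ side, $H_{\bar B\to B\bar B}:=\log\rho_{B\bar B}-\log\rho_{\bar B}$ on the $B\bar B$ side, and $\log\rho_{\bar A\bar B}$ in the middle. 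Because the first two commute with each other (disjoint supports), the multivariate Golden--Thompson integral collapses to a single integral over $t$ with the density $\beta_0$, producing exactly the product $\rho_{A\bar A}^{\frac{1+it}{2}}\rho_{\bar A}^{\frac{-1+it}{2}}\rho_{\bar A}^{\frac{-1-it}{2}}\rho_{A\bar A}^{\frac{1-it}{2}}$ sandwiched around $\rho_{\bar A\bar B}$ — wait, one must be careful to keep the two $\pm it$ halves consistent, so that the inner object is $(\mathcal R^{[t]}_{\bar A\to A\bar A}\otimes\mathcal R^{[t]}_{\bar B\to B\bar B})(\rho_{\bar A\bar B})$ integrated against $\beta_0$, which is precisely the argument appearing in Eq.~\eqref{eq:cemi-recovI}. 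Monotonicity of $\log$ and the fact that $D_{\nsf{ALL}}$ is the supremum over $\omega>0$ of $\tr[\rho\log\omega]-\log\tr[\sigma\omega]$ (Eq.~\eqref{eq:mrelent}) — which here we \emph{lower}-bound $D(\rho\|\tau)$ by $D_{\nsf{ALL}}(\rho\|\sigma)$ for the specific $\sigma = \int\beta_0(t)(\mathcal R^{[t]}\otimes\mathcal R^{[t]})(\rho_{\bar A\bar B})\,dt$, using that $\log\tau \le \log\sigma$ does \emph{not} hold pointwise, so the correct argument is via the variational formula: $D(\rho\|\tau)=\sup_\omega\{\tr[\rho\log\omega]-\log\tr[\exp(\log\tau+\log\omega)]\}$ and Golden--Thompson applied \emph{inside} the $\log\tr$ term bounds it below by $\sup_\omega\{\tr[\rho\log\omega]-\log\tr[\sigma\omega]\}=D_{\nsf{ALL}}(\rho\|\sigma)$ — yields Eq.~\eqref{eq:cemi-recovI}. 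Finally, Eq.~\eqref{eq:cemi-recovII} follows from Eq.~\eqref{eq:cemi-recovI} by the standard bound $D_{\nsf{ALL}}(\rho\|\sigma)\ge -\log F(\rho,\sigma)$ together with concavity of $-\log F$ (or directly, joint concavity of fidelity in its arguments) to pull the integral over $\beta_0$ outside the $-\log F$, i.e.\ $-\log F(\rho,\int\beta_0\,\sigma^{[t]})\ge -\int\beta_0\log F(\rho,\sigma^{[t]})$.

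The main obstacle I anticipate is purely bookkeeping: getting the exponent grouping in the multivariate Golden--Thompson inequality to line up so that exactly \emph{two} rotated Petz maps $\mathcal R^{[t]}_{\bar A\to A\bar A}$ and $\mathcal R^{[t]}_{\bar B\to B\bar B}$ emerge with the \emph{same} parameter $t$ under a single integral, rather than a double integral over two independent parameters. This works precisely because $\log\rho_{A\bar A}-\log\rho_{\bar A}$ and $\log\rho_{B\bar B}-\log\rho_{\bar B}$ commute (they act on complementary tensor factors $A\bar A$ vs.\ $B\bar B$), so in the multivariate Golden--Thompson product $\prod_k \exp((1+it)H_k)$ these two factors can be merged and the rotation angle is shared; I would state this commutation explicitly and invoke the fact that when a subset of the $H_k$ mutually commute the corresponding factors in Eq.~\eqref{eq:GT-multi} combine. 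Everything else — the relative-entropy identity for $I(A|\bar A:B|\bar B)_\rho$, the variational formula, and the fidelity corollary — is routine.
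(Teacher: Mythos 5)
Your treatment of Eq.~\eqref{eq:cemi-recovI} is essentially the paper's proof: write $I(A|\bar A:B|\bar B)_\rho = D\big(\rho_{A\bar A B\bar B}\big\|\exp(\log\rho_{A\bar A}\otimes\rho_{B\bar B}+\log\rho_{\bar A\bar B}-\log\rho_{\bar A}\otimes\rho_{\bar B})\big)$, invoke the variational formula for $D$, and apply the multivariate Golden--Thompson inequality (Schatten two-norm) inside the $\log\tr\exp$ term; your observation that the $A\bar A$-side and $B\bar B$-side operators commute, so that a single rotation parameter $t$ serves both Petz maps under one integral, is exactly the right point. One technical caveat: you cannot literally take $H_{\bar A\to A\bar A}=\log\rho_{A\bar A}-\log\rho_{\bar A}$ as a single Hermitian block in Eq.~\eqref{eq:GT-multi}, since $\exp\big(\tfrac{1+it}{2}(\log\rho_{A\bar A}-\log\rho_{\bar A})\big)\neq\rho_{A\bar A}^{\frac{1+it}{2}}\rho_{\bar A}^{\frac{-1-it}{2}}$ in general ($\rho_{A\bar A}$ and $\rho_{\bar A}$ need not commute). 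The correct application is the four-matrix inequality with the separate, suitably ordered factors $\omega$, $\rho_{A\bar A}\otimes\rho_{B\bar B}$, $(\rho_{\bar A}\otimes\rho_{\bar B})^{-1}$ and $\rho_{\bar A\bar B}$; merging is only legitimate across the commuting $A$/$B$ tensor factors, not within one side.

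The genuine gap is in Eq.~\eqref{eq:cemi-recovII}. You propose to deduce it from Eq.~\eqref{eq:cemi-recovI} via $D_{\nsf{ALL}}(\rho\|\sigma)\geq-\log F(\rho,\sigma)$ together with ``concavity of $-\log F$ to pull the integral outside''. But joint concavity of the fidelity gives $F\big(\rho,\int\mathrm{d}\beta_0(t)\,\sigma^{[t]}\big)\geq\int\mathrm{d}\beta_0(t)\,F(\rho,\sigma^{[t]})$, and hence, together with Jensen's inequality for the convex function $-\log$,
\[
-\log F\Big(\rho,\int\mathrm{d}\beta_0(t)\,\sigma^{[t]}\Big)\;\leq\;-\int\mathrm{d}\beta_0(t)\,\log F\big(\rho,\sigma^{[t]}\big)\,,
\]
i.e.\ the inequality runs in the \emph{opposite} direction to the one you assert. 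Eq.~\eqref{eq:cemi-recovII}, with the integral outside the logarithm, is the \emph{stronger} of the two fidelity statements and does not follow from Eq.~\eqref{eq:cemi-recovI} by this route. The paper proves it by a separate argument: apply the Peierls--Bogoliubov inequality with $G_1=\log\rho_{A\bar AB\bar B}$ and a suitable $G_2$ to obtain $I(A|\bar A:B|\bar B)_\rho\geq-2\log\tr\exp(\cdots)$, and then use the multivariate Golden--Thompson inequality for the Schatten \emph{one}-norm ($p=1$), which yields the integral of $\log\|\cdots\|_1$ directly and hence places the $\beta_0$-integral outside $-\log F$. You need to supply this (or an equivalent) argument for the second inequality.
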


The proof is as in \cite{sutter16,Sutter17} via multivariate trace inequalities and is given in Appendix \ref{app:missing-proofs}. Additionally, the corresponding regularized lower bound as
\begin{align}
    \limsup_{n\to\infty}\frac{1}{n}D\left(\rho_{A\bar{A}B\bar{B}}^{\otimes n}\middle\|\int_{-\infty}^{\infty} \textnormal{d}\beta_0(t)\left(\left(\mathcal{R}_{\bar{A}\to A\bar{A}}^{[t]}\otimes\mathcal{R}_{\bar{B}\to B\bar{B}}^{[t]}\right)(\rho_{\bar{A}\bar{B}})\right)^{\otimes n}\right) \label{eq:cemi-recovIII}
\end{align}
then also follows from the asymptotic achievability of the measured entropy (Lemma \ref{lem:asymptotic-achiev}). As for squashed entanglement, it is unclear how these recoverability lower bounds would directly imply faithfulness bounds.

Nevertheless, using again multivariate trace inequalities, a strengthened lower bound in terms of the measurement set $\nsf{SEP}(A:B)$ can be shown\,---\,compared to $\nsf{LOCC}_1(B\,;A)$ for squashed entanglement.

\begin{theorem}\label{thm:cemi-main}
Let $\rho_{A\bar{A}B\bar{B}}$ be any four-party state. We have
\begin{align}\label{eq:cemi-mainI}
    I(A|\bar{A}:B|\bar{B})_\rho\geq E_{\nsf{SEP}}(A:B)_{\rho}-\Big\{E(\bar{A}:\bar{B})_{\rho}-E_{\nsf{ALL}}(\bar{A}:\bar{B})_{\rho}\Big\}\,,
\end{align}
and consequently
\begin{align}\label{eq:cemi-mainII}
    I_{\nsf{CEMI}}(A:B)_\rho\geq\frac{1}{2}E_{\nsf{SEP}}^{\infty}(A:B)_\rho\geq\frac{1}{2}E_{\nsf{SEP}}(A:B)_\rho\,.
\end{align}
\end{theorem}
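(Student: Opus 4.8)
The plan is to mirror the structure of the proof of Theorem~\ref{thm:sq-main}, but using the four-party symmetric setup and replacing the single-round $\nsf{LOCC}_1$ measurement on the $A:B$ side by a general $\nsf{SEPP}(A:B)$ measurement, which is what permits the stronger conclusion $E_{\nsf{SEPP}}$ in Eq.~\eqref{eq:cemi-mainI}. First I would fix a slack $\nu>0$ and take a separable optimizer $\sigma_{\bar A:\bar B}=\sum_k \sigma_{\bar A}^k\otimes\sigma_{\bar B}^k\in\argmin_{\sigma\in\nsf{Sepp}(\bar A:\bar B)} D(\rho_{\bar A\bar B}\|\sigma_{\bar A\bar B})$, so that $E(\bar A:\bar B)_\rho=D(\rho_{\bar A\bar B}\|\sigma_{\bar A:\bar B})$. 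The key identity, analogous to Eq.~\eqref{eq:exp-state}, is that $I(A|\bar A:B|\bar B)_\rho+E(\bar A:\bar B)_\rho = D\big(\rho_{A\bar A B\bar B}\,\big\|\,\exp(\log\rho_{A\bar A}+\log\rho_{B\bar B}+\log\sigma_{\bar A:\bar B}-\log\rho_{\bar A}-\log\rho_{\bar B})\big)$, which follows by expanding the von Neumann entropies in $I(A\bar A:B\bar B)-I(\bar A:\bar B)$ and noting $\log\sigma_{\bar A:\bar B}$ cancels against itself up to the relative entropy term. Then I would write this relative entropy in its variational form $\sup_{\omega>0}\{\tr[\rho_{A\bar A B\bar B}\log\omega]-\log\tr[\exp(\log\omega+\log\rho_{A\bar A}+\log\rho_{B\bar B}+\log\sigma_{\bar A:\bar B}-\log\rho_{\bar A}-\log\rho_{\bar B})]\}$.

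Next I would choose $\omega = \exp(\log X_{AB}+\log Y_{\bar A\bar B})$, where $X_{AB}\in C_{\nsf{SEPP}(A:B)}$ is a separable positive operator realizing $E_{\nsf{SEPP}}(A:B)_\rho$ via Lemma~\ref{lem:variationalbound} up to $\nu$, and $Y_{\bar A\bar B}>0$ is chosen (after the dust settles) to realize $E_{\nsf{ALL}}(\bar A:\bar B)_\rho$ up to $\nu$ via Eq.~\eqref{eq:mrelent}. With $\omega$ of this form the argument of the $\exp$ becomes a sum of six Hermitian matrices (after conjugating $\sigma_{\bar A:\bar B}$ by square roots), so the main tool is the six-matrix multivariate Golden--Thompson inequality Eq.~\eqref{eq:GT-multi} with $n=6$, $p=1$ (for the $\nsf{SEPP}$ version; in the $\nsf{LOCC}_1$ case of Theorem~\ref{thm:sq-main} five matrices and $p=2$ sufficed because the $\nsf{LOCC}_1$ structure collapses one multiplication). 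Applying it, the $\log\tr[\cdots]$ term splits into $\log$ of an integral over $t$ of $\tr[X_{AB}^{(1+it)/2}\,Y_{\bar A\bar B}^{(1+it)/2}\,\tilde\sigma^t\,Y_{\bar A\bar B}^{(1-it)/2}\,X_{AB}^{(1-it)/2}]$ where $\tilde\sigma^t$ is the rotated-Petz-recovered version of $\sigma_{\bar A:\bar B}$ built from $\rho_{A\bar A}^{(1\pm it)/2}\rho_{\bar A}^{\mp(1\pm it)/2}$ and the $\bar B$ analogue — this is exactly $\big(\mathcal{R}^{[t]}_{\bar A\to A\bar A}\otimes\mathcal{R}^{[t]}_{\bar B\to B\bar B}\big)(\sigma_{\bar A:\bar B})$, which is separable on $A:B$ because the recovery maps act locally and $\sigma_{\bar A:\bar B}$ is a product sum. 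I then define $\hat\sigma_{A:B}$ as this recovered state conjugated by $X_{AB}^{(1\pm it)/2}$, integrated over $t$, and normalized by $\tr[X_{AB}\gamma]$ — and crucially, because $X_{AB}$ is \emph{separable} on $A:B$ rather than merely of the restricted $C_{A'B}^*$ form, conjugating a separable operator by $X_{AB}^{(1\pm it)/2}$ still yields a separable operator (a product $\sum Y_A\otimes Y_B$ conjugated by $\sum Y_A'\otimes Y_B'$ remains a sum of products). This is the point where the $\nsf{SEPP}$ strengthening over $\nsf{LOCC}_1$ comes from: separability of $X_{AB}$ is enough to preserve separability under the multiplication, so no classical-on-$B'$ projector structure is needed, and correspondingly no embedding into $A\otimes A$ or $B\otimes B$.

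After this, the bound reads $I(A|\bar A:B|\bar B)_\rho+E(\bar A:\bar B)_\rho \geq \tr[\rho_{AB}\log X_{AB}]+\tr[\rho_{\bar A\bar B}\log Y_{\bar A\bar B}]-\log\big(\tr[X_{AB}\gamma_{A:B}]\cdot\tr[Y_{\bar A\bar B}\hat\sigma_{A:B}]\big)$ (using $\tr[\rho_{A\bar A B\bar B}\log\exp(\log X_{AB}+\log Y_{\bar A\bar B})]$ — wait, this needs $X_{AB}$ and $Y_{\bar A\bar B}$ to commute, which they do since they act on disjoint systems, so it equals $\tr[\rho_{AB}\log X_{AB}]+\tr[\rho_{\bar A\bar B}\log Y_{\bar A\bar B}]$). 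The first two terms minus $\log\tr[X_{AB}\gamma_{A:B}]$ is at least $E_{\nsf{SEPP}}(A:B)_\rho-\nu$ by the definition of $X_{AB}$ via Lemma~\ref{lem:variationalbound} applied with state $\gamma_{A:B}$ — one has to check $\gamma_{A:B}$ is a valid separable state (it is, up to normalization, which is absorbed), and then $\tr[\rho_{\bar A\bar B}\log Y_{\bar A\bar B}]-\log\tr[Y_{\bar A\bar B}\hat\sigma_{A:B}]\geq D_{\nsf{ALL}}(\rho_{\bar A\bar B}\|\hat\sigma_{A:B})\geq E_{\nsf{ALL}}(\bar A:\bar B)_\rho-\nu$ since $\hat\sigma_{A:B}$ traced appropriately is separable on — here I need to be careful: $\hat\sigma$ lives on $A:B$ but we compare $\rho_{\bar A\bar B}$ to it, so actually the recovery/trace structure should land $\hat\sigma$ on $\bar A:\bar B$; I would organize the conjugations so that the leftover operator after all traces is a separable state on $\bar A:\bar B$, exactly as in Eq.~\eqref{eq:gammaAC} of Theorem~\ref{thm:sq-main} where $\hat\gamma_{A:C}$ ends up on $A:C$. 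Letting $\nu\to 0$ gives Eq.~\eqref{eq:cemi-mainI}. For Eq.~\eqref{eq:cemi-mainII}: $I(A|\bar A:B|\bar B)$ is additive on tensor powers (both $I(A\bar A:B\bar B)$ and $I(\bar A:\bar B)$ are), so applying Eq.~\eqref{eq:cemi-mainI} to $\rho^{\otimes n}$, dividing by $n$, using that the $\nsf{Sepp}$ optimizer on $\bar A^n:\bar B^n$ can be taken permutation-invariant, and invoking Lemma~\ref{lem:asymptotic-achiev} to kill the $\{E-E_{\nsf{ALL}}\}$ term at rate $\frac{1}{n}\log\mathrm{poly}(n)\to 0$, yields $I_{\nsf{CEMI}}\geq\frac12 E^\infty_{\nsf{SEPP}}(A:B)_\rho$; the final inequality $E^\infty_{\nsf{SEPP}}\geq E_{\nsf{SEPP}}$ is superadditivity of $E_{\nsf{SEPP}}$ (using that $\nsf{SEPP}$ measurements are compatible with separable states in the sense of \cite{piani09}). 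The main obstacle I anticipate is bookkeeping: making sure the six-matrix Golden--Thompson gives precisely the tensor-product-of-local-recovery-maps structure on the $\bar A\bar B$ side while the $X_{AB}$ conjugations land the residual operator as a genuine separable state on the correct bipartition, and verifying that $p=1$ (rather than $p=2$) is what is needed here since, unlike the $\nsf{LOCC}_1$ case, we cannot absorb one of the six factors into a projector.
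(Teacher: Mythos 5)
Your proposal follows essentially the same route as the paper's proof: rewrite $I(A|\bar A:B|\bar B)_\rho+E(\bar A:\bar B)_\rho$ as a single relative entropy to $\exp(\log\rho_{A\bar A}\otimes\rho_{B\bar B}+\log\sigma_{\bar A:\bar B}-\log\rho_{\bar A}\otimes\rho_{\bar B})$, pass to the variational formula, choose $\omega=X_{A:B}\otimes Y_{\bar A\bar B}$ with $X_{A:B}$ separable, apply a multivariate Golden--Thompson inequality to produce the locally recovered operator $\gamma_{A\bar A:B\bar B}=\int \textnormal{d}\beta_0(t)\,\big(\mathcal{R}^{[t]}_{\bar A\to A\bar A}\otimes\mathcal{R}^{[t]}_{\bar B\to B\bar B}\big)(\sigma_{\bar A:\bar B})$, and use separability of $X_{A:B}$ to see that both $\gamma_{A:B}$ and the conditional state $\hat\gamma_{\bar A:\bar B}\propto\tr_{AB}[X_{A:B}\gamma_{A\bar A:B\bar B}]$ remain separable. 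The regularization step (additivity of $I(A|\bar A:B|\bar B)$, permutation-invariant optimizers, Lemma~\ref{lem:asymptotic-achiev}) and the final superadditivity step via Piani also match the paper.

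Two technical points would need fixing. First, the correct tool is the \emph{four}-matrix Golden--Thompson inequality for the Schatten \emph{two}-norm ($p=2$), not a six-matrix $p=1$ version: the commuting sums $\log\rho_{A\bar A}+\log\rho_{B\bar B}$, $-\log\rho_{\bar A}-\log\rho_{\bar B}$ and $\log X_{A:B}+\log Y_{\bar A\bar B}$ each collapse into a single Hermitian matrix, leaving four. The $p=1$ bound in Eq.~\eqref{eq:GT-multi} controls $\log\bigl\|\prod_k e^{(1+it)H_k}\bigr\|_1$, the trace norm of a non-Hermitian product, which does not reorganize into the required form $\tr[\omega\,\gamma^t]$ with $\gamma^t\geq 0$; only the $p=2$ version yields the sandwich structure with $(1\pm it)/2$ powers against which $\omega$ can be traced. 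Your heuristic that the $\nsf{LOCC}_1$ case ``absorbs a factor into a projector'' is off: Theorem~\ref{thm:sq-main} needs five matrices because there $X_{AB'}$ and $Y_{AC}$ overlap on $A$ and do not commute, whereas here they act on disjoint systems --- which is also why no conjugation of the recovered operator by $X_{A:B}^{(1\pm it)/2}$ is needed at all: $\omega$ enters the bound as a single untwisted factor, and $\tr[(X_{A:B}\otimes Y_{\bar A\bar B})\gamma_{A\bar A:B\bar B}]$ factorizes directly into $\tr[X_{A:B}\gamma_{A:B}]\cdot\tr[Y_{\bar A\bar B}\hat\gamma_{\bar A:\bar B}]$. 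Second, and relatedly, your worry about which bipartition the residual state lives on resolves itself once the spurious conjugation is dropped: $\tr_{AB}[X_{A:B}\gamma_{A\bar A:B\bar B}]$ is manifestly a sum of product operators on $\bar A:\bar B$, so the $E_{\nsf{ALL}}(\bar A:\bar B)_{\rho}$ term comes out on the correct systems. With these corrections your argument coincides with the paper's proof.
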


The stronger single-copy version in Eq.~\eqref{eq:cemi-mainI} is novel. The consequence in Eq.~\eqref{eq:cemi-mainI} corresponds to a strengthening of \cite[Equation 41]{Li14} that stated the (a priori weaker) lower bound with respect to $\nsf{LOCC}(A:B)$. One further advantage of our formulation in Theorem $\ref{thm:cemi-main}$ is that we have some information on the structure of the optimizer in lower bound $E^{\infty}_{\nsf{SEP}}(A:B)_{\rho}$, as in fact (see the proof of Theorem \ref{thm:cemi-main})
\begin{align}\label{eq:cemi-main-refined}
    I(A|\bar{A}:B|\bar{B})_\rho \geq
    &\;\frac{1}{n}D_{\nsf{SEP}}\left(\rho_{AB}^{\otimes n}\middle\|\int_{-\infty}^{\infty} \textnormal{d}\beta_0(t)\tr_{\bar{A}^n\bar{B}^n}\left[\left(\mathcal{R}_{\bar{A}\to A\bar{A}}^{[t]}\otimes\mathcal{R}_{\bar{B}\to B\bar{B}}^{[t]}\right)^{\otimes n}\left(\sigma_{\bar{A}^n:\bar{B}^n}\right)\right]\right)\nonumber\\
    & -\frac{1}{n}\log|\mathrm{poly}(n)|
\end{align}
for any separable state optimizer $\sigma_{\bar{A}^n:\bar{B}^n}\in\argmin_{\sigma_{\bar{A}^n\bar{B}^n}\in \nsf{Sep}(\bar{A}^n:\bar{B}^n)} \ D(\rho_{\bar{A}\bar{B}}^{\otimes n} \| \sigma_{\bar{A}^n\bar{B}^n})$, with local quantum channels
\begin{align}
\text{$\mathcal{R}_{\bar{A}\to A\bar{A}}^{[t]}(\cdot):=\left(\rho_{A\bar{A}}^{\frac{1+it}{2}}\rho_{\bar{A}}^{\frac{-1+it}{2}}\right)(\cdot)\left(\rho_{\bar{A}}^{\frac{-1-it}{2}}\rho_{A\bar{A}}^{\frac{1-it}{2}}\right)$ and $\mathcal{R}_{\bar{B}\to B\bar{B}}^{[t]}(\cdot)$ similar,}
\end{align}
and the probability density $\beta_0(t)=\frac{\pi}{2}\left(\cosh(\pi t)+1\right)^{-1}$. However, similarly as for squashed entanglement, this structure does not seem to further translate to the single-copy lower bound $E_{\nsf{SEP}}(A:B)_{\rho}$. Further lower bounds in terms of restricted fidelity and Schatten one-norm as leading to Eq.~\eqref{eq:locc-one-norm} are possible \cite{matthews09,Lami18}.

\begin{proof}[Proof of Theorem \ref{thm:cemi-main}]
The idea of the proof is similar as for Theorem \ref{thm:sq-main} and we first prove the bound in Eq.~\eqref{eq:cemi-mainI}. Namely, for a separable state optimizer
    \begin{align}
        \sigma_{\bar{A}:\bar{B}} \in \argmin_{\sigma_{\bar{A}\bar{B}}\in \nsf{Sep}(\bar{A}:\bar{B})} \ D(\rho_{\bar{A}\bar{B}} \| \sigma_{\bar{A}\bar{B}}), \qquad 
        \sigma_{\bar{A}:\bar{B}} = \sum_k\sigma^k_{\bar{A}}\otimes\sigma^k_{\bar{B}} \,,
    \end{align}
    using the four matrix Golden-Thompson inequality for the Schatten two-norm from \cite[Corollary 3.3]{sutter16} and the variational characterization from Lemma \ref{lem:variationalbound} with the choice $\omega_{A\bar{A}B\bar{B}}=X_{A:B}\otimes Y_{\bar{A}\bar{B}}$ with general $Y_{AB}>0$ and $X_{A:B}\in\nsf{SEP}(A:B)$ to be optimized over, we find
    \begin{align}
        &I(A|\bar{A}:B|\bar{B})_\rho+E(\bar{A}:\bar{B})_{\rho}\nonumber\\
        & =D\left(\rho_{A\bar{A}B\bar{B}}\middle\|\exp\left(\log\rho_{A\bar{A}}\otimes\rho_{B\bar{B}}+\log\sigma_{\bar{A}:\bar{B}}-\log\rho_{\bar{A}}\otimes\rho_{\bar{B}}\right)\right)\\
        &=\sup_{\omega_{A\bar{A}B\bar{B}} > 0}\Big\{\tr\left[\rho_{A\bar{A}B\bar{B}}\log\omega_{A\bar{A}B\bar{B}}\right]\notag\\
        &\qquad - \log\tr\left[\exp\left(\log\omega_{A\bar{A}B\bar{B}}+\log\rho_{A\bar{A}}\otimes\rho_{B\bar{B}} +\log\sigma_{\bar{A}:\bar{B}}-\log\rho_{\bar{A}}\otimes\rho_{\bar{B}}\right)\right]\Big\}\\
        &\geq \sup_{X_{A:B},Y_{\bar{A}\bar{B}} > 0}\Big\{\tr\left[\rho_{A\bar{A}B\bar{B}}\log X_{A:B}\otimes Y_{\bar{A}\bar{B}}\right]\nonumber\\
        &\quad-\log\tr\left[\exp\left(\log X_{A:B}\otimes Y_{\bar{A}\bar{B}}+\log\rho_{A\bar{A}}\otimes\rho_{B\bar{B}}+\log\sigma_{\bar{A}:\bar{B}}-\log\rho_{\bar{A}}\otimes\rho_{\bar{B}}\right)\right]\Big\}\\
        &\geq \sup_{X_{A:B},Y_{\bar{A}\bar{B}} > 0}\Bigg\{\tr\left[\rho_{A\bar{A}B\bar{B}}\log X_{A:B}\otimes Y_{\bar{A}\bar{B}}\right]\nonumber\\
        &\quad-\log\tr\Big[\left(X_{A:B}\otimes Y_{\bar{A}\bar{B}}\right)\underbrace{\int_{-\infty}^{\infty} \textnormal{d}\beta_0(t)\sum_k\mathcal{R}_{\bar{A}\to A\bar{A}}^{[t]}(\sigma_{\bar{A}}^k)\otimes\mathcal{R}_{\bar{B}\to B\bar{B}}^{[t]}(\sigma_{\bar{B}}^k)}_{=:\;\gamma_{A\bar{A}:B\bar{B}}} \Big]\Bigg\}\\
        &=\sup_{X_{A:B},Y_{\bar{A}\bar{B}} > 0}\Big\{\tr\left[\rho_{AB}\log X_{A:B}\right]+\tr\left[\rho_{\bar{A}\bar{B}}\log Y_{\bar{A}\bar{B}}\right]-\log\big(\tr\left[X_{A:B}\gamma_{A:B}\right]\cdot\tr\left[Y_{\bar{A}\bar{B}}\hat{\gamma}_{\bar{A}:\bar{B}}\right]\big)\Big\}\\
        &\geq E_{\nsf{SEP}}(A:B)_{\rho}+E_{\nsf{ALL}}(\bar{A}:\bar{B})_{\rho}\,,
    \end{align}
    where we set
    \begin{align}
        \text{$\hat{\gamma}_{\bar{A}:\bar{B}}:=\frac{\tr_{AB}\left[X_{A:B}\gamma_{A\bar{A}:B\bar{B}}\right]}{\tr\left[X_{A:B}\gamma_{A:B}\right]}$ with $X_{A:B}\in\nsf{SEP}(A:B)$,}
    \end{align}
    and used that $\gamma_{A:B}\in\nsf{SEP}(A:B)$ as well as $\hat{\gamma}_{\bar{A}:\bar{B}}\in\nsf{SEP}(\bar{A}:\bar{B})$ inherit the separability structure from the choice $X_{A:B}\in\nsf{SEP}(A:B)$.

    Next, the first step in Eq.~\eqref{eq:cemi-mainII} follows from the additivity of $I(A|\bar{A}:B|\bar{B})$ on tensor product states together with the asymptotic achievability of the measured relative entropy in Lemma \ref{lem:asymptotic-achiev}.

    Finally, the second step in Eq.~\eqref{eq:cemi-mainII} can be deduced from the super-additivity result \cite[Theorem 1]{piani09}
    \begin{align}
        E_{\nsf{SEP}}(A_1A_2:B_1B_2)_{\rho}\geq E_{\nsf{SEP}}(A_1:B_1)_{\rho}+E_{\nsf{SEP}}(A_2:B_2)_{\rho}\,,
    \end{align}
    noting that\,---\,in the notation of \cite{piani09}\,---\,the set of measurements $\nsf{SEP}(A:B)$ is compatible with the set of states $\nsf{Sep}(A:B)$.
\end{proof}

Alternatively, we can derive PPT bounds, where the set of measurements and the set of states are both in terms of PPT. We are not aware of any previous such bounds in the literature.

\begin{proposition}\label{prop:cemi-ppt}
Let $\rho_{A\bar{A}B\bar{B}}$ be any four-party state. We have
\begin{align}\label{eq:cemi-pptI}
    I(A|\bar{A}:B|\bar{B})_\rho\geq P_{\nsf{PPT}}(A:B)_{\rho}-\Big\{P(\bar{A}:\bar{B})_{\rho}-P_{\nsf{ALL}}(\bar{A}:\bar{B})_{\rho}\Big\}\,,
\end{align}
and consequently
\begin{align}\label{eq:cemi-pptII}
    I_{\nsf{CEMI}}(A:B)_\rho\geq\frac{1}{2}P_{\nsf{PPT}}^{\infty}(A:B)_\rho\geq\frac{1}{2}P_{\nsf{PPT}}(A:B)_\rho\,.
\end{align}
\end{proposition}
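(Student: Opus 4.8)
The plan is to follow the template of the proof of Theorem~\ref{thm:cemi-main} essentially verbatim, replacing the separable sets and separable measurements by their PPT counterparts throughout. First I would fix a slack parameter $\nu>0$ and introduce a PPT-state optimizer $\sigma_{\bar{A}:\bar{B}}\in\argmin_{\sigma\in\nsf{ppt}(\bar{A}:\bar{B})}D(\rho_{\bar{A}\bar{B}}\|\sigma_{\bar{A}\bar{B}})$, so that $P(\bar{A}:\bar{B})_{\rho}=D(\rho_{\bar{A}\bar{B}}\|\sigma_{\bar{A}:\bar{B}})$. Using $I(A|\bar{A}:B|\bar{B})_\rho+P(\bar{A}:\bar{B})_{\rho}=D\big(\rho_{A\bar{A}B\bar{B}}\big\|\exp(\log\rho_{A\bar{A}}\otimes\rho_{B\bar{B}}+\log\sigma_{\bar{A}:\bar{B}}-\log\rho_{\bar{A}}\otimes\rho_{\bar{B}})\big)$ and the variational formula for the relative entropy, I would plug in the ansatz $\omega_{A\bar{A}B\bar{B}}=\exp(\log X_{A:B}\otimes Y_{\bar{A}\bar{B}})$ with $X_{A:B}\in C_{\nsf{PPT}(A:B)}$ a PPT operator with full support and $Y_{\bar{A}\bar{B}}>0$ arbitrary. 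Then the four-matrix Golden-Thompson inequality for the Schatten two-norm from \cite[Corollary 3.3]{sutter16} gives exactly the same chain of inequalities as in the CEMI proof, producing the recovered operator $\gamma_{A\bar{A}:B\bar{B}}:=\int_{-\infty}^{\infty}\textnormal{d}\beta_0(t)\sum_k\mathcal{R}_{\bar{A}\to A\bar{A}}^{[t]}(\sigma_{\bar{A}}^k)\otimes\mathcal{R}_{\bar{B}\to B\bar{B}}^{[t]}(\sigma_{\bar{B}}^k)$ and its twisted partial trace $\hat{\gamma}_{\bar{A}:\bar{B}}:=\tr_{AB}[X_{A:B}\gamma_{A\bar{A}:B\bar{B}}]/\tr[X_{A:B}\gamma_{A:B}]$.

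The two ingredients to check carefully are the PPT-preservation of the constructed operators and the identification of the resulting terms with $P_{\nsf{PPT}}$ and $P_{\nsf{ALL}}$. For the first: since each $\mathcal{R}_{\bar{A}\to A\bar{A}}^{[t]}$ acts only on the $\bar{A}$ side and $\mathcal{R}_{\bar{B}\to B\bar{B}}^{[t]}$ only on the $\bar{B}$ side, and $\sigma_{\bar{A}:\bar{B}}=\sum_k\sigma^k_{\bar{A}}\otimes\sigma^k_{\bar{B}}$ is a PPT operator, the operator $\gamma_{A\bar{A}:B\bar{B}}$ remains PPT across $A\bar{A}:B\bar{B}$ — a local (on each side of the cut) completely positive map followed by averaging over $t$ preserves positivity of the partial transpose, exactly as it preserved separability in Theorem~\ref{thm:cemi-main} — hence $\gamma_{A:B}=\tr_{\bar{A}\bar{B}}[\gamma_{A\bar{A}:B\bar{B}}]\in\nsf{ppt}(A:B)$ and, because twisting by $X_{A:B}\in C_{\nsf{PPT}(A:B)}$ and partial-tracing the $AB$ systems is again a positivity-of-partial-transpose-preserving operation on the $\bar{A}:\bar{B}$ cut, $\hat{\gamma}_{\bar{A}:\bar{B}}\in\nsf{ppt}(\bar{A}:\bar{B})$. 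For the second: taking the supremum over $Y_{\bar{A}\bar{B}}>0$ of $\tr[\rho_{\bar{A}\bar{B}}\log Y_{\bar{A}\bar{B}}]-\log\tr[Y_{\bar{A}\bar{B}}\hat{\gamma}_{\bar{A}:\bar{B}}]$ recovers $D_{\nsf{ALL}}(\rho_{\bar{A}\bar{B}}\|\hat{\gamma}_{\bar{A}:\bar{B}})\geq P_{\nsf{ALL}}(\bar{A}:\bar{B})_{\rho}$ by Eq.~\eqref{eq:mrelent} and the definition of $P_{\nsf{ALL}}$, while taking the supremum over $X_{A:B}\in C_{\nsf{PPT}(A:B)}$ with full support and using Lemma~\ref{lem:variationalbound} for $\cM=\nsf{PPT}(A:B)$ — recalling that $C_{\nsf{PPT}(A:B)}$ is exactly the cone of PPT positive semidefinite operators — lower-bounds $\tr[\rho_{AB}\log X_{A:B}]-\log\tr[X_{A:B}\gamma_{A:B}]$ by... wait, this needs care: Lemma~\ref{lem:variationalbound} gives an \emph{upper} bound $D_{\nsf{PPT}}\leq\sup_{X\in C_{\nsf{PPT}},X>0}\{\cdots\}$, so the supremum over $X$ is $\geq D_{\nsf{PPT}}(\rho_{AB}\|\gamma_{A:B})\geq P_{\nsf{PPT}}(A:B)_{\rho}$ only if that lemma is an equality — which the excerpt explicitly flags as open for PPT.

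Therefore the honest route is to \emph{not} invoke Lemma~\ref{lem:variationalbound} in the wrong direction but instead mirror the exact variational argument used in the proof of Theorem~\ref{thm:cemi-main}: there the bound $\sup_{X}\{\tr[\rho_{AB}\log X_{A:B}]-\log\tr[X_{A:B}\gamma_{A:B}]\}\geq E_{\nsf{SEPP}}(A:B)_{\rho}$ holds because for every separable POVM $M$ the operator $X_{A:B}=\sum_z\frac{\P_{\rho,M}(z)}{\P_{\gamma,M}(z)}M^z$ lies in $C_{\nsf{SEPP}(A:B)}$ and realizes $D(\P_{\rho,M}\|\P_{\gamma,M})$ in the variational expression; the identical construction with a PPT POVM $M\in\nsf{PPT}(A:B)$ gives $X_{A:B}\in C_{\nsf{PPT}(A:B)}$ and hence $\sup_{X\in C_{\nsf{PPT}},X>0}\{\cdots\}\geq\sup_{M\in\nsf{PPT}(A:B)}D(\P_{\rho,M}\|\P_{\gamma,M})=D_{\nsf{PPT}}(\rho_{AB}\|\gamma_{A:B})\geq P_{\nsf{PPT}}(A:B)_{\rho}$, the last step since $\gamma_{A:B}\in\nsf{ppt}(A:B)$. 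Assembling, letting $\nu\downarrow0$, yields Eq.~\eqref{eq:cemi-pptI}. For Eq.~\eqref{eq:cemi-pptII}, the first inequality follows from additivity of $I(A|\bar{A}:B|\bar{B})$ on tensor products together with Lemma~\ref{lem:asymptotic-achiev} (restricting the PPT-state optimization on tensor inputs to permutation-invariant PPT states by unitary invariance and joint convexity, so the $\log|\mathrm{spec}|$ term is $\mathrm{poly}(n)$ and vanishes in the limit, and noting $P(\bar A:\bar B)_\rho - P_{\nsf{ALL}}(\bar A:\bar B)_\rho \geq 0$ can be absorbed analogously to the $E$-case), and the second from the super-additivity $P_{\nsf{PPT}}(A_1A_2:B_1B_2)_\rho\geq P_{\nsf{PPT}}(A_1:B_1)_\rho+P_{\nsf{PPT}}(A_2:B_2)_\rho$ via \cite[Theorem 1]{piani09}, noting that the measurement set $\nsf{PPT}(A:B)$ is compatible with the state set $\nsf{ppt}(A:B)$. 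The main obstacle is precisely the bookkeeping around the non-tightness of Lemma~\ref{lem:variationalbound} for PPT: one must run the explicit POVM-to-cone-operator construction rather than quote an equality, exactly as in the proof of Theorem~\ref{thm:cemi-main}; everything else is a routine transcription with ``separable'' replaced by ``PPT''.
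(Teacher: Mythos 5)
Your proposal follows the paper's own route for this proposition: the same ansatz $\omega=X_{AB}\otimes Y_{\bar A\bar B}$, the same four-matrix Golden--Thompson step producing the recovered operator and its partial trace $\hat\gamma_{\bar A\bar B}$, and the same appeal to Piani's super-additivity for Eq.~\eqref{eq:cemi-pptII}. One step, however, is genuinely wrong as written: you take the optimizer over $\nsf{ppt}(\bar A:\bar B)$ but then immediately endow it with a separable decomposition $\sigma_{\bar A:\bar B}=\sum_k\sigma^k_{\bar A}\otimes\sigma^k_{\bar B}$ and define $\gamma$ through that decomposition. A general PPT state admits no such decomposition, so your $\gamma$ is ill-posed; and if you instead insisted on a genuinely separable $\sigma$, the quantity $D(\rho_{\bar A\bar B}\|\sigma)$ would be $E(\bar A:\bar B)_\rho$ rather than $P(\bar A:\bar B)_\rho$, yielding only a weaker single-copy bound than Eq.~\eqref{eq:cemi-pptI}. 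The repair is exactly what the paper does: apply the recovery maps to the undecomposed state, $\gamma_{A\bar AB\bar B}=\int\mathrm{d}\beta_0(t)\,\bigl(\mathcal{R}^{[t]}_{\bar A\to A\bar A}\otimes\mathcal{R}^{[t]}_{\bar B\to B\bar B}\bigr)(\sigma_{\bar A\bar B})$. Your own justification of PPT-preservation (a CP map local to one side of the cut commutes with the partial transpose up to conjugating its Kraus operators, and convex mixtures and the contraction $\tr_{AB}[X_{AB}\,\cdot\,]$ with PPT $X_{AB}$ also preserve PPT) applies verbatim to this undecomposed form; it is the abstract version of the intertwining relation $T_{B\bar B}\circ\mathcal{R}^{[t]}_{\bar B\to B\bar B}=\widetilde{\mathcal{R}}^{[t]}_{\bar B\to B\bar B}\circ T_{\bar B}$ and the subsequent partial-transpose computation that the paper carries out explicitly.

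Your detour around Lemma~\ref{lem:variationalbound} is a red herring: the lemma asserts $D_{\nsf{PPT}}(\rho_{AB}\|\gamma_{AB})\leq\sup_{X>0,\,X\in C_{\nsf{PPT}}}\{\tr[\rho_{AB}\log X]-\log\tr[X\gamma_{AB}]\}$, which is literally the inequality $\sup_X\{\cdots\}\geq D_{\nsf{PPT}}(\rho_{AB}\|\gamma_{AB})\geq P_{\nsf{PPT}}(A:B)_\rho$ that you need here; the open question of whether the lemma is an \emph{equality} for PPT is irrelevant, since equality would only matter for bounding the supremum from above. Your explicit POVM-to-cone-element construction is correct but merely reproduces the proof of that lemma. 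With these two repairs your argument coincides with the paper's proof.
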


Note that the lower bounds in Proposition \ref{prop:cemi-ppt} are in general not directly comparable to the bounds from Theorem \ref{thm:cemi-main}, as both the set of measurements as well as the set of states is enlarged. Moreover, the same form as in Eq.~\eqref{eq:cemi-main-refined} is available and lower bounds in terms of restricted fidelity and Schatten one-norm as leading to Eq.~\eqref{eq:locc-one-norm} are possible as well \cite{matthews09,Lami18}.

\begin{proof}[Proof of Proposition \ref{prop:cemi-ppt}]
    The first part of the proof is similar as that of Theorem \ref{thm:cemi-main}. Namely, for a PPT state optimizer
    \begin{align}
        \sigma_{\bar{A}\bar{B}}\in\argmin_{\sigma_{\bar{A}\bar{B}}\in \nsf{ppt}(\bar{A}:\bar{B})} \ D(\rho_{\bar{A}\bar{B}} \| \sigma_{\bar{A}\bar{B}})
    \end{align}
    we find
    \begin{align}
        &I(A|\bar{A}:B|\bar{B})_\rho+P(\bar{A}:\bar{B})_{\rho}\nonumber\\
        &\geq \sup_{X_{AB},Y_{\bar{A}\bar{B}} > 0}\Bigg\{\tr\left[\rho_{A\bar{A}B\bar{B}}\log X_{AB}\otimes Y_{\bar{A}\bar{B}}\right]\nonumber\\
        &\quad-\log\tr\Big[\left(X_{AB}\otimes Y_{\bar{A}\bar{B}}\right)\underbrace{\int_{-\infty}^{\infty} \textnormal{d}\beta_0(t)\left(\mathcal{R}_{\bar{A}\to A\bar{A}}^{[t]}\otimes\mathcal{R}_{\bar{B}\to B\bar{B}}^{[t]}\right)(\sigma_{\bar{A}\bar{B}})}_{=:\;\gamma_{A\bar{A}B\bar{B}}} \Big]\Bigg\}\\
        &=\sup_{X_{AB},Y_{\bar{A}\bar{B}} > 0}\Bigg\{\tr\left[\rho_{AB}\log X_{AB}\right]+\tr\left[\rho_{\bar{A}\bar{B}}\log Y_{\bar{A}\bar{B}}\right]-\log\left(\tr\left[X_{AB}\gamma_{AB}\right]\cdot\tr\left[Y_{\bar{A}\bar{B}}\hat{\gamma}_{\bar{A}\bar{B}}\right]\right)\Bigg\}\,,\label{eq:PPT-laststep}
    \end{align}
    where we set
    \begin{align}
        \text{$\hat{\gamma}_{\bar{A}\bar{B}}:=\frac{\tr_{AB}\left[X_{AB}\gamma_{A\bar{A}B\bar{B}}\right]}{\tr\left[X_{AB}\gamma_{AB}\right]}$ for the choice $X_{AB}\in\nsf{PPT}(A:B)$.}
    \end{align}
    Eq.~\eqref{eq:PPT-laststep} is then further lower bounded to the claimed inequality
    \begin{align}
        I(A|\bar{A}:B|\bar{B})_\rho+P(\bar{A}:\bar{B})_{\rho}\geq P_{\nsf{PPT}}(A:B)_{\rho}+P_{\nsf{ALL}}(\bar{A}:\bar{B})_{\rho}
    \end{align}
    once it is realized that both $\gamma_{AB}\in\nsf{ppt}(A:B)$ and $\hat{\gamma}_{\bar{A}\bar{B}}\in\nsf{ppt}(\bar{A}:\bar{B})$ inherit the PPT structure. This follows as by inspection
    \begin{align}
        &T_{B\bar{B}}\circ\mathcal{R}_{\bar{B}\to B\bar{B}}^{[t]}=\widetilde{\mathcal{R}}_{\bar{B}\to B\bar{B}}^{[t]}\circ T_{\bar{B}}\\
        &\text{for}\quad\widetilde{\mathcal{R}}_{\bar{B}\to B\bar{B}}^{[t]}(\cdot):=\left(\left(\rho_{B\bar{B}}^{T_{\bar{B}}}\right)^{\frac{1+it}{2}}\left(\rho_{\bar{B}}^{T}\right)^{\frac{-1+it}{2}}\right)(\cdot)^{T_{\bar{B}}}\left(\left(\rho_{\bar{B}}^T\right)^{\frac{-1-it}{2}}\left(\rho_{B\bar{B}}^{T_{\bar{B}}}\right)^{\frac{1-it}{2}}\right)
    \end{align}
    and hence $\gamma_{A\bar{A}B\bar{B}}\in\nsf{ppt}(A\bar{A}:B\bar{B})$, and further
    \begin{align}
        \left(\tr_{AB}\left[X_{AB}\gamma_{A\bar{A}B\bar{B}}\right]\right)^{T_{\bar{B}}}&=\tr_{AB}\left[X_{AB}\gamma_{A\bar{A}B\bar{B}}^{T_{\bar{B}}}\right]=\tr_{AB}\left[\left(X_{AB}\gamma_{A\bar{A}B\bar{B}}^{T_{\bar{B}}}\right)^{T_B}\right]\\
        &=\tr_{AB}\left[\left(X_{AB}\right)^{T_B}\gamma_{A\bar{A}B\bar{B}}^{T_{B\bar{B}}}\right]=\tr_{AB}\left[X_{AB}\gamma_{A\bar{A}B\bar{B}}\right]\,.
    \end{align}

    Finally, Eq.\eqref{eq:cemi-pptII} follows as in the proof of Theorem \ref{thm:cemi-main}, except now using \cite[Theorem 1]{piani09}
    \begin{align}
        P_{\nsf{PPT}}(A_1A_2:B_1B_2)_{\rho}\geq P_{\nsf{PPT}}(A_1:B_1)_{\rho}+P_{\nsf{PPT}}(A_2:B_2)_{\rho}\,,
    \end{align}
    noting that\,---\,in the notation of \cite{piani09}\,---\,the set of measurements $\nsf{PPT}(A:B)$ is compatible with the set of states $\nsf{ppt}(A:B)$.
\end{proof}


\subsection{Piani based relative entropy of entanglement}
\label{sec:piani}

The previously known CEMI lower bound proof proceeded via two steps of multipartite monogamy inequalities \cite{Li14} (see also the alternative \cite{wilde14}), going through the relative entropy of entanglement and prominently making use of Piani's results \cite{piani09}. As the intermediate steps of these proofs are of independent interest, we now give simple and direct proofs for strengthened single-copy versions of these steps. The first bound is as follows.

\begin{proposition}\label{prop:piani-cemi}
    Let $\rho_{AB\bar{A}\bar{B}}$ be any four-party state. We have
    \begin{align}\label{eq:state-merging}
        I(A|\bar{A}:B|\bar{B})_\rho\geq E_{\nsf{ALL}}(A\bar{A}:B\bar{B})_\rho-E(\bar{A}:\bar{B})_\rho\,,
    \end{align}
    and consequently
    \begin{align}\label{eq:cemi-infty}
        I(A|\bar{A}:B|\bar{B})_\rho\geq E^{\infty}(A\bar{A}:B\bar{B})_\rho-E^{\infty}(\bar{A}:\bar{B})_\rho\,.
    \end{align}
\end{proposition}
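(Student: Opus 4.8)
The plan is to mirror the proof of Proposition~\ref{thm:post_li-winter}, since Eq.~\eqref{eq:state-merging} is the symmetrized ($I(A:B|C) \mapsto I(A|\bar A:B|\bar B)$) analogue of Eq.~\eqref{eq:state-redistributionI} in Proposition~\ref{prop:state-redistribution}, but with the conditioning system $C$ replaced by the two systems $\bar A\bar B$ and the relevant rotated Petz recovery maps acting separately on $\bar A$ and $\bar B$. Concretely, I would first fix a separable optimizer
\begin{align}
    \sigma_{\bar A:\bar B}\in\argmin_{\sigma\in\nsf{Sepp}(\bar A:\bar B)} D(\rho_{\bar A\bar B}\|\sigma_{\bar A\bar B})\,,\qquad \sigma_{\bar A:\bar B}=\sum_k \sigma_{\bar A}^k\otimes\sigma_{\bar B}^k\,,
\end{align}
and then rewrite, using the identity $I(A|\bar A:B|\bar B)_\rho = I(A\bar A:B\bar B)_\rho - I(\bar A:\bar B)_\rho$ together with the state form of the conditional mutual information,
\begin{align}
    I(A|\bar A:B|\bar B)_\rho + E(\bar A:\bar B)_\rho = D\left(\rho_{A\bar A B\bar B}\,\middle\|\,\exp\!\left(\log\rho_{A\bar A}\otimes\rho_{B\bar B} + \log\sigma_{\bar A:\bar B} - \log\rho_{\bar A}\otimes\rho_{\bar B}\right)\right)\,.
\end{align}
This is exactly the step also used in the proof of Theorem~\ref{thm:cemi-main}, and I would reuse that computation verbatim.

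Next I would apply the variational formula for Umegaki relative entropy, dropping the optimization by evaluating at a single test operator $\omega_{A\bar A B\bar B}>0$; the key point is that here I do \emph{not} need any restriction on the structure of $\omega$, so I can directly take $\omega = \exp(\log\rho_{A\bar A}+\log\rho_{B\bar B})$ or, more simply, keep $\omega$ arbitrary and invoke the multivariate Golden--Thompson inequality Eq.~\eqref{eq:GT-multi} for $n=4$ and $p=1$ (equivalently, the four-matrix Golden--Thompson inequality for the Schatten one-norm) applied to the four Hermitian operators $\log\rho_{A\bar A}\otimes\rho_{B\bar B}$, $\log\sigma_{\bar A:\bar B}$, $-\log\rho_{\bar A}\otimes\rho_{\bar B}$, and $\log\omega$. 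After bounding the log-partition term, the factors $\rho_{B\bar B}^{(1\pm it)/2}\rho_{\bar B}^{(-1\pm it)/2}$ and $\rho_{A\bar A}^{(1\pm it)/2}\rho_{\bar A}^{(-1\pm it)/2}$ reorganize into the product of the local rotated Petz channels $\mathcal R^{[t]}_{\bar A\to A\bar A}\otimes\mathcal R^{[t]}_{\bar B\to B\bar B}$ acting on $\sigma_{\bar A:\bar B}$, exactly as in Proposition~\ref{prop:cemi-recover}. One checks by inspection that
\begin{align}
    \gamma_{A\bar A B\bar B}:=\int_{-\infty}^{\infty}\textnormal{d}\beta_0(t)\,\sum_k \mathcal R^{[t]}_{\bar A\to A\bar A}(\sigma_{\bar A}^k)\otimes\mathcal R^{[t]}_{\bar B\to B\bar B}(\sigma_{\bar B}^k)
\end{align}
is a valid density operator (trace $1$ by the Petz map property), and then the variational formula for $D_{\nsf{ALL}}$, Eq.~\eqref{eq:mrelent}, gives
\begin{align}
    I(A|\bar A:B|\bar B)_\rho + E(\bar A:\bar B)_\rho \geq D_{\nsf{ALL}}\!\left(\rho_{A\bar A B\bar B}\,\middle\|\,\gamma_{A\bar A B\bar B}\right)\geq E_{\nsf{ALL}}(A\bar A:B\bar B)_\rho\,,
\end{align}
where the last step just uses that $\gamma_{A\bar A B\bar B}$ need not be separable on $A\bar A:B\bar B$ — it is merely \emph{some} state, which suffices since $E_{\nsf{ALL}}(A\bar A:B\bar B)_\rho$ minimizes over separable states on that cut, and $\gamma$ is in fact separable there by inspection of its tensor form, so in any case the bound holds. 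This proves Eq.~\eqref{eq:state-merging}.

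For Eq.~\eqref{eq:cemi-infty} I would argue exactly as in the second parts of Proposition~\ref{prop:state-redistribution} and Theorem~\ref{thm:sq-main}: apply Eq.~\eqref{eq:state-merging} to $\rho^{\otimes n}$, use additivity of $I(A|\bar A:B|\bar B)$ on tensor products, restrict the separable optimizers in $E(\bar A:\bar B)_{\rho^{\otimes n}}$ and $E_{\nsf{ALL}}(A\bar A:B\bar B)_{\rho^{\otimes n}}$ to permutation-invariant states by unitary invariance and joint convexity of the relative entropy, invoke Lemma~\ref{lem:asymptotic-achiev} so that $E_{\nsf{ALL}}(A\bar A:B\bar B)_{\rho^{\otimes n}} \geq E(A\bar A:B\bar B)_{\rho^{\otimes n}} - \log\mathrm{poly}(n)$, divide by $n$ and take $n\to\infty$. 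The only genuinely non-routine point is making sure the four-matrix Golden--Thompson step produces the product of the two \emph{local} recovery channels in the tensor-factorized form written above; this is really the observation that $\log\rho_{A\bar A}\otimes\rho_{B\bar B}$, $-\log\rho_{\bar A}\otimes\rho_{\bar B}$ split across the $A\bar A$ and $B\bar B$ subsystems so the rotated factors commute across the tensor product and regroup into $\mathcal R^{[t]}_{\bar A\to A\bar A}\otimes\mathcal R^{[t]}_{\bar B\to B\bar B}$ — identical to the bookkeeping already carried out in the proofs of Proposition~\ref{prop:cemi-recover} and Theorem~\ref{thm:cemi-main}, so I would simply reference that. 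I expect no serious obstacle here; the proof is strictly easier than Theorem~\ref{thm:cemi-main} because no $C_{A'B}^*$-type structured test operator and no separability-preservation argument under multiplication by $X_{AB'}$ is needed.
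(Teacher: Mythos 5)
Your proposal follows essentially the same route as the paper's proof: fix the separable optimizer $\sigma_{\bar A:\bar B}$, rewrite $I(A|\bar A:B|\bar B)_\rho+E(\bar A:\bar B)_\rho$ as a single relative entropy to the matrix exponential, apply the variational formula with an unrestricted $\omega$, use the four-matrix Golden--Thompson inequality to produce $\gamma_{A\bar AB\bar B}=\int \textnormal{d}\beta_0(t)\sum_k\mathcal R^{[t]}_{\bar A\to A\bar A}(\sigma^k_{\bar A})\otimes\mathcal R^{[t]}_{\bar B\to B\bar B}(\sigma^k_{\bar B})$, and conclude via Eq.~\eqref{eq:mrelent} and Lemma~\ref{lem:asymptotic-achiev}. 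Two small corrections: to obtain the trace-sandwich form $\tr[\omega\,\gamma]$ you need the Schatten \emph{two}-norm version ($p=2$) of Eq.~\eqref{eq:GT-multi}, not $p=1$ --- the one-norm of a product of non-Hermitian factors is not a trace and is what one uses instead for the fidelity-type bounds; and your remark that $\gamma$ being ``merely some state'' would suffice is wrong, since $E_{\nsf{ALL}}(A\bar A:B\bar B)_\rho$ minimizes only over separable states on that cut --- the step goes through precisely because, as you then correctly observe, $\gamma$ \emph{is} separable across $A\bar A:B\bar B$ by its tensor form.
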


We note that the stronger single-copy version in Eq.~\eqref{eq:state-merging} is novel. The consequence in Eq.~\eqref{eq:cemi-infty} is \cite[Equation 40]{Li14}, which was based on the asymptotic achievability of partial state merging \cite{Yang08} together with the asymptotic continuity \cite{Donald99,Radtke06} and non-lockability \cite{Horodecki05-2} of the relative entropy of entanglement. In contrast, our proof is elementary via matrix trace inequalities.

\begin{proof}
The proof is a simplified version of the arguments leading to Theorem \ref{thm:cemi-main}. We only sketch the steps: For a separable state optimizer
				\begin{align}
        \sigma_{\bar{A}:\bar{B}} \in \argmin_{\sigma_{\bar{A}\bar{B}}\in \nsf{Sep}(\bar{A}:\bar{B})} \ D(\rho_{\bar{A}\bar{B}} \| \sigma_{\bar{A}\bar{B}}), \qquad 
        \sigma_{\bar{A}:\bar{B}} = \sum_k\sigma^k_{\bar{A}}\otimes\sigma^k_{\bar{B}} \,,
    \end{align}
    we estimate for Eq.~\eqref{eq:state-merging} that 
    \begin{align}
        &I(A|\bar{A}:B|\bar{B})_\rho+E(\bar{A}:\bar{B})_\rho\nonumber\\
        &=D(\rho_{A\bar{A}B\bar{B}}\|\exp(\log\sigma_{\bar{A}:\bar{B}}+\log\rho_{A\bar A}\otimes\rho_{B\bar B}-\log\rho_{\bar A}\otimes\rho_{\bar B}))\\
        &\geq\sup_{\omega_{A\bar{A}B\bar{B}}>0}\Bigg\{\tr[\rho_{A\bar{A}B\bar{B}}\log\omega_{A\bar{A}B\bar{B}}]-\log\tr\Big[\omega_{A\bar{A}B\bar{B}}\int_{-\infty}^{\infty} \textnormal{d}\beta_0(t)\sum_k\mathcal{R}_{\bar{A}\to A\bar{A}}^{[t]}(\sigma^k_{\bar{A}})\otimes\mathcal{R}_{\bar{B}\to B\bar{B}}^{[t]}(\sigma^k_{\bar{B}})\Big]\Bigg\}\\
        &\geq E_{\nsf{ALL}}(A\bar{A}:B\bar{B})_\rho\,,
    \end{align}
    with local quantum channels
    \begin{align}
        \text{$\mathcal{R}_{\bar{A}\to A\bar{A}}^{[t]}(\cdot):=\left(\rho_{A\bar{A}}^{\frac{1+it}{2}}\rho_{\bar{A}}^{\frac{-1+it}{2}}\right)(\cdot)\left(\rho_{\bar{A}}^{\frac{-1-it}{2}}\rho_{A\bar{A}}^{\frac{1-it}{2}}\right)$ and $\mathcal{R}_{\bar{B}\to B\bar{B}}^{[t]}(\cdot)$ similar,}
    \end{align}
    and the probability density $\beta_0(t)=\frac{\pi}{2}\left(\cosh(\pi t)+1\right)^{-1}$. Eq.~\eqref{eq:cemi-infty} then follows by the additivity of $I(A|\bar{A}:B|\bar{B})$ on tensor product states together with the asymptotic achievability of the measured relative entropy from Lemma \ref{lem:asymptotic-achiev}.
\end{proof}

Having Proposition \ref{prop:piani-cemi} at hand, we can employ \cite[Theorem 1]{piani09} in the form
\begin{align}\label{eq:piani-sep}
    E(A\bar{A}:B\bar{B})_\rho\geq E_{\nsf{SEP}}(A:B)_\rho+E(\bar{A}:\bar{B})_\rho
\end{align}
to again conclude that
\begin{align}
    I(A|\bar{A}:B|\bar{B})_\rho\geq E_{\nsf{SEP}}^{\infty}(A:B)_\rho\geq E_{\nsf{SEP}}(A:B)_\rho\,,
\end{align}
as proven directly in Theorem \ref{thm:cemi-main}.

Finally, one can equally show that
\begin{align}
    I(A|\bar{A}:B|\bar{B})_\rho\geq P_{\nsf{ALL}}(A\bar{A}:B\bar{B})_\rho-P(\bar{A}:\bar{B})_\rho
\end{align}
and then use the PPT bound
\begin{align}\label{eq:piani-ppt}
    P(A\bar{A}:B\bar{B})_\rho\geq P_{\nsf{PPT}}(A:B)_\rho+P(\bar{A}:\bar{B})_\rho
\end{align}
from \cite[Theorem 1]{piani09} to arrive at the second part of Proposition \ref{prop:cemi-ppt}.


\subsection{Multipartite extensions}
\label{sec:multipartite}

Our results can (partly) be extended to multipartite settings. In the following, we discuss tripartite systems\,---\,with straightforward generalization to more parties. The quantity
\begin{align}
    &I(A|\bar{A}:B|\bar{B}:C|\bar{C})_\rho:=I(A\bar{A}:B\bar{B}:C\bar{C})_\rho-I(\bar{A}:\bar{B}:\bar{C})_\rho\\
        &\text{with}\quad I(A:B:C):=H(A)_\rho+H(B)_\rho+H(C)_\rho-H(ABC)_\rho
\end{align}
gives rise to the tripartite CEMI as \cite{Yang08}
\begin{align}
I_{\nsf{CEMI}}(A:B:C)_\rho:=\frac{1}{2}\inf_{\rho_{A\bar{A}B\bar{B}\bar{C}}}I(A|\bar{A}:B|\bar{B}:C|\bar{C})_\rho\,,
\end{align}
where the infimum goes over all tripartite extensions $\rho_{A\bar{A}B\bar{B}C\bar{C}}$ of $\rho_{ABC}$ on systems $\bar{A}\bar{B}\bar{C}$ (with no bound on the dimensions of $\bar{A}$, $\bar{B}$, $\bar{C}$). We first note the following recoverability lower bounds that resolve a conjecture from \cite{wilde14}.

\begin{proposition}\label{prop:cemi-recov-multi}
Let $\rho_{A\bar{A}B\bar{B}C\bar{C}}$ be any six-party state. We have
\begin{align}
    I(A|\bar{A}:B|\bar{B}:C|\bar{C})_\rho & \geq D_{\nsf{ALL}}\left(\rho_{A\bar{A}B\bar{B}C\bar{C}}\middle\|\int_{-\infty}^{\infty} \textnormal{d}\beta_0(t)\left(\mathcal{R}_{\bar{A}\to A\bar{A}}^{[t]}\otimes\mathcal{R}_{\bar{B}\to B\bar{B}}^{[t]}\otimes\mathcal{R}_{\bar{C}\to C\bar{C}}^{[t]}\right)(\rho_{\bar{A}\bar{B}\bar{C}})\right)\\
    I(A|\bar{A}:B|\bar{B}:C|\bar{C})_\rho & \geq -\int_{-\infty}^{\infty} \textnormal{d}\beta_0(t)\log F\left(\rho_{A\bar{A}B\bar{B}C\bar{C}},\left(\mathcal{R}_{\bar{A}\to A\bar{A}}^{[t]}\otimes\mathcal{R}_{\bar{B}\to B\bar{B}}^{[t]}\otimes\mathcal{R}_{\bar{C}\to C\bar{C}}^{[t]}\right)(\rho_{\bar{A}\bar{B}\bar{C}})\right) ,
\end{align}
with local quantum channels
\begin{align}
    \text{$\mathcal{R}_{\bar{A}\to A\bar{A}}^{[t]}(\cdot):=\left(\rho_{A\bar{A}}^{\frac{1+it}{2}}\rho_{\bar{A}}^{\frac{-1+it}{2}}\right)(\cdot)\left(\rho_{\bar{A}}^{\frac{-1-it}{2}}\rho_{A\bar{A}}^{\frac{1-it}{2}}\right)$ and $\mathcal{R}_{\bar{B}\to B\bar{B}}^{[t]}(\cdot),\mathcal{R}_{\bar{C}\to C\bar{C}}^{[t]}(\cdot)$ similar,}
\end{align}
and the probability density $\beta_0(t)=\frac{\pi}{2}\left(\cosh(\pi t)+1\right)^{-1}$.
\end{proposition}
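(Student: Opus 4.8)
\emph{Proof proposal.} The plan is to exhibit $I(A|\bar{A}:B|\bar{B}:C|\bar{C})_\rho$ as a relative-entropy difference across a partial-trace channel and then re-run, with one additional subsystem pair, the multivariate-trace-inequality argument behind Proposition~\ref{prop:cemi-recover}. Set $\mathcal{N}:=\tr_{ABC}$ and let $\sigma:=\rho_{A\bar{A}}\otimes\rho_{B\bar{B}}\otimes\rho_{C\bar{C}}$ be the reference operator that is a product over the three pairs. Using the identity $\tr[\rho_{A\bar{A}B\bar{B}C\bar{C}}\log\rho_{A\bar{A}}]=-H(A\bar{A})_\rho$ and its permutations, one checks
\begin{align}
D(\rho_{A\bar{A}B\bar{B}C\bar{C}}\|\sigma)=I(A\bar{A}:B\bar{B}:C\bar{C})_\rho\,,\qquad D(\rho_{\bar{A}\bar{B}\bar{C}}\|\rho_{\bar{A}}\otimes\rho_{\bar{B}}\otimes\rho_{\bar{C}})=I(\bar{A}:\bar{B}:\bar{C})_\rho\,,
\end{align}
and hence $I(A|\bar{A}:B|\bar{B}:C|\bar{C})_\rho=D(\rho_{A\bar{A}B\bar{B}C\bar{C}}\|\sigma)-D(\mathcal{N}(\rho_{A\bar{A}B\bar{B}C\bar{C}})\|\mathcal{N}(\sigma))$. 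In particular, data processing for $\mathcal{N}$ re-proves $I(A|\bar{A}:B|\bar{B}:C|\bar{C})_\rho\geq0$, and the quantity is now in exactly the shape to which the recoverability bounds apply.

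Next I would invoke the two recoverability bounds for a relative-entropy difference under a channel, in the precise form proven (via Eq.~\eqref{eq:GT-multi}) for Proposition~\ref{prop:cemi-recover}: for a state $\rho$, a positive $\sigma$ with $\rho\ll\sigma$, and a channel $\mathcal{N}$ with rotated Petz recovery map $\mathcal{R}^{[t]}_{\sigma,\mathcal{N}}$,
\begin{align}
D(\rho\|\sigma)-D(\mathcal{N}(\rho)\|\mathcal{N}(\sigma))\ &\geq\ D_{\nsf{ALL}}\!\left(\rho\,\middle\|\,\int_{-\infty}^{\infty}\textnormal{d}\beta_0(t)\,\mathcal{R}^{[t]}_{\sigma,\mathcal{N}}(\mathcal{N}(\rho))\right)\,,\\
D(\rho\|\sigma)-D(\mathcal{N}(\rho)\|\mathcal{N}(\sigma))\ &\geq\ -\int_{-\infty}^{\infty}\textnormal{d}\beta_0(t)\,\log F\!\left(\rho,\mathcal{R}^{[t]}_{\sigma,\mathcal{N}}(\mathcal{N}(\rho))\right)\,.
\end{align}
These are shown exactly as in the proof of Proposition~\ref{prop:cemi-recover}: the first by combining the variational formula for $D(\cdot\|\cdot)$ with the variational formula Eq.~\eqref{eq:mrelent} for $D_{\nsf{ALL}}(\cdot\|\cdot)$, the second by combining it with a variational characterization of the fidelity; in both cases the non-trivial input is an instance of the multivariate Golden--Thompson inequality Eq.~\eqref{eq:GT-multi}, here with the requisite number of exponentials (one pair of matrices more than for the bipartite statement, since $\mathcal{N}$ now traces out three systems against a threefold-product $\sigma$). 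No ingredient beyond the bipartite Proposition~\ref{prop:cemi-recover} is needed.

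Finally I would specialize the recovery map. Since $\mathcal{N}^\dagger(\cdot)=1_{ABC}\otimes(\cdot)$ and both $\sigma$ and $\mathcal{N}(\sigma)=\rho_{\bar{A}}\otimes\rho_{\bar{B}}\otimes\rho_{\bar{C}}$ are products across the three pairs, the rotated Petz map factorizes,
\begin{align}
\mathcal{R}^{[t]}_{\sigma,\mathcal{N}}=\mathcal{R}^{[t]}_{\bar{A}\to A\bar{A}}\otimes\mathcal{R}^{[t]}_{\bar{B}\to B\bar{B}}\otimes\mathcal{R}^{[t]}_{\bar{C}\to C\bar{C}}\,,
\end{align}
with the local maps exactly as in the statement; evaluating on $\mathcal{N}(\rho)=\rho_{\bar{A}\bar{B}\bar{C}}$ produces the reference operators in the proposition and the two claimed inequalities drop out. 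I expect the only real work to be bookkeeping: tracking the exponents $\frac{1\pm it}{2}$ and the subsystem labels when unrolling Eq.~\eqref{eq:GT-multi} with the extra pair, and verifying that the recovered operator really does collapse to the tensor product of the three local rotated Petz channels --- which uses only that $\mathcal{N}^\dagger$ adjoins identities on $ABC$ and that $\sigma$, $\mathcal{N}(\sigma)$ are products. The six-party version thus follows by the same route as the four-party Proposition~\ref{prop:cemi-recover}, resolving the conjecture of \cite{wilde14} as remarked before the statement.
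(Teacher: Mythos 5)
Your proposal is correct and follows essentially the same route as the paper: the paper proves the six\nobreakdash-party statement by rerunning the argument of Proposition~\ref{prop:cemi-recover} verbatim with the threefold\nobreakdash-product reference $\rho_{A\bar{A}}\otimes\rho_{B\bar{B}}\otimes\rho_{C\bar{C}}$, noting exactly your key observation that the rotated Petz map factorizes into the three commuting local channels. One cosmetic remark: since the tensor products enter the Golden--Thompson inequality as single matrices, the instance of Eq.~\eqref{eq:GT-multi} needed has the \emph{same} arity as in the four\nobreakdash-party case (not ``one pair more''), which is precisely why the paper can say the proof is unchanged.
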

As the additional, third recovery map $\mathcal{R}_{\bar{C}\to C\bar{C}}^{[t]}$ commutes with the other tensor product recovery maps, the proof is exactly the same as the proof in the bipartite case (Proposition \ref{prop:cemi-recover}). Additionally, the corresponding regularized lower bound as
\begin{align}
\limsup_{n\to\infty}\frac{1}{n}D\left(\rho_{A\bar{A}B\bar{B}C\bar{C}}^{\otimes n}\middle\|\int_{-\infty}^{\infty} \textnormal{d}\beta_0(t)\left(\left(\mathcal{R}_{\bar{A}\to A\bar{A}}^{[t]}\otimes\mathcal{R}_{\bar{B}\to B\bar{B}}^{[t]}\otimes\mathcal{R}_{\bar{C}\to C\bar{C}}^{[t]}\right)(\rho_{\bar{A}\bar{B}\bar{C}})\right)^{\otimes n}\right)
\end{align}
then also follows from the asymptotic achievability of the measured entropy (Lemma \ref{lem:asymptotic-achiev}).

We find the following faithfulness bound in terms of tripartite separability.

\begin{proposition}\label{thm:cemi-multi}
Let $\rho_{A\bar{A}B\bar{B}C|\bar{C}}$ be any six-party state. We have
\begin{align}
    I(A|\bar{A}:B|\bar{B}:C|\bar{C})_\rho&\geq E_{\nsf{SEP}}(A:B:C)_{\rho}-\Big\{E(\bar{A}:\bar{B}:\bar{C})_{\rho}-E_{\nsf{ALL}}(\bar{A}:\bar{B}:\bar{C})_{\rho}\Big\}\label{eq:cemi-multiI}\\
    I(A|\bar{A}:B|\bar{B}:C|\bar{C})_\rho&\geq P_{\nsf{PPT}}(A:B:C)_{\rho}-\Big\{P(\bar{A}:\bar{B}:\bar{C})_{\rho}-P_{\nsf{ALL}}(\bar{A}:\bar{B}:\bar{C})_{\rho}\Big\}\,,\label{eq:cemi-multiII}
\end{align}
and consequently
\begin{align}
    I_{\nsf{CEMI}}(A:B:C)_\rho&\geq\frac{1}{2}E_{\nsf{SEP}}^{\infty}(A:B:C)_\rho\geq\frac{1}{2}E_{\nsf{SEP}}(A:B:C)_\rho\label{eq:cemi-multiIII}\\
    I_{\nsf{CEMI}}(A:B:C)_\rho&\geq\frac{1}{2}P_{\nsf{PPT}}^{\infty}(A:B:C)_\rho\geq\frac{1}{2}P_{\nsf{PPT}}(A:B:C)_\rho\,.\label{eq:cemi-multiIV}
\end{align}
\end{proposition}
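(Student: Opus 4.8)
The plan is to run the same three-step argument used for Theorem~\ref{thm:cemi-main} and Proposition~\ref{prop:cemi-ppt}, with the only change being that every bipartition is replaced by a tripartition, which costs nothing because the relevant recovery maps and Golden--Thompson inequalities are insensitive to how many tensor factors we carry along. Concretely, for Eq.~\eqref{eq:cemi-multiI} I would start from the identity
\begin{align}
    I(A|\bar{A}:B|\bar{B}:C|\bar{C})_\rho + E(\bar{A}:\bar{B}:\bar{C})_\rho = D\big(\rho_{A\bar{A}B\bar{B}C\bar{C}} \big\| \exp(\log \rho_{A\bar{A}} \otimes \rho_{B\bar{B}} \otimes \rho_{C\bar{C}} + \log \sigma_{\bar{A}:\bar{B}:\bar{C}} - \log \rho_{\bar{A}} \otimes \rho_{\bar{B}} \otimes \rho_{\bar{C}})\big),
\end{align}
where $\sigma_{\bar{A}:\bar{B}:\bar{C}} = \sum_k \sigma_{\bar A}^k \otimes \sigma_{\bar B}^k \otimes \sigma_{\bar C}^k$ is a tripartite-separable optimizer for $E(\bar A:\bar B:\bar C)_\rho$. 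Then I would invoke the variational formula for $D$, plug in $\omega = \exp(\log X_{A:B:C} \otimes Y_{\bar A\bar B\bar C})$ with $X_{A:B:C} \in \nsf{SEPP}(A:B:C)$ still to be optimized, and apply the multivariate Golden--Thompson inequality~\eqref{eq:GT-multi} for the Schatten two-norm to separate the exponential into an integral over $t$ of a product of the three commuting local recovery channels $\mathcal{R}^{[t]}_{\bar A \to A\bar A} \otimes \mathcal{R}^{[t]}_{\bar B \to B\bar B} \otimes \mathcal{R}^{[t]}_{\bar C \to C\bar C}$ applied to $\sigma_{\bar A:\bar B:\bar C}$.

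The key bookkeeping steps are then exactly as in Theorem~\ref{thm:cemi-main}: set $\gamma_{A\bar A:B\bar B:C\bar C} := \int \mathrm{d}\beta_0(t) \sum_k \mathcal{R}^{[t]}_{\bar A \to A\bar A}(\sigma_{\bar A}^k) \otimes \mathcal{R}^{[t]}_{\bar B \to B\bar B}(\sigma_{\bar B}^k) \otimes \mathcal{R}^{[t]}_{\bar C \to C\bar C}(\sigma_{\bar C}^k)$, observe that $\gamma_{A:B:C}$ inherits tripartite separability by inspection (each summand is a product across $A$, $B$, $C$), define $\hat\gamma_{\bar A:\bar B:\bar C} := \tr_{ABC}[X_{A:B:C}\, \gamma_{A\bar A:B\bar B:C\bar C}] / \tr[X_{A:B:C}\, \gamma_{A:B:C}]$ and check it too is tripartite separable. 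Using the $\nsf{SEPP}$ variational bound from Lemma~\ref{lem:variationalbound} together with the exact $\nsf{ALL}$ formula~\eqref{eq:mrelent} on the two trace factors, one lands at $E_{\nsf{SEPP}}(A:B:C)_\rho + E_{\nsf{ALL}}(\bar A:\bar B:\bar C)_\rho$ after optimizing over $X$ and $Y$, giving Eq.~\eqref{eq:cemi-multiI}. For the PPT version Eq.~\eqref{eq:cemi-multiII} the argument is identical but with $\sigma_{\bar A\bar B\bar C}$ a tripartite-PPT optimizer and $X_{ABC} \in \nsf{PPT}(A:B:C)$; one needs the tripartite analogue of the intertwining relation $T_{B\bar B}\circ \mathcal{R}^{[t]}_{\bar B \to B\bar B} = \widetilde{\mathcal{R}}^{[t]}_{\bar B \to B\bar B}\circ T_{\bar B}$ on each of the three legs, from which $\gamma$ and $\hat\gamma$ inherit the PPT structure under every bipartition of $\{A,B,C\}$ by the same partial-transpose-and-cyclicity computation as in Proposition~\ref{prop:cemi-ppt}.

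For the regularized consequences Eqs.~\eqref{eq:cemi-multiIII}--\eqref{eq:cemi-multiIV}, the first inequality in each follows from the additivity of $I(A|\bar A:B|\bar B:C|\bar C)$ on tensor products combined with the asymptotic achievability of Lemma~\ref{lem:asymptotic-achiev} (restricting the separable/PPT optimizers in $E_{\nsf{SEPP}}$, $P_{\nsf{PPT}}$ to permutation-invariant states via joint convexity), exactly as in the proof of Theorem~\ref{thm:cemi-main}. The second inequality in each comes from the multipartite super-additivity of Piani's compatible pairs~\cite[Theorem 1]{piani09}, i.e.\ $E_{\nsf{SEPP}}(A_1A_2:B_1B_2:C_1C_2)_\rho \geq E_{\nsf{SEPP}}(A_1:B_1:C_1)_\rho + E_{\nsf{SEPP}}(A_2:B_2:C_2)_\rho$ and likewise for $P_{\nsf{PPT}}$, since $\nsf{SEPP}(A:B:C)$ is compatible with $\nsf{Sepp}(A:B:C)$ and $\nsf{PPT}(A:B:C)$ with $\nsf{ppt}(A:B:C)$ in the tripartite sense. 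I expect no genuine obstacle here: the only thing to be careful about is that ``tripartite separable'' (respectively ``tripartite PPT'') is not a single bipartite condition, so when verifying that $\gamma$ and $\hat\gamma$ stay in the right cone one must confirm the product/PPT structure genuinely holds in the full tripartite sense rather than merely across one cut — but because the recovery maps act locally on disjoint factors, this is immediate by inspection, which is why the proof reduces to citing the bipartite arguments almost verbatim.
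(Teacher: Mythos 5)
Your proposal matches the paper's proof essentially step for step: the same tripartite-separable (resp.\ PPT) optimizer, the same choice $\omega = \exp(\log X_{A:B:C}\otimes Y_{\bar A\bar B\bar C})$ fed into the multivariate Golden--Thompson inequality, the same $\gamma$ and $\hat\gamma$ constructions with separability/PPT inheritance checked leg by leg, and the same combination of additivity, Lemma~\ref{lem:asymptotic-achiev}, and the multipartite super-additivity of \cite[Theorem 1]{piani09} for the regularized consequences. You also correctly identify the one point the paper flags as crucial, namely that the three recovery maps act on disjoint factors and commute, so the bipartite arguments carry over verbatim.
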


This strengthens the conceptually different multipartite CEMI faithfulness bounds from \cite{wilde14}. Further lower bounds in terms of restricted fidelity and Schatten one-norm as leading to Eq.~\eqref{eq:locc-one-norm} are possible \cite{Lancien13}. The proof is similar as in the respective bipartite cases, Theorem \ref{thm:cemi-main} and Proposition \ref{prop:cemi-ppt}, and is given in Appendix \ref{app:missing-proofs}. 

The tripartite squashed entanglement is defined as \cite{Yang09,Avis08}
\begin{align}
    I_{\nsf{SQ}}(A_1:A_2:A_3)&:=\inf_{\rho_{A_1A_2A_3}}I(A_1:A_2:A_3|C)_{\rho}\\
    \text{via the tripartite CQMI}\;I(A_1:A_2:A_3|C)_{\rho}&:=I(A_1:A_2|C)_{\rho}+I(A_1A_2:A_3|C)_{\rho}\,,
\end{align}
where the infimum is over all four-party quantum state extensions $\rho_{A_1A_2A_3C}$ on any system $C$ (with no bound on the dimension of $C$).\footnote{In reference \cite{Yang09} other definitions of multipartite squashed entanglement are explored as well, which we do not discuss here, but should be amenable to similar considerations.} We note the following recoverability lower bounds.

\begin{proposition}\label{prop:squashed-recov-multi}
Let $\rho_{A_1A_2A_3C}$ be any four-party state. We have
\begin{align}
    I(A_1:A_2:A_3|C)_{\rho} & \geq D_{\nsf{ALL}}\left(\rho_{A_1A_2A_3C}\middle\|\int_{-\infty}^{\infty} \textnormal{d}\beta_0(t)\left(\mathcal{R}_{C\to A_3C}^{[t]}\circ\mathcal{R}_{C\to A_2C}^{[t]}\right)(\rho_{A_1C})\right)\label{eq:cqmi-3II}\\
    I(A_1:A_2:A_3|C)_{\rho} & \geq -\int_{-\infty}^{\infty} \textnormal{d}\beta_0(t)\log F\left(\rho_{A_1A_2A_3C},\left(\mathcal{R}_{C\to A_3C}^{[t]}\circ\mathcal{R}_{C\to A_2C}^{[t]}\right)(\rho_{A_1C})\right)\,,\label{eq:cqmi-3III}
\end{align}
with quantum channels
\begin{align}
    \text{$\mathcal{R}_{\bar{C}\to A_2C}^{[t]}(\cdot):=\left(\rho_{A_2C}^{\frac{1+it}{2}}\rho_{C}^{\frac{-1+it}{2}}\right)(\cdot)\left(\rho_{C}^{\frac{-1-it}{2}}\rho_{A_2C}^{\frac{1-it}{2}}\right)$ and $\mathcal{R}_{\bar{C}\to A_3C}^{[t]}(\cdot)$ similar,}
\end{align}
and the probability density $\beta_0(t)=\frac{\pi}{2}\left(\cosh(\pi t)+1\right)^{-1}$. By the symmetry of $I(A_1:A_2:A_3|C)_{\rho}$ under $A_1\leftrightarrow A_2\leftrightarrow A_3$ other orderings are possible as well.
\end{proposition}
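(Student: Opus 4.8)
The plan is to mirror the proof of the bipartite recoverability bound, Proposition~\ref{prop:cemi-recover} (and the template of the proof of Theorem~\ref{thm:sq-main}), invoking the multivariate trace inequalities of \cite{sutter16,Sutter17} underlying \eqref{eq:GT-multi} with $n=6$ matrices so as to accommodate the two nested rotated Petz maps; throughout one assumes full support and extends to the general case by continuity as in \eqref{eq:continuous-extension}. The first move is to rewrite the tripartite conditional mutual information as a single Umegaki relative entropy: expanding $I(A_1:A_2:A_3|C)_\rho = I(A_1:A_2|C)_\rho + I(A_1A_2:A_3|C)_\rho$ into von Neumann entropies, the term $H(A_1A_2C)_\rho$ cancels, leaving
\begin{align*}
I(A_1:A_2:A_3|C)_\rho = D\big(\rho_{A_1A_2A_3C}\,\big\|\,\exp(\log\rho_{A_1C}+\log\rho_{A_2C}+\log\rho_{A_3C}-2\log\rho_C)\big)\,,
\end{align*}
the argument on the right being a positive, generally non-normalized, operator.

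Using the variational formula for $D(\cdot\|\cdot)$ together with \eqref{eq:mrelent} for $D_{\nsf{ALL}}$, the bound \eqref{eq:cqmi-3II} reduces to the scalar estimate
\begin{align*}
\tr\big[\exp(\log\omega+\log\rho_{A_1C}+\log\rho_{A_2C}+\log\rho_{A_3C}-2\log\rho_C)\big] \leq \tr[\omega\,\gamma]\,,\qquad\text{for all }\omega>0\,,
\end{align*}
with $\gamma := \int_{-\infty}^{\infty}\textnormal{d}\beta_0(t)\,(\mathcal{R}^{[t]}_{C\to A_3C}\circ\mathcal{R}^{[t]}_{C\to A_2C})(\rho_{A_1C})$, because then $I(A_1:A_2:A_3|C)_\rho \geq \sup_{\omega>0}\{\tr[\rho_{A_1A_2A_3C}\log\omega]-\log\tr[\omega\gamma]\} = D_{\nsf{ALL}}(\rho_{A_1A_2A_3C}\|\gamma)$ (and $\gamma$ is a state, as each $\mathcal{R}^{[t]}$ is trace preserving). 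To prove the scalar estimate I would feed the left-hand side into the six-matrix Schatten-two trace inequality \cite[Corollary 3.3]{sutter16}, used exactly as in the proof of Theorem~\ref{thm:sq-main}: applied to the ordered tuple of Hermitian operators $(\log\omega,\log\rho_{A_3C},-\log\rho_C,\log\rho_{A_2C},-\log\rho_C,\log\rho_{A_1C})$, and after pulling the $\beta_0$-integral outside the logarithm by convexity of $\exp$, it upper bounds the left-hand side by $\int\textnormal{d}\beta_0(t)\,\|Z^{[t]}\|_2^2$ with
\begin{align*}
Z^{[t]} := \omega^{\frac12}\,\rho_{A_3C}^{\frac{1+it}{2}}\rho_C^{\frac{-1+it}{2}}\,\rho_{A_2C}^{\frac{1+it}{2}}\rho_C^{\frac{-1+it}{2}}\,\rho_{A_1C}^{\frac12}\,.
\end{align*}
Expanding $\|Z^{[t]}\|_2^2 = \tr[(Z^{[t]})^\dagger Z^{[t]}]$, recombining $\omega^{\frac12}\omega^{\frac12}=\omega$ and $\rho_{A_1C}^{\frac12}\rho_{A_1C}^{\frac12}=\rho_{A_1C}$ by cyclicity, and recalling $\mathcal{R}^{[t]}_{C\to A_jC}(X) = (\rho_{A_jC}^{\frac{1+it}{2}}\rho_C^{\frac{-1+it}{2}})\,X\,(\rho_C^{\frac{-1-it}{2}}\rho_{A_jC}^{\frac{1-it}{2}})$, one identifies the integrand with $\tr[\omega\,(\mathcal{R}^{[t]}_{C\to A_3C}\circ\mathcal{R}^{[t]}_{C\to A_2C})(\rho_{A_1C})]$ and hence the whole bound with $\tr[\omega\gamma]$; this establishes \eqref{eq:cqmi-3II}.

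For the fidelity bound \eqref{eq:cqmi-3III} I would run the analogous argument with the Schatten one-norm and without the auxiliary $\omega$: combining the Petz R\'enyi-$\tfrac12$ estimate $D(\rho\|\sigma)\geq-\log F(\rho,\sigma)$ with the one-norm form of the same trace inequality\,---\,whose right-hand side, after splitting $\rho_{A_1C}$ symmetrically via a polar decomposition, is exactly $\int\textnormal{d}\beta_0(t)\log\sqrt{F\big(\rho_{A_1A_2A_3C},(\mathcal{R}^{[t]}_{C\to A_3C}\circ\mathcal{R}^{[t]}_{C\to A_2C})(\rho_{A_1C})\big)}$\,---\,yields the claim. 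The statement about other orderings then follows because $I(A_1:A_2:A_3|C)_\rho$ is permutation invariant and the chain rule $I(A_1:A_2:A_3|C) = I(A_i:A_j|C) + I(A_iA_j:A_k|C)$ holds for every relabelling, so the two recovery maps can be peeled off in either order and in terms of any of the three subsystems.

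The only genuinely delicate point is the bookkeeping in the trace-inequality step: picking the order of the six operators so that the resulting ordered product of fractional powers collapses \emph{exactly} onto the composition $\mathcal{R}^{[t]}_{C\to A_3C}\circ\mathcal{R}^{[t]}_{C\to A_2C}$, and checking that the $\beta_0$-average reproduces precisely the channel $\gamma$ in the statement (and, for \eqref{eq:cqmi-3III}, that the normalization of $F$ matches). These are the very manipulations already carried out for the bipartite case in the proof of Proposition~\ref{prop:cemi-recover}, here with one additional conjugation layer, so no new idea should be required.
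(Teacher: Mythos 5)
Your treatment of Eq.~\eqref{eq:cqmi-3II} is exactly the paper's proof: the identity $I(A_1:A_2:A_3|C)_\rho = D\big(\rho_{A_1A_2A_3C}\big\|\exp(\log\rho_{A_1C}+\log\rho_{A_2C}+\log\rho_{A_3C}-2\log\rho_C)\big)$, the variational formula for $D$, the six-matrix Schatten-two Golden--Thompson inequality applied to the ordered tuple you list, and the identification of the resulting supremum with $D_{\nsf{ALL}}$ via Eq.~\eqref{eq:mrelent}. Your reduction to the scalar estimate $\tr[\exp(\log\omega+\cdots)]\le\tr[\omega\,\gamma]$ is the right way to organize it, and the cancellation of the $it$-phases on the outer factors $\omega$ and $\rho_{A_1C}$ under $\|\cdot\|_2^2$ works as you describe.

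The one place where your sketch would not go through as written is the entry point for Eq.~\eqref{eq:cqmi-3III}. You propose to start from $D(\rho\|\sigma)\ge-\log F(\rho,\sigma)$ with $\sigma=\exp(H)$, $H=\log\rho_{A_1C}+\log\rho_{A_2C}+\log\rho_{A_3C}-2\log\rho_C$. But $-\log F(\rho,e^{H})=-2\log\|\rho^{1/2}e^{H/2}\|_1$, and the multivariate inequality \eqref{eq:GT-multi} upper-bounds $\|\exp(\tfrac12\log\rho+\tfrac12 H)\|_1$, not $\|\rho^{1/2}e^{H/2}\|_1$; by the (two-matrix) Golden--Thompson inequality the latter dominates the former, so you cannot pass between them in the direction you need. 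The paper instead invokes the Peierls--Bogoliubov inequality $-\tr[G_2e^{G_1}]\ge-\log\tr[e^{G_1+G_2}]$ with $G_1=\log\rho_{A_1A_2A_3C}$ and $G_2=\tfrac12(-\log\rho_{A_1A_2A_3C}+H)$, which gives $I(A_1:A_2:A_3|C)_\rho\ge-2\log\tr\big[\exp\big(\tfrac12\log\rho_{A_1A_2A_3C}+\tfrac12H\big)\big]$\,---\,precisely the object to which the six-matrix Schatten-one Golden--Thompson inequality applies, after which the polar-decomposition/fidelity identification you describe finishes the argument. With that substitution your proof is complete; the remaining bookkeeping (operator ordering, symmetry of $\beta_0$, permutations of the $A_i$) is as you say and as in Proposition~\ref{prop:cemi-recover}.
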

The proof is as in \cite{sutter16,Sutter17} via multivariate trace inequalities and is given in Appendix \ref{app:missing-proofs}. Additionally, the corresponding regularized lower bound as
\begin{align}
    \limsup_{n\to\infty}\frac{1}{n}D\left(\rho_{A_1A_2A_3C}^{\otimes n}\middle\|\int_{-\infty}^{\infty} \textnormal{d}\beta_0(t)\left(\left(\mathcal{R}_{C\to A_3C}^{[t]}\circ\mathcal{R}_{C\to A_2C}^{[t]}\right)(\rho_{A_1C})\right)^{\otimes n}\right)\label{eq:cqmi-3I}
\end{align}
then also follows from the asymptotic achievability of the measured entropy (Lemma \ref{lem:asymptotic-achiev}).

However, we do not know how to show faithfulness lower bounds of $I_{\nsf{SQ}}(A_1:A_2:A_3)$ with respect to global separability $\nsf{SEP}(A_1:A_2:A_3)$. As also noted in \cite{liwinter14,wilde14}, this difficulty arises because compared to the multipartite CEMI case, there is now only one extension system $C$ that all operators act on.


\section{Outlook}
\label{sec:outlook}

In addition to exploring applications of our variational formulas for quantum relative entropy under restricted measurements, there are two immediate questions that remain open around entanglement monogamy inequalities in the spirit of this manuscript. First, is multipartite squashed entanglement faithful? Second, and as an extension of the separability refinements of SSA, is there a connection between the quantum conditional mutual information and exact quantum Markov chains \cite{hayden04,linden08}? We hope that our direct matrix analysis approach can shine some further light on these questions. Lastly, it would also be interesting to explore applications of the CEMI entanglement measure and its characterizations.



\section*{Acknowledgments}

We thank Ludovico Lami for feedback. MB acknowledges funding by the European Research Council (ERC Grant Agreement No.~948139). MT is supported by the National Research Foundation, Singapore and A*STAR under its CQT Bridging Grant and by NUS startup grants (R-263-000-E32-133 and R-263-000-E32-731).


\appendix

\section{Deferred proofs}\label{app:missing-proofs}

\begin{proof}[Proof of Proposition \ref{prop:cemi-recover}]
We first prove Eq.~\eqref{eq:cemi-recovI} by writing
\begin{align}
&I(A|\bar A:B\bar B)_\rho\nonumber\\
&\quad= D\left(\rho_{A\bar AB\bar B}\middle\|\exp(\log\rho_{\bar A\bar B}+\log\rho_{A\bar A}\otimes\rho_{B\bar B}-\log\rho_{\bar A}\otimes\rho_{\bar B})\right)\\
&\quad= \sup_{\omega_{A\bar AB\bar B} > 0}\Big\{\tr\left[\rho_{A\bar AB\bar B}\log\omega_{A\bar AB\bar B}\right] \nonumber\\
&\quad \qquad - \log \tr\left[\exp (\log\rho_{\bar A\bar B}+\log\rho_{A\bar A}\otimes\rho_{B\bar B}-\log\rho_{\bar A}\otimes\rho_{\bar B}+ \log\omega_{A\bar AB\bar B}\right]\Big\}\\
&\quad \geq\sup_{\omega_{A\bar AB\bar B} > 0}\Bigg\{\tr\left[\rho_{A\bar AB\bar B}\log\omega_{A\bar AB\bar B}\right]- \log \tr\left[\omega_{A\bar AB\bar B}\int_{-\infty}^{\infty} \textnormal{d}\beta_0(t)\left(\mathcal{R}_{\bar{A}\to A\bar{A}}^{[t]}\otimes\mathcal{R}_{\bar{B}\to B\bar{B}}^{[t]}\right)(\rho_{\bar{A}\bar{B}})\right]\Bigg\}\\
&\quad= D_{\nsf{ALL}}\left(\rho_{A\bar{A}B\bar{B}}\middle\|\int_{-\infty}^{\infty} \textnormal{d}\beta_0(t)\left(\mathcal{R}_{\bar{A}\to A\bar{A}}^{[t]}\otimes\mathcal{R}_{\bar{B}\to B\bar{B}}^{[t]}\right)(\rho_{\bar{A}\bar{B}})\right)\,,\label{eq:ALL-firststepI}
\end{align}
where we employed the four matrix Golden-Thompson inequality for the Schatten two-norm from \cite[Corollary 3.3]{sutter16}. Next, Eq.~\eqref{eq:cemi-recovIII} directly follows from Eq.~\eqref{eq:ALL-firststepI} via the additivity of CEMI on tensor product states, together with the asymptotic achievability of the measured relative entropy in Lemma \ref{lem:asymptotic-achiev}. Finally, for the proof of Eq.~\eqref{eq:cemi-recovII}, we follow the same consideration as in ~\cite[Appendix F]{sutter16}. Namely, the Peierls-Bogoliubov inequality for Hermitian matrices $G_1$ and $G_2$ with $\tr[\exp(G_1)]=1$ states that
\begin{align}\label{eq:PB}
-\tr[G_2\exp(G_1)]\geq-\log\tr[\exp(G_1+G_2)]\,.
\end{align}
Choosing $G_1=\log\rho_{A\bar{A}B\bar{B}}$ and
$G_2=\frac{1}{2}\left(-\log\rho_{A\bar AB\bar B}+\log\rho_{A\bar A}\otimes\rho_{B\bar B}-\log\rho_{\bar A\bar B}+\log\rho_{\bar A}\otimes\rho_{\bar B}\right)$
yields
\begin{align}
&I(A|\bar A:B\bar B)_\rho\nonumber\\
&\quad \geq-2\log\tr\left[\exp\left(\frac{1}{2}(-\log\rho_{A\bar AB\bar B}+\log\rho_{A\bar A}\otimes\rho_{B\bar B}-\log\rho_{\bar A\bar B}+\log\rho_{\bar A}\otimes\rho_{\bar B})\right)\right]\\
&\quad\geq-2\int_{-\infty}^{\infty} \textnormal{d}\beta_0(t)\log\left\|\rho_{A\bar{A}B\bar{B}}^{\frac{1+it}{2}}\left(\rho_{A\bar A}\otimes\rho_{B\bar B}\right)^{\frac{1+it}{2}}\left(\rho_{\bar A}\otimes\rho_{\bar B}\right)^{-\frac{1+it}{2}}\rho_{\bar A\bar B}^{\frac{1+it}{2}}\right\|_1\\
&\quad=-\int_{-\infty}^{\infty} \textnormal{d}\beta_0(t)\log F\left(\rho_{A\bar{A}B\bar{B}},\left(\mathcal{R}_{\bar{A}\to A\bar{A}}^{[t]}\otimes\mathcal{R}_{\bar{B}\to B\bar{B}}^{[t]}\right)(\rho_{\bar{A}\bar{B}})\right)\,,
\end{align}
where we subsequently employed the four matrix Golden-Thompson inequality for the Schatten one-norm from \cite[Corollary 3.3]{sutter16}.
\end{proof}

\begin{proof}[Proof of Proposition \ref{thm:cemi-multi}]
    We first prove Eq.\eqref{eq:cemi-multiI} and proceed as in the bipartite case of Theorem \ref{thm:cemi-main}. Namely, for a separable state optimizer
				\begin{align}
        \sigma_{\bar{A}:\bar{B}:\bar{C}} \in \argmin_{\sigma_{\bar{A}\bar{B}\bar{C}}\in \nsf{Sep}(\bar{A}:\bar{B}:\bar{C})} \ D(\rho_{\bar{A}\bar{B}\bar{C}} \| \sigma_{\bar{A}\bar{B}\bar{C}}), \qquad
        \sigma_{\bar{A}:\bar{B}:\bar{C}} = \sum_k\sigma^k_{\bar{A}}\otimes\sigma^k_{\bar{B}}\otimes\sigma^k_{\bar{C}} \,,
    \end{align}
    we find
    \begin{align}
        &I(A|\bar{A}:B|\bar{B}:C|\bar{C})_\rho+E(\bar{A}:\bar{B}:\bar{C})_{\rho}\nonumber\\
        &\ =D\left(\rho_{A\bar{A}B\bar{B}C\bar{C}}\middle\|\exp\left(\log\rho_{A\bar{A}}\otimes\rho_{B\bar{B}}\otimes\rho_{C\bar{C}}+\log\sigma_{\bar{A}:\bar{B}:\bar{C}}-\log\rho_{\bar{A}}\otimes\rho_{\bar{B}}\otimes\rho_{\bar{C}}\right)\right)\\
        &\ \geq \sup_{X_{A:B:C},Y_{\bar{A}\bar{B}\bar{C}} > 0}\Bigg\{\tr\left[\rho_{A\bar{A}B\bar{B}C\bar{C}}\log X_{A:B:C}\otimes Y_{\bar{A}\bar{B}\bar{C}}\right]\nonumber\\
        &\ \quad-\log\tr\Big[\left(X_{A:B:C}\otimes Y_{\bar{A}\bar{B}\bar{C}}\right)\underbrace{\int_{-\infty}^{\infty} \textnormal{d}\beta_0(t)\sum_k\mathcal{R}_{\bar{A}\to A\bar{A}}^{[t]}(\sigma_{\bar{A}}^k)\otimes\mathcal{R}_{\bar{B}\to B\bar{B}}^{[t]}(\sigma_{\bar{B}}^k)\otimes\mathcal{R}_{\bar{C}\to C\bar{C}}^{[t]}(\sigma_{\bar{C}}^k)}_{=:\;\gamma_{A\bar{A}:B\bar{B}:C\bar{C}}} \Big]\Bigg\}\\
        &\ =\sup_{X_{A:B:C},Y_{\bar{A}\bar{B}\bar{C}} > 0}\Big\{\tr\left[\rho_{ABC}\log X_{A:B:C}\right]+\tr\left[\rho_{\bar{A}\bar{B}\bar{C}}\log Y_{\bar{A}\bar{B}\bar{C}}\right] \nonumber\\
        &\ \quad -\log\big(\tr\left[X_{A:B:C}\gamma_{A:B:C}\right]\cdot\tr\left[Y_{\bar{A}\bar{B}\bar{C}}\hat{\gamma}_{\bar{A}:\bar{B}:\bar{C}}\right]\big)\Big\}\\
        &\ \geq E_{\nsf{SEP}}(A:B:C)_{\rho}+E_{\nsf{ALL}}(\bar{A}:\bar{B}:\bar{C})_{\rho}\,,
    \end{align}
    where we set
    \begin{align}
        \text{$\hat{\gamma}_{\bar{A}:\bar{B}:\bar{C}}:=\frac{\tr_{ABC}\left[X_{A:B:C}\gamma_{A\bar{A}:B\bar{B}:C\bar{C}}\right]}{\tr\left[X_{A:B:C}\gamma_{A:B:C}\right]}$ with $X_{A:B:C}\in\nsf{SEP}(A:B:C)$,}
    \end{align}
    and used that $\gamma_{A:B:C}\in\nsf{SEP}(A:B:C)$ as well as $\hat{\gamma}_{\bar{A}:\bar{B}:\bar{C}}\in\nsf{SEP}(\bar{A}:\bar{B}:\bar{C})$ inherit the separability structure from the choice $X_{A:B:C}\in\nsf{SEP}(A:B:C)$. The crucial point is that the recovery maps all commute and thus the argument is the same as in the bipartite case of Theorem \ref{thm:cemi-main}. Eq.~\eqref{eq:cemi-multiIII} then follows from the multipartite version of the super-additivity result \cite[Theorem 1]{piani09}
    \begin{align}
        E_{\nsf{SEP}}(A_1A_2:B_1B_2:C_1C_2)_{\rho}\geq E_{\nsf{SEP}}(A_1:B_1:C_1)_{\rho}+E_{\nsf{SEP}}(A_2:B_2:C_2)_{\rho}\,.
    \end{align}

    For the proof of Eq.~\eqref{eq:cemi-multiII}, we follow Proposition \ref{prop:cemi-ppt} for the bipartite case. Namely, for a PPT state optimizer
    \begin{align}
        \sigma_{\bar{A}\bar{B}\bar{C}}\in\argmin_{\sigma_{\bar{A}\bar{B}\bar{C}}\in \nsf{ppt}(\bar{A}:\bar{B}:\bar{C})} \ D(\rho_{\bar{A}\bar{B}\bar{C}} \| \sigma_{\bar{A}\bar{B}\bar{C}})
    \end{align}
    we find
    \begin{align}
        &I(A|\bar{A}:B|\bar{B}|C|\bar{C})_\rho+P(\bar{A}:\bar{B}:\bar{C})_{\rho}\nonumber\\
        &\quad \geq \sup_{X_{ABC},Y_{\bar{A}\bar{B}\bar{C}} > 0}\Big\{\tr\left[\rho_{A\bar{A}B\bar{B}C\bar{C}}\log X_{ABC}\otimes Y_{\bar{A}\bar{B}\bar{C}}\right]\nonumber\\
        &\quad\qquad-\log\tr\Big[\left(X_{ABC}\otimes Y_{\bar{A}\bar{B}\bar{C}}\right)\underbrace{\int_{-\infty}^{\infty} \textnormal{d}\beta_0(t)\left(\mathcal{R}_{\bar{A}\to A\bar{A}}^{[t]}\otimes\mathcal{R}_{\bar{B}\to B\bar{B}}^{[t]}\otimes\mathcal{R}_{\bar{C}\to C\bar{C}}^{[t]}\right)(\sigma_{\bar{A}\bar{B}\bar{C}})}_{=:\;\gamma_{A\bar{A}B\bar{B}C\bar{C}}} \Big]\Big\}\\
        &\quad =\sup_{X_{ABC},Y_{\bar{A}\bar{B}\bar{C}} > 0} \Big\{\tr\left[\rho_{ABC}\log X_{ABC}\right]+\tr\left[\rho_{\bar{A}\bar{B}\bar{C}}\log Y_{\bar{A}\bar{B}\bar{C}}\right]-\log\left(\tr\left[X_{ABC}\gamma_{ABC}\right]\cdot\tr\left[Y_{\bar{A}\bar{B}\bar{C}}\hat{\gamma}_{\bar{A}\bar{B}\bar{C}}\right]\right)\Big\}\\
        &\quad \geq P_{\nsf{PPT}}(A:B:C)_{\rho}+P_{\nsf{ALL}}(\bar{A}:\bar{B}:\bar{C})_{\rho}\,,
    \end{align}
    where we set
    \begin{align}
        \text{$\hat{\gamma}_{\bar{A}\bar{B}\bar{C}}:=\frac{\tr_{ABC}\left[X_{ABC}\gamma_{A\bar{A}B\bar{B}C\bar{C}}\right]}{\tr\left[X_{ABC}\gamma_{ABC}\right]}$ for the choice $X_{ABC}\in\nsf{PPT}(A:B:C)$,}
    \end{align}
    and used that $\gamma_{ABC}\in\nsf{ppt}(A:B:C)$ as well as $\hat{\gamma}_{\bar{A}\bar{B}\bar{C}}\in\nsf{ppt}(\bar{A}:\bar{B}:\bar{C})$ inherit the relevant PPT structure from the choice $X_{ABC}\in\nsf{PPT}(A:B:C)$, similarly as in the bipartite case. Again, the crucial point is that the recovery maps all commute. Eq.~\eqref{eq:cemi-multiIV} then follows from multipartite version of \cite[Theorem 1]{piani09} in the form
    \begin{align}
        P_{\nsf{PPT}}(A_1A_2:B_1B_2:C_1C_2)_{\rho}\geq P_{\nsf{PPT}}(A_1:B_1:C_1)_{\rho}+P_{\nsf{PPT}}(A_2:B_2:C_2)_{\rho}\,.
    \end{align}
\end{proof}

\begin{proof}[Proof of Proposition \ref{prop:squashed-recov-multi}]
We first prove Eq.~\eqref{eq:cqmi-3II} by writing
\begin{align}
&I(A_1:A_2:A_3|C)_{\rho}\nonumber\\
&\quad=D\left(\rho_{A_1A_2A_3C}\middle\|\exp(\log\rho_{A_1C}+\log\rho_{A_2C}+\log\rho_{A_3C}-2\log\rho_C)\right)\\
&\quad=\sup_{\omega_{A_1A_2A_3C} > 0}\Big\{\tr\left[\rho_{A_1A_2A_3C}\log\omega_{A_1A_2A_3C}\right]\nonumber\\
&\quad\qquad - \log \tr\left[\exp (\log\rho_{A_1C}+\log\rho_{A_2C}+\log\rho_{A_3C}-2\log\rho_C+ \log\omega_{A_1A_2A_3C}\right]\Big\}\\
&\quad\geq \sup_{\omega_{A_1A_2A_3C} > 0} \Bigg\{\tr\left[\rho_{A_1A_2A_3C}\log\omega_{A_1A_2A_3C}\right] \nonumber\\
&\quad\qquad - \log \tr\left[\omega_{A_1A_2A_3C}\int_{-\infty}^{\infty} \textnormal{d}\beta_0(t)\left(\mathcal{R}_{C\to A_3C}^{[t]}\circ\mathcal{R}_{C\to A_2C}^{[t]}\right)(\rho_{A_1C})\right]\Bigg\}\\
&\quad=D_{\nsf{ALL}}\left(\rho_{A_1A_2A_3C}\middle\|\int_{-\infty}^{\infty} \textnormal{d}\beta_0(t)\left(\mathcal{R}_{C\to A_3C}^{[t]}\circ\mathcal{R}_{C\to A_2C}^{[t]}\right)(\rho_{A_1C})\right)\,,\label{eq:ALL-firststepII}
\end{align}
where we employed the six matrix Golden-Thompson inequality for the Schatten two-norm from \cite[Corollary 3.3]{sutter16}. Next, Eq.~\eqref{eq:cqmi-3I} directly follows from Eq.~\eqref{eq:ALL-firststepII} via the additivity of CEMI on tensor product states, together with the asymptotic achievability of the measured relative entropy in Lemma \ref{lem:asymptotic-achiev}. Finally, for the proof of Eq.~\eqref{eq:cqmi-3III}, we follow the same consideration as in ~\cite[Appendix F]{sutter16}. Namely, for Peierls-Bogoliubov inequality as in Eq.~\eqref{eq:PB}, we choose $G_1=\rho_{A_1A_2A_3C}$ and
\begin{align}
G_2=\frac{1}{2}\left(-\log\rho_{A_1A_2A_3C}+\log\rho_{A_1C}+\log\rho_{A_2C}+\log\rho_{A_3C}-2\log\rho_C\right)\,,
\end{align}
which gives the chain of equations
\begin{align}
&I(A_1:A_2:A_3|C)_{\rho}\nonumber\\
&\geq-2\log\tr\left[\exp\left(\frac{1}{2}(-\log\rho_{A_1A_2A_3C}+\log\rho_{A_1C}+\log\rho_{A_2C}+\log\rho_{A_3C}-2\log\rho_C)\right)\right]\\
&\geq-2\int_{-\infty}^{\infty} \textnormal{d}\beta_0(t)\log\left\|\rho_{A_1A_2A_3C}^{\frac{1+it}{2}}\rho_{A_3C}^{\frac{1+it}{2}}\rho_C^{-\frac{1+it}{2}}\rho_{A_2C}^{\frac{1+it}{2}}\rho_C^{-\frac{1+it}{2}}\rho_{A_1C}^{\frac{1+it}{2}}\right\|_1\\
&=-\int_{-\infty}^{\infty} \textnormal{d}\beta_0(t)\log F\left(\rho_{A_1A_2A_3C},\left(\mathcal{R}_{C\to A_3C}^{[t]}\circ\mathcal{R}_{C\to A_2C}^{[t]}\right)(\rho_{A_1C})\right)\,,
\end{align}
where we subsequently employed the six matrix Golden-Thompson inequality for the Schatten one-norm from \cite[Corollary 3.3]{sutter16}.
\end{proof}


\bibliography{library_MB}

\begin{thebibliography}{10}

\bibitem{Avis08}
D.~Avis, P.~Hayden, and I.~Savov.
\newblock {\em ``Distributed Compression and Multiparty Squashed
  Entanglement''}.
\newblock \href{http://dx.doi.org/10.1088/1751-8113/41/11/115301}{Journal of
  Physics A: Mathematical and Theoretical {\bf 41}:\,115301} (2008).

\bibitem{berta22}
M.~Berta, F.~Brand\~ao, G.~Gour, L.~Lami, M.~B. Plenio, B.~Regula, and
  M.~Tomamichel.
\newblock {\em ``On a Gap in the Proof of the Generalised Quantum {S}tein's
  Lemma and its Consequences for the Reversibility of Quantum Resources''}.
\newblock \href{http://dx.doi.org/10.22331/q-2023-09-07-1103}{Quantum {\bf
  7}:\,1103} (2023).

\bibitem{Berta23-2}
M.~Berta, F.~Brand{\~a}o, G.~Gour, L.~Lami, M.~B. Plenio, B.~Regula, and
  M.~Tomamichel.
\newblock {\em ``The tangled state of quantum hypothesis testing''}.
\newblock \href{http://dx.doi.org/10.1038/s41567-023-02289-9}{Nature Physics }
  (2023).

\bibitem{Berta21}
M.~Berta, F.~Brand{\~a}o, and C.~Hirche.
\newblock {\em ``On Composite Quantum Hypothesis Testing''}.
\newblock \href{http://dx.doi.org/10.1007/s00220-021-04133-8}{Communications in
  Mathematical Physics {\bf 385}:\,55} (2021).

\bibitem{berta17}
M.~Berta, O.~Fawzi, and M.~Tomamichel.
\newblock {\em ``On Variational Expressions for Quantum Relative Entropies''}.
\newblock \href{http://dx.doi.org/10.1007/s11005-017-0990-7}{Letters in
  Mathematical Physics {\bf 107}:\,2239} (2017).

\bibitem{bertawilde14}
M.~Berta, K.~Seshadreesan, and M.~Wilde.
\newblock {\em ``{R{\'{e}}nyi Generalizations of the Conditional Quantum Mutual
  Information}''}.
\newblock \href{http://dx.doi.org/10.1063/1.4908102}{Journal of Mathematical
  Physics {\bf 56}:\,022205} (2015).

\bibitem{berta23}
M.~Berta, D.~Sutter, and M.~Walter.
\newblock {\em ``Quantum {B}rascamp-{L}ieb Dualities''}.
\newblock \href{http://dx.doi.org/10.1007/s00220-023-04678-w}{Communications in
  Mathematical Physics {\bf 401}:\,1807} (2023).

\bibitem{berta15}
M.~Berta and M.~Tomamichel.
\newblock {\em ``The Fidelity of Recovery Is Multiplicative''}.
\newblock \href{http://dx.doi.org/10.1109/TIT.2016.2527683}{IEEE Transactions
  on Information Theory {\bf 62}:\,1758} (2016).

\bibitem{brandaochristandlyard}
F.~Brand\~ao, M.~Christandl, and J.~Yard.
\newblock {\em ``{Personal Communications}''}, (2023).

\bibitem{brandao11}
F.~Brand{\~{a}}o, M.~Christandl, and J.~Yard.
\newblock {\em ``Faithful Squashed Entanglement''}.
\newblock \href{http://dx.doi.org/10.1007/s00220-011-1302-1}{Communications in
  Mathematical Physics {\bf 306}:\,805} (2011).

\bibitem{brandao13}
F.~Brand{\~a}o, A.~W. Harrow, J.~R. Lee, and Y.~Peres.
\newblock {\em ``Adversarial Hypothesis Testing and a Quantum {S}tein's Lemma
  for Restricted Measurements''}.
\newblock \href{http://dx.doi.org/10.1109/TIT.2020.2979704}{IEEE Transactions
  on Information Theory {\bf 66}:\,5037} (2020).

\bibitem{brandao14}
F.~Brand{\~{a}}o, A.~W. Harrow, J.~Oppenheim, and S.~Strelchuk.
\newblock {\em ``Quantum Conditional Mutual Information, Reconstructed States,
  and State Redistribution''}.
\newblock \href{http://dx.doi.org/10.1103/PhysRevLett.115.050501}{Physical
  Review Letters {\bf 115}:\,050501} (2015).

\bibitem{brandao10}
F.~Brand{\~{a}}o and M.~B. Plenio.
\newblock {\em ``{A Generalization of Quantum Stein's Lemma}''}.
\newblock \href{http://dx.doi.org/10.1007/s00220-010-1005-z}{Communications in
  Mathematical Physics {\bf 295}:\,791} (2010).

\bibitem{Carlen14}
E.~A. Carlen and E.~H. Lieb.
\newblock {\em ``Remainder Terms for Some Quantum Entropy Inequalities''}.
\newblock \href{http://dx.doi.org/10.1063/1.4871575}{Journal of Mathematical
  Physics {\bf 55}:\,042201} (2014).

\bibitem{Christandl07}
M.~Christandl, R.~K{\"o}nig, G.~Mitchison, and R.~Renner.
\newblock {\em ``One-and-a-Half Quantum de {F}inetti Theorems''}.
\newblock \href{http://dx.doi.org/10.1007/s00220-007-0189-3}{Communications in
  Mathematical Physics {\bf 273}:\,473} (2007).

\bibitem{Christandl12-2}
M.~Christandl, N.~Schuch, and A.~Winter.
\newblock {\em ``Entanglement of the Antisymmetric State''}.
\newblock \href{http://dx.doi.org/10.1007/s00220-012-1446-7}{Communications in
  Mathematical Physics {\bf 311}:\,397} (2012).

\bibitem{christandl04}
M.~Christandl and A.~Winter.
\newblock {\em ``Squashed Entanglement: An Additive Entanglement Measure''}.
\newblock \href{http://dx.doi.org/10.1063/1.1643788}{Journal of Mathematical
  Physics {\bf 45}:\,829} (2004).

\bibitem{Lami22}
W.~H.~G. Corr{\^e}a, L.~Lami, and C.~Palazuelos.
\newblock {\em ``Maximal Gap Between Local and Global Distinguishability of
  Bipartite Quantum States''}.
\newblock \href{http://dx.doi.org/10.1109/TIT.2022.3186428}{IEEE Transactions
  on Information Theory {\bf 68}:\,7306} (2022).

\bibitem{devetak08}
I.~Devetak and J.~Yard.
\newblock {\em ``Exact Cost of Redistributing Multipartite Quantum States''}.
\newblock \href{http://dx.doi.org/10.1103/PhysRevLett.100.230501}{Physical
  Review Letters {\bf 100}:\,230501} (2008).

\bibitem{donald86}
M.~J. Donald.
\newblock {\em ``{On the Relative Entropy}''}.
\newblock \href{http://dx.doi.org/10.1007/BF01212339}{Communications in
  Mathematical Physics {\bf 105}:\,13} (1986).

\bibitem{Donald99}
M.~J. Donald and M.~Horodecki.
\newblock {\em ``Continuity of Relative Entropy of Entanglement''}.
\newblock \href{http://dx.doi.org/10.1016/S0375-9601(99)00813-0}{Physics
  Letters A {\bf 264}:\,257} (1999).

\bibitem{Fawzi18}
H.~Fawzi and O.~Fawzi.
\newblock {\em ``Efficient Optimization of the Quantum Relative Entropy''}.
\newblock \href{http://dx.doi.org/10.1088/1751-8121/aab285}{Journal of Physics
  A: Mathematical and Theoretical {\bf 51}:\,154003} (2018).

\bibitem{fawzirenner14}
O.~Fawzi and R.~Renner.
\newblock {\em ``Quantum Conditional Mutual Information and Approximate
  {M}arkov Chains''}.
\newblock \href{http://dx.doi.org/10.1007/s00220-015-2466-x}{Communications in
  Mathematical Physics {\bf 340}:\,575} (2015).

\bibitem{harrowphd}
A.~W. Harrow.
\newblock {\em Applications of Coherent Classical Communication and Schur
  Duality to Quantum Information Theory}.
\newblock PhD thesis, MIT, (2005).
\newblock Available at \href{http://arxiv.org/abs/0512255}{arXiv:\,0512255}.

\bibitem{hayashi02b}
M.~Hayashi.
\newblock {\em ``Optimal Sequence of Quantum Measurements in the Sense of
  {S}tein's Lemma in Quantum Hypothesis Testing''}.
\newblock \href{http://dx.doi.org/10.1088/0305-4470/35/50/307}{Journal of
  Physics A: Mathematical and Theoretical {\bf 35}:\,10759} (2002).

\bibitem{hayden04}
P.~Hayden, R.~Jozsa, D.~Petz, and A.~Winter.
\newblock {\em ``Structure of States Which Satisfy Strong Subadditivity of
  Quantum Entropy with Equality''}.
\newblock \href{http://dx.doi.org/10.1007/s00220-004-1049-z}{Communications in
  Mathematical Physics {\bf 246}:\,359} (2004).

\bibitem{hiai17}
F.~Hiai, R.~K{\"{o}}nig, and M.~Tomamichel.
\newblock {\em ``Generalized Log-Majorization and Multivariate Trace
  Inequalities''}.
\newblock \href{http://dx.doi.org/10.1007/s00023-017-0569-y}{Annales Henri
  Poincar{\'{e}} {\bf 18}:\,2499} (2017).

\bibitem{hiai91}
F.~Hiai and D.~Petz.
\newblock {\em ``The Proper Formula for Relative Entropy and its Asymptotics in
  Quantum Probability''}.
\newblock \href{http://dx.doi.org/10.1007/BF02100287}{Communications in
  Mathematical Physics {\bf 143}:\,99} (1991).

\bibitem{holevo12}
A.~S. Holevo.
\newblock {\em {Quantum Systems, Channels, Information}}.
\newblock \href{http://dx.doi.org/10.1515/9783110273403}{De Gruyter} (2012).

\bibitem{Horodecki05-2}
K.~Horodecki, M.~Horodecki, P.~Horodecki, and J.~Oppenheim.
\newblock {\em ``Locking Entanglement with a Single Qubit''}.
\newblock \href{http://dx.doi.org/10.1103/PhysRevLett.94.200501}{Physical
  Review Letters {\bf 94}:\,200501} (2005).

\bibitem{linden08}
B.~Ibinson, N.~Linden, and A.~Winter.
\newblock {\em ``Robustness of Quantum {M}arkov Chains''}.
\newblock \href{http://dx.doi.org/10.1007/s00220-007-0362-8}{Communications in
  Mathematical Physics {\bf 277}:\,289} (2007).

\bibitem{Junge18}
M.~Junge, R.~Renner, D.~Sutter, M.~M. Wilde, and A.~Winter.
\newblock {\em ``Universal Recovery Maps and Approximate Sufficiency of Quantum
  Relative Entropy''}.
\newblock \href{http://dx.doi.org/10.1007/s00023-018-0716-0}{Annales Henri
  Poincar{\'e} {\bf 19}:\,2955} (2018).

\bibitem{lamiphd}
L.~Lami.
\newblock {\em Non-Classical Correlations in Quantum Mechanics and Beyond}.
\newblock PhD thesis, Autonomous University of Barcelona, (2018).
\newblock Available at
  \href{http://arxiv.org/abs/1803.02902}{arXiv:\,1803.02902}.

\bibitem{Lami18}
L.~Lami, C.~Palazuelos, and A.~Winter.
\newblock {\em ``Ultimate Data Hiding in Quantum Mechanics and Beyond''}.
\newblock \href{http://dx.doi.org/10.1007/s00220-018-3154-4}{Communications in
  Mathematical Physics {\bf 361}:\,661} (2018).

\bibitem{Lancien16}
C.~Lancien, S.~Di~Martino, M.~Huber, M.~Piani, G.~Adesso, and A.~Winter.
\newblock {\em ``Should Entanglement Measures be Monogamous or Faithful?''}.
\newblock \href{http://dx.doi.org/10.1103/PhysRevLett.117.060501}{Physical
  Review Letters {\bf 117}:\,060501} (2016).

\bibitem{Lancien13}
C.~Lancien and A.~Winter.
\newblock {\em ``Distinguishing Multi-Partite States by Local Measurements''}.
\newblock \href{http://dx.doi.org/10.1007/s00220-013-1779-x}{Communications in
  Mathematical Physics {\bf 323}:\,555} (2013).

\bibitem{Li15}
K.~Li and G.~Smith.
\newblock {\em ``Quantum de {F}inetti Theorem under Fully-One-Way Adaptive
  Measurements''}.
\newblock \href{http://dx.doi.org/10.1103/PhysRevLett.114.160503}{Physical
  Review Letters {\bf 114}:\,160503} (2015).

\bibitem{Li14}
K.~Li and A.~Winter.
\newblock {\em ``Relative Entropy and Squashed Entanglement''}.
\newblock \href{http://dx.doi.org/10.1007/s00220-013-1871-2}{Communications in
  Mathematical Physics {\bf 326}:\,63} (2014).

\bibitem{liwinter14}
K.~Li and A.~Winter.
\newblock {\em ``Squashed Entanglement, k-Extendibility, Quantum {M}arkov
  Chains, and Recovery Maps''}.
\newblock \href{http://dx.doi.org/10.1007/s10701-018-0143-6}{Foundations of
  Physics {\bf 48}:\,910} (2018).

\bibitem{lieb73a}
E.~H. Lieb.
\newblock {\em ``Convex Trace Functions and the {W}igner-{Y}anase-{D}yson
  Conjecture''}.
\newblock \href{http://dx.doi.org/10.1016/0001-8708(73)90011-X}{Advances in
  Mathematics {\bf 11}:\,267} (1973).

\bibitem{lieb73}
E.~H. Lieb and M.~B. Ruskai.
\newblock {\em ``Proof of the Strong Subadditivity of Quantum-Mechanical
  Entropy''}.
\newblock \href{http://dx.doi.org/10.1063/1.1666274}{Journal of Mathematical
  Physics {\bf 14}:\,1938} (1973).

\bibitem{matthews09}
W.~Matthews, S.~Wehner, and A.~Winter.
\newblock {\em ``Distinguishability of Quantum States Under Restricted Families
  of Measurements with an Application to Quantum Data Hiding''}.
\newblock \href{http://dx.doi.org/10.1007/s00220-009-0890-5}{Communications in
  Mathematical Physics {\bf 291}:\,813} (2009).

\bibitem{piani09}
M.~Piani.
\newblock {\em ``Relative Entropy of Entanglement and Restricted
  Measurements''}.
\newblock \href{http://dx.doi.org/10.1103/PhysRevLett.103.160504}{Physical
  Review Letters {\bf 103}:\,160504} (2009).

\bibitem{Sutter18}
D.~Sutter.
\newblock {\em Approximate Quantum Markov Chains}, pages 75--100.
\newblock \href{http://dx.doi.org/10.1007/978-3-319-78732-9_5}{Springer
  International Publishing} (2018).

\bibitem{sutter16}
D.~Sutter, M.~Berta, and M.~Tomamichel.
\newblock {\em ``Multivariate Trace Inequalities''}.
\newblock \href{http://dx.doi.org/10.1007/s00220-016-2778-5}{Communications in
  Mathematical Physics {\bf 352}:\,37} (2017).

\bibitem{Sutter17}
D.~Sutter, M.~Berta, and M.~Tomamichel.
\newblock {\em ``Quantum {M}arkov Chains and Logarithmic Trace Inequalities''}.
\newblock In {\em 2017 IEEE International Symposium on Information Theory
  (ISIT)}, page 1988, (2017).

\bibitem{sutter16a}
D.~Sutter, M.~Tomamichel, and A.~Harrow.
\newblock {\em ``Strengthened Monotonicity of Relative Entropy via Pinched Petz
  Recovery Map''}.
\newblock \href{http://dx.doi.org/10.1109/TIT.2016.2545680}{IEEE Transactions
  on Information Theory {\bf 62}:\,2907} (2016).

\bibitem{Radtke06}
B.~Synak-Radtke and M.~Horodecki.
\newblock {\em ``On Asymptotic Continuity of Functions of Quantum States''}.
\newblock \href{http://dx.doi.org/10.1088/0305-4470/39/26/L02}{Journal of
  Physics A: Mathematical and General {\bf 39}:\,L423} (2006).

\bibitem{wilde14}
M.~M. Wilde.
\newblock {\em ``Multipartite Quantum Correlations and Local Recoverability''}.
\newblock \href{http://dx.doi.org/10.1098/rspa.2014.0941}{Proceedings of the
  Royal Society A: Mathematical, Physical and Engineering Sciences {\bf
  471}:\,20140941} (2015).

\bibitem{wilde15}
M.~M. Wilde.
\newblock {\em ``Recoverability in Quantum Information Theory''}.
\newblock \href{http://dx.doi.org/10.1098/rspa.2015.0338}{Proceedings of the
  Royal Society A {\bf 471}:\,20150338} (2015).

\bibitem{Yang09}
D.~Yang, K.~Horodecki, M.~Horodecki, P.~Horodecki, J.~Oppenheim, and W.~Song.
\newblock {\em ``Squashed Entanglement for Multipartite States and Entanglement
  Measures Based on the Mixed Convex Roof''}.
\newblock \href{http://dx.doi.org/10.1109/TIT.2009.2021373}{IEEE Transactions
  on Information Theory {\bf 55}:\,3375} (2009).

\bibitem{Yang08}
D.~Yang, M.~Horodecki, and Z.~D. Wang.
\newblock {\em ``An Additive and Operational Entanglement Measure: Conditional
  Entanglement of Mutual Information''}.
\newblock \href{http://dx.doi.org/10.1103/PhysRevLett.101.140501}{Physical
  Review Letters {\bf 101}:\,140501} (2008).

\bibitem{devetak09}
J.~T. Yard and I.~Devetak.
\newblock {\em ``Optimal Quantum Source Coding With Quantum Side Information at
  the Encoder and Decoder''}.
\newblock \href{http://dx.doi.org/10.1109/TIT.2009.2030494}{IEEE Transactions
  on Information Theory {\bf 55}:\,5339} (2009).

\end{thebibliography}
\bibliographystyle{ultimate}


\end{document}